\documentclass{llncs}

\usepackage{amsmath,amssymb
}
\usepackage{xspace}
\usepackage{dsfont}
\usepackage[colorlinks=true,urlcolor=webblue,linkcolor=webgreen,filecolor=webblue,citecolor=webgreen,pdfpagemode=UseOutlines,pdfstartview=FitH,pdfpagelayout=OneColumn,bookmarks=true]{hyperref}
\usepackage{enumerate}
\usepackage{graphicx}
\usepackage{enumitem}
\usepackage{xcolor}
\usepackage{authblk}
\usepackage{fullpage}
\usepackage{cleveref}

\definecolor{mgreen}{rgb}{.1,.7,0}

\usepackage{algorithm, algpseudocode, float}

\usepackage[n,keys,asymptotics]{cryptocode}

\newtheorem{construction}{Construction}

\newcommand{\Z}{\mathbb{Z}}
\newcommand{\N}{\mathbb{N}}

\newcommand{\C}{\mathbb{C}}
\newcommand{\ket}[1]{|#1\rangle} %
\newcommand{\bra}[1]{\langle #1|} %
\newcommand{\tr}[1]{\mathrm{Tr}\left(#1\right)} %

\newcommand\id{\mathbb{I}}

\newcommand{\one}{\mathds 1}

\definecolor{webgreen}{rgb}{0,.5,0}
\definecolor{webblue}{rgb}{0,0,.5}
\newcommand\algo{\mathcal}

\newcommand{\Ver}{\ensuremath{\mathsf{Ver}}\xspace}

\newcommand{\Mint}{\ensuremath{\mathsf{Mint}}\xspace}

\newcommand{\Gen}{\ensuremath{\mathsf{Gen}}\xspace}
\newcommand{\Init}{\ensuremath{\mathsf{Init}}\xspace}
\newcommand{\Reflect}{\ensuremath{\mathsf{Reflect}}\xspace}
\newcommand{\CReflect}{\ensuremath{\mathsf{CReflect}}\xspace}
\newcommand{\Eval}{\ensuremath{\mathsf{Eval}}\xspace}
\newcommand{\Invert}{\ensuremath{\mathsf{Invert}}\xspace}

\newcommand{\expref}[2]{\texorpdfstring{\hyperref[#2]{#1~\ref{#2}}}{#1~\ref{#2}}}
\newcommand{\bits}{\{0,1\}}
\newcommand{\bit}{\{0,1\}}

\newcommand{\from}{\leftarrow}

\newcommand{\acc}{\ensuremath{\mathsf{acc}}}
\newcommand{\rej}{\ensuremath{\mathsf{rej}}}

\newcommand{\supp}{\ensuremath{\mathsf{supp}}}

\newcommand{\snt}{\ensuremath{S^{\scriptscriptstyle \uparrow}}}

\newcommand{\idealstate}{\ensuremath{\mathfrak{IS}}\xspace}
\newcommand{\effstate}{\ensuremath{\mathfrak{ES}}\xspace}
\newcommand{\idealunitary}{\ensuremath{\mathfrak{IU}}\xspace}
\newcommand{\effunitary}{\ensuremath{\mathfrak{EU}}\xspace}
\newcommand{\haarmoney}{\ensuremath{\mathfrak{HM}}\xspace}

\newcommand{\Untrace}{\ensuremath{\mathsf{Untrace}}}

\renewcommand{\negl}{\ensuremath{\mathsf{negl}}\xspace}


\newcommand{\proj}[1]{\ensuremath{|#1\rangle \langle #1|}}

\newcommand{\ketbra}[2]{\left|#1\right\rangle\!\!\left\langle #2\right|}
\newcommand{\braket}[2]{\left\langle #1 \mid #2 \right\rangle}
\newcommand{\Hi}{\mathcal{H}}
\newcommand{\Tr}{\mathrm{Tr}}

\newcommand{\microspace}{\mspace{.5mu}} %
\renewcommand{\ket}[1]{\ensuremath{\lvert\microspace #1
		\microspace\rangle}} %
\renewcommand{\bra}[1]{\ensuremath{\langle\microspace #1
		\microspace\rvert}} %

\newif\ifsubmission
\submissionfalse
\ifsubmission
\newcommand{\ga}[1]{}
\newcommand{\cm}[1]{}
\else
\newcommand{\ga}[1]{{\noindent \textcolor{purple}{\emph{(GA:  #1)}}}{}}
\newcommand{\cm}[1]{{\noindent \textcolor{mgreen}{\emph{(CM:  #1)}}}{}}
\fi

\bibliographystyle{plain}

\newcommand{\Sym}{\ensuremath{\mathrm{Sym}}\xspace}
\newcommand{\BSym}{\ensuremath{\mathbf{Sym}}\xspace}


\title{Efficient simulation of random states and random unitaries}
\pagestyle{plain}
\date{\today}


\author{Gorjan Alagic\inst{1,2} \and Christian Majenz\inst{3,4} \and Alexander Russell\inst{5}}

\institute{Joint Center for Quantum Information and Computer Science, University of Maryland\and National Institute of Standards and Technology, Gaithersburg, Maryland\and  QuSoft, Amsterdam\and Centrum Wiskunde \& Informatica, Amsterdam\and Department of Computer Science and Engineering, University of Connecticut}



\makeindex

\begin{document}

\maketitle

\begin{abstract}
We consider the problem of efficiently simulating random quantum states and random unitary operators, in a manner which is convincing to unbounded adversaries with black-box oracle access. 

~~~This problem has previously only been considered for restricted adversaries. Against adversaries with an a priori bound on the number of queries, it is well-known that $t$-designs suffice. Against polynomial-time adversaries, one can use pseudorandom states (PRS) and pseudorandom unitaries (PRU), as defined in a recent work of Ji, Liu, and Song; unfortunately, no provably secure construction is known for PRUs.

~~~In our setting, we are concerned with unbounded adversaries. Nonetheless, we are able to give stateful quantum algorithms which simulate the ideal object in both settings of interest. In the case of Haar-random states, our simulator is polynomial-time, has negligible error, and can also simulate verification and reflection through the simulated state. This yields an immediate application to quantum money: a money scheme which is information-theoretically unforgeable and untraceable. In the case of Haar-random unitaries, our simulator takes polynomial space, but simulates both forward and inverse access with zero error. 

~~~These results can be seen as the first significant steps in developing a theory of lazy sampling for random quantum objects.
\end{abstract}


\section{Introduction}

\subsection{Motivation} 

Efficient simulation of randomness is a task with countless applications, ranging from cryptography to derandomization. In the setting of classical probabilistic computation, such simulation is straightforward in many settings. For example, a random function which will only be queried an a priori bounded number of times $t$ can be perfectly simulated using a $t$-wise independent function~\cite{CW81}. In the case of unbounded queries, one can use pseudorandom functions (PRFs), provided the queries are made by a polynomial-time algorithm~\cite{GGM86}. These are examples of \emph{stateless} simulation methods, in the sense that the internal memory of the simulator is initialized once (e.g., with the PRF key) and then remains fixed regardless of how the simulator is queried. Against arbitrary adversaries, one must typically pass to \emph{stateful} simulation. For example, the straightforward and well-known technique of \emph{lazy sampling} suffices to perfectly simulate a random function against arbitrary adversaries; however, the simulator must maintain a list of responses to all previous queries. 

Each of these techniques for simulating random classical primitives has a plethora of applications in theoretical cryptography, both as a proof tool and for cryptographic constructions. These range from constructing secure cryptosystems for encryption and authentication, to proving security reductions in a wide range of settings, to establishing security in idealized models such as the Random Oracle Model~\cite{BR93}.

\subsubsection{Quantum randomness.}

As is well-known, quantum sources of randomness exhibit dramatically different properties from their classical counterparts~\cite{MY04,BB84}. Compare, for example, uniformly random $n$-bit classical states (i.e., $n$-bit strings) and uniformly random $n$-qubit (pure) quantum states. A random string $x$ is obviously trivial to sample perfectly given probabilistic classical (or quantum) computation, and can be copied and distributed arbitrarily. However, it is also (just as obviously) deterministic to all parties who have examined it before. By contrast, a random state $\ket{\varphi}$ would take an unbounded amount of information to describe perfectly. Even if one manages to procure such a state, it is then impossible to copy due to the no-cloning theorem. On the other hand, parties who have examined $\ket{\varphi}$ many times before, can \emph{still} extract almost exactly $n$ bits of randomness from any fresh copy of $\ket{\varphi}$ they receive -- even if they use the \emph{exact same measurement procedure} each time.

The differences between random classical and random quantum maps are even more stark. The outputs of a classical random function are of course classical random strings, with all of the aforementioned properties. Outputs which have already been examined become effectively deterministic, while the rest remain uniformly random and independent. This is precisely what makes efficient simulation possible via lazy sampling. A Haar-random unitary $U$ queried on two inputs $\ket{\psi}$ and $\ket{\phi}$ also produces (almost) independent and uniformly random states when queried, but only if the queries are \emph{orthogonal}, i.e., $\braket{\psi}{\phi} = 0$. Unitarity implies that overlapping queries must be answered consistently, i.e., if $\braket{\psi}{\phi} = \delta$ then $\braket{(U\psi)}{(U\phi)} = \delta$. This possibility of querying with a distinct pure state which is not linearly independent from previous queries simply doesn't exist for classical functions. 

We emphasize that the above differences should not be interpreted as quantum random objects simply being ``stronger'' than their classical counterparts. In the case of classical states, i.e. strings, the ability to copy is quite useful, e.g., in setting down basic security definitions~\cite{BZ13a,AGM18,Alagic2018a} or when rewinding an algorithm~\cite{Unruh2012,Watrous2009,Don2019}. In the case of maps, determinism is also quite useful, e.g., for verification in message authentication.

\subsection{The problem: efficient simulation}

Given the dramatic differences between classical and quantum randomness, and the usefulness of both, it is reasonable to ask if there exist quantum analogues of the aforementioned efficient simulators of classical random functions. In fact, given the discussion above, it is clear that we should begin by asking if there even exist efficient simulators of random quantum states.

\subsubsection{Simulating random states.}

The first problem of interest is thus to efficiently simulate the following ideal object: an oracle $\idealstate(n)$ which contains a description of a perfectly Haar-random $n$-qubit pure state $\ket{\varphi}$, and which outputs a copy of $\ket{\varphi}$ whenever it is invoked. We first make an obvious observation: the classical analogue, which is simply to generate a random bitstring $x \leftarrow \bit^n$ and then produce a copy whenever asked, is completely trivial. In the quantum case, efficient simulation is only known against limited query algorithms (henceforth, adversaries.) 

If the adversary has an a priori bound on the number of queries, then \emph{state $t$-designs} suffice. These are indexed families $\{\ket{\varphi_{k, t}} : k \in K_t\}$ of pure states which perfectly emulate the standard uniform ``Haar'' measure on pure states, up to the first $t$ moments. State $t$-designs can be sampled efficiently, and thus yield a stateless simulator for this case~\cite{AJ07}. A recent work of Ji, Liu and Song considered the case of polynomial-time adversaries~\cite{Ji2018}. They defined a notion of \emph{pseudorandom states} (PRS), which appear Haar-random to polynomial-time adversaries who are allowed as many copies of the state as they wish. They also showed how to construct PRS efficiently, thus yielding a stateless simulator for this class of constrained adversaries~\cite{Ji2018}; see also~\cite{Brakerski2019}. 

The case of arbitrary adversaries is, to our knowledge, completely unexplored. In particular, before this work it was not known whether simulating $\idealstate(n)$ against adversaries with no a priori bound on query or time complexity is possible, even if given polynomial space (in $n$ and the number of queries) and unlimited time. Note that, while the state family constructions from \cite{Ji2018,Brakerski2019} could be lifted to the unconditional security setting by instantiating them with random instead of pseudorandom functions, this would require space exponential in $n$ regardless of the number of queries.

\subsubsection{Simulating random unitaries.}

In the case of simulating random unitaries, the ideal object is an oracle \idealunitary(n) which contains a description of a perfectly Haar-random $n$-qubit unitary operator $U$, and applies $U$ to its input whenever it is invoked. The classical analogue is the well-known Random Oracle, and can be simulated perfectly using the aforementioned technique of lazy sampling. In the quantum case, the situation is even less well-understood than in the case of states.

For the case of query-limited adversaries, we can again rely on design techniques: (approximate) \emph{unitary $t$-designs} can be sampled efficiently, and suffice for the task~\cite{BHH16,L10}. Against polynomial-time adversaries, Ji, Liu and Song defined the natural notion of a \emph{pseudorandom unitary} (or PRU) and described candidate constructions~\cite{Ji2018}. Unfortunately, at this time there are no provably secure constructions of PRUs. As in the case of states, the case of arbitrary adversaries is completely unexplored. Moreover, one could a priori plausibly conjecture that simulating \idealunitary might even be impossible. The no-cloning property seems to rule out examining input states, which in turn seems to make it quite difficult for a simulator to correctly identify the overlap between multiple queries, and then answer correspondingly.

\subsubsection{Extensions.}

While the above problems already appear quite challenging, we mention several natural extensions that one might consider. First, for the case of repeatedly sampling a random state $\ket{\varphi}$, one would ideally want some additional features, such as the ability to apply the two-outcome measurement $\{\proj{\varphi}, \one - \proj{\varphi}\}$ (\emph{verification}) or the reflection $\one - 2\proj{\varphi}$. In the case of pseudorandom simulation, these additional features can be used to create a (computationally secure) quantum money scheme~\cite{Ji2018}. For the case of simulating random unitaries, we might naturally ask that the simulator for a unitary $U$ also has the ability to respond to queries to $U^{-1}=U^\dagger$.

\subsection{This work} 

In this work, we make significant progress on the above problems, by giving the first simulators for both random states and random unitaries, which are convincing to arbitrary adversaries. We also give an application of our sampling ideas: the construction of a new quantum money scheme, which provides information-theoretic security guarantees against both forging and tracing. 

We begin by remarking that our desired simulators must necessarily be stateful, for both states and unitaries. Indeed,  since approximate $t$-designs have $\Omega((2^{2n}/t)^{2t}) $ elements~(see, e.g., \cite{Roy2009} which provides a more fine-grained lower bound), a stateless approach would require superpolynomial space simply to store an index from a set of size  $\Omega((2^{2n}/t(n))^{2t(n)}) $ for all polynomials $t(n)$.

In the following, we give a high-level overview of our approach for each of the two simulation problems of interest.

\subsubsection{Simulating random states.}\label{sec:intro-state}

As discussed above, we wish to construct an efficient simulator $\effstate(n)$ for the ideal oracle $\idealstate(n)$. For now we focus on simulating the procedure which generates copies of the fixed Haar-random state; we call this $\idealstate(n).\Gen$. We first note that the mixed state observed by the adversary after $t$ queries to $\idealstate(n).\Gen$ is the expectation of the projector onto $t$ copies of $\ket{\psi}$. Equivalently, it is the (normalized) projector onto the \emph{symmetric subspace} $\BSym_{n, t}$ of $(\C^{2^n})^{\otimes t}$:
	\begin{equation}
		\tau_t = \mathbb E_{\psi\sim\mathrm{Haar}} \proj{\psi}^{\otimes t} \propto \Pi_{\Sym^t\C^{2^n}}\,.
	\end{equation}
Recall that $\BSym_{n, t}$ is the subspace of $(\C^{2^n})^{\otimes t}$ of vectors which are invariant under permutations of the $t$ tensor factors. Our goal will be to maintain an entangled state between the adversary $\algo A$ and our oracle simulator $\effstate$ such that the reduced state on the side of $\algo A$ is $\tau_t$ after $t$ queries. Specifically, the joint state will be the maximally entangled state between the $\BSym_{n, t}$ subspace of the $t$ query output registers received by $\algo A$, and the $\BSym_{n, t}$ subspace of $t$ registers held by $\effstate$. If we can maintain this for the first $t$ queries, then it's not hard to see that there exists an isometry $V^{t \rightarrow t+1}$ which, by acting only on the state of $\effstate$, implements the extension from the $t$-fold to the $(t+1)$-fold joint state.

The main technical obstacle, which we resolve, is showing that $V^{t \rightarrow t+1}$ can be performed efficiently. To achieve this, we develop some new algorithmic tools for working with symmetric subspaces, including an algorithm for coherent preparation of its basis states. We let $A$ denote an $n$-qubit register, $A_j$ its indexed copies, and $A^t = A_1 \cdots A_t$ $t$-many indexed copies (and likewise for $B$.) We also let $\{\ket{\Sym(\alpha)} : \alpha \in \snt_{n,t}\}$ denote a particular orthonormal basis set for $\BSym_{n, t}$, indexed by some set $\snt_{n,t}$ (see \expref{Section}{sec:state} for definitions of these objects.)

\begin{theorem}\label{thm:sym-intro}
For each $n$ and $t$, there exists a polynomial-time quantum algorithm which implements an isometry $V=V^{t \to t+1}$ from $B^t$ to $A_{t+1}B^{t+1}$ such that, up to negligible trace distance,
\[
(\one_{A^t} \otimes V) \sum_{\alpha \in {\snt_{n,t}}} \ket{\Sym(\alpha)}_{A^t}\ket{\Sym(\alpha)}_{B^t}=\!\!\!\sum_{\beta \in {\snt_{n,t+1}}} \ket{\Sym(\beta)}_{A^{t+1}}\ket{\Sym(\beta)}_{B^{t+1}}\,.
\]
\end{theorem}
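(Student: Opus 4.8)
\emph{Proof idea.}\; Write $d=2^n$ and let $\ket{\Phi_s}=\sum_{\alpha\in\snt_{n,s}}\ket{\Sym(\alpha)}_{A^s}\ket{\Sym(\alpha)}_{B^s}$ denote the (unnormalized) maximally entangled state on $\BSym_{n,s}\otimes\BSym_{n,s}$, so that the asserted identity reads $(\one_{A^t}\otimes V)\ket{\Phi_t}=\ket{\Phi_{t+1}}$. My starting point is the ``last register'' decomposition of a symmetric basis vector: for any type $\beta$ of weight $s$,
\[
\ket{\Sym(\beta)}_{B^s}=\sum_{k\in\bits^n}\sqrt{\tfrac{\beta_k}{s}}\,\ket{\Sym(\beta-e_k)}_{B^{s-1}}\ket{k}_{B_s},
\]
where $\beta_k$ is the multiplicity of the symbol $k$ in $\beta$ and $\beta\pm e_k$ increments/decrements it. Applying this on the $A$ side with $s=t+1$ and resumming gives the closed form $\ket{\Phi_{t+1}}=\sum_{\alpha,k}\sqrt{(\alpha_k+1)/(t+1)}\,\ket{\Sym(\alpha)}_{A^t}\ket{k}_{A_{t+1}}\ket{\Sym(\alpha+e_k)}_{B^{t+1}}$, which suggests defining
\[
V:\ket{\Sym(\alpha)}_{B^t}\longmapsto \tfrac{1}{\sqrt{t+d}}\sum_{k\in\bits^n}\sqrt{\alpha_k+1}\,\ket{k}_{A_{t+1}}\ket{\Sym(\alpha+e_k)}_{B^{t+1}}.
\]
The images of distinct types are orthonormal (they are distinguished by the symbol $k$ and the type, both readable off $A_{t+1}B^{t+1}$), so $V$ extends to a genuine isometry, and one checks $(\one_{A^t}\otimes V)\ket{\Phi_t}=\sqrt{(t+1)/(t+d)}\,\ket{\Phi_{t+1}}$. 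Since $\ket{\Phi_t}$ and $\ket{\Phi_{t+1}}$ have different norms, this is the asserted identity only after normalization, equivalently at the level of the induced states compared by trace distance; it thus remains to implement $V$ efficiently and to negligible error.

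My plan is to realize $V$ in a ``type-label'' representation. The first tool is the reversible classical map $U_f:\ket{x}_{B^s}\ket{0}\mapsto\ket{x}_{B^s}\ket{\mathrm{type}(x)}$ computing the type of a basis string by counting sort; since every string in the support of $\ket{\Sym(\alpha)}$ has type $\alpha$, $U_f$ coherently writes $\alpha$ into an ancilla. Together with a coherent preparation primitive $G_s:\ket{\gamma}\ket{0}_{B^s}\mapsto\ket{\gamma}\ket{\Sym(\gamma)}_{B^s}$ and its inverse, these give an efficient ``compression'' isometry $\Lambda_s:\ket{\Sym(\gamma)}_{B^s}\mapsto\ket{\gamma}$ that replaces a symmetric basis vector by a register holding only its type, as well as the expansion $\Lambda_s^\dagger$. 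I then factor $V=\Lambda_{t+1}^\dagger\circ R\circ\Lambda_t$, where the middle step $R$ acts purely on labels: from $\ket{\alpha}$ it prepares the fresh register $A_{t+1}$ in the state $(t+d)^{-1/2}\sum_k\sqrt{\alpha_k+1}\,\ket{k}$, writes the new label $\alpha+e_k$ into a fresh register by reversible addition controlled on $A_{t+1}$, and uncomputes $\ket{\alpha}$ via $\alpha=(\alpha+e_k)-e_k$. Every part of $R$ is elementary label arithmetic plus one small state preparation supported on the $\le t$ ``seen'' symbols against a near-uniform background, hence runs in $\mathrm{poly}(n,t)$ time.

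The main obstacle is the coherent preparation primitive $G_s$: producing, in $\mathrm{poly}(n,s)$ time, the normalized equal superposition $\ket{\Sym(\gamma)}=\binom{s}{\gamma}^{-1/2}\sum_{x:\mathrm{type}(x)=\gamma}\ket{x}$ over the generically exponentially many arrangements of a given multiset $\gamma$. I would build this register by register using the decomposition above applied to the \emph{first} factor: maintain a ``remaining type'' register initialized to $\gamma$, and for $j=1,\dots,s$ place into $B_j$ the superposition $\sum_k\sqrt{\gamma^{(j)}_k/(s-j+1)}\,\ket{k}$ (a state over the $\le s$ symbols of positive remaining multiplicity, easy to prepare from the label), then decrement the placed symbol's remaining multiplicity, controlled on $B_j$. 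After $s$ rounds the remaining-type register equals $\mathbf 0$ on every branch and so disentangles, leaving exactly $\ket{\Sym(\gamma)}$ on $B^s$; correctness is precisely the telescoping of the decomposition, and each round costs $\mathrm{poly}(n,s)$ gates.

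Finally, I would account for error. The only inexact steps are the few amplitude-encoding preparations (the fresh-symbol state in $R$ and the per-round coin in $G_s$), whose target amplitudes $\sqrt{\alpha_k+1}/\sqrt{t+d}$, together with a near-uniform superposition over $\bits^n$ minus a set of size $\le t$, can each be realized to trace distance $2^{-\mathrm{poly}(n)}$ using $\mathrm{poly}(n,t)$ gates. Because $V$ is a composition of $\mathrm{poly}(t)$ isometries and trace distance is subadditive under composition, the implemented map lies within negligible trace distance of the ideal $V$, which establishes the theorem.
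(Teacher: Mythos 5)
Your proposal is correct and follows essentially the same route as the paper's proof: the same Schmidt-type decomposition \eqref{eq:t-schmidt}, the same isometry (compress $\ket{\Sym(\alpha)}_{B^t}$ to its type label, prepare the fresh-register coin $\sum_x\sqrt{(1+f_x(\alpha))/(2^n+t)}\,\ket{x}$ by splitting it into the polynomially-supported ``seen'' part and a near-uniform complement, insert the new symbol, and re-expand), and the same recursive register-by-register preparation of $\ket{\Sym(\gamma)}$ with the same error accounting. Your explicit remark that the two sides of the stated identity have different norms, so the equality holds for the normalized (maximally entangled) states with prefactor $\sqrt{(t+1)/(2^n+t)}$, matches the paper's concluding normalization observation and even fixes a small typo in its prefactor.
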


Above, $V$ is an operator defined to
apply to a specific subset of registers of a state. When no confusion
can arise, in such settings we will abbreviate $\one \otimes V$---the
application of this operator on the entire state---as simply $V$.

It will be helpful to view $V^{t \to t+1}$ as first preparing $\ket{0^n}_{A_{t+1}}\ket{0^n}_{B_{t+1}}$ and then applying a
unitary $U^{t \to t+1}$ on $A_{t+1}B^{t+1}$. \expref{Theorem}{thm:sym-intro} then gives us a way to answer $\Gen$ queries efficiently, as follows. For the first query, we prepare a maximally entangled state $\ket{\phi^+}_{A_1B_1}$ across two $n$-qubit registers $A_1$ and $B_1$, and reply with register $A_1$. Note that $\BSym_{n, 1} = \C^{2^n}$. For the second query, we prepare two fresh registers $A_2$ and $B_2$, both in the $\ket{0^n}$ state, apply $U^{1 \to 2}$ on $A_2B_1B_2$, return $A_2$, and keep $B_1B_2$. For the $t$-th query, we proceed similarly, preparing fresh blank registers $A_{t+1}B_{t+1}$, applying $U^{t \to t+1}$, and then outputting the register $A_{t+1}$.

With this approach, as it turns out, there is also a natural way to respond to verification queries $\Ver$ and reflection queries $\Reflect$. The ideal functionality \idealstate.\Ver is to apply the two-outcome measurement $\{\proj{\varphi}, \one - \proj{\varphi}\}$ corresponding to the Haar-random state $\ket{\varphi}$. To simulate this after producing $t$ samples, we apply the inverse of $U^{t-1 \to t}$, apply the measurement $\{\proj{0^{2n}}, \one - \proj{0^{2n}}\}$ to $A_tB_t$, reapply $U^{t-1 \to t}$, and then return $A_t$ together with the measurement outcome (i.e., yes/no).  For \idealstate.\Reflect, the ideal functionality is to apply the reflection $\one - 2\proj{\varphi}$ through the state. To simulate this, we perform a sequence of operations analogous to \Ver, but apply a phase of $-1$ on the $\ket{0^{2n}}$ state of $A_tB_t$ instead of measuring.

Our main result on simulating random states is to establish that this collection of algorithms correctly simulates the ideal object $\idealstate$, in the following sense.

\begin{theorem}\label{thm:state-sampler-summary}
There exists a stateful quantum algorithm $\effstate(n, \epsilon)$ which runs in time polynomial in $n$, $\log(1/\epsilon)$, and the number of queries $q$ submitted to it, and satisfies the following. For all oracle algorithms $\algo A$,
$$
\left|\Pr\left[\algo A^{\idealstate(n)} = 1\right] - \Pr\left[\algo A^{\effstate(n, \epsilon)} = 1\right]\right| \leq \epsilon\,.
$$
\end{theorem}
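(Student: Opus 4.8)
The plan is to split the argument into two parts: first establish that an \emph{exact} (but inefficient) version of our simulator, call it $\widehat{\effstate}(n)$, which maintains the idealized entangled state and applies the exact isometries and unitaries described above, perfectly reproduces $\idealstate(n)$; and then show that replacing each exact operation by its polynomial-time approximation from \expref{Theorem}{thm:sym-intro} costs at most $\epsilon$ in distinguishing advantage. The theorem then follows by the triangle inequality.

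For the first part, I would maintain the invariant that, after any interleaved sequence of $\Gen$, $\Ver$, and $\Reflect$ queries of which $t$ are $\Gen$ queries, the global pure state is obtained from the canonical maximally entangled state $\ket{\Omega_t} = |\snt_{n,t}|^{-1/2}\sum_{\alpha \in \snt_{n,t}} \ket{\Sym(\alpha)}_{A^t}\ket{\Sym(\alpha)}_{B^t}$ by applying the adversary's local operations to the $A$-side (and its private workspace), with the simulator holding $B^t$. The base case is immediate, and the adversary's own operations preserve the invariant trivially since they commute with everything on the $B$-side. The crux is to verify that each oracle operation preserves the invariant and induces exactly the same reduced state and classical outcome distribution on the adversary's registers as the corresponding ideal operation; since the adversary never accesses the $B$-side, equality of these marginals at every step is all that is required. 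For $\Gen$ this is precisely the exact form of \expref{Theorem}{thm:sym-intro}: applying $V^{t\to t+1}$ on the $B$-side maps $\ket{\Omega_t}$ to $\ket{\Omega_{t+1}}$, whose $A^{t+1}$-marginal is $\tau_{t+1} = \mathbb E_\psi \proj{\psi}^{\otimes(t+1)}$, matching the ideal world in which a fresh copy of the Haar-random $\ket{\varphi}$ is handed out. Because $\tau_{t+1}\propto \Pi_{\Sym^{t+1}\C^{2^n}}$, the state $\ket{\Omega_{t+1}}$ is a genuine purification of the ideal adversary-marginal, and the operation acts trivially on $A^t$ and the workspace, so the prior correlations are undisturbed.

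The main obstacle is verifying the $\Ver$ and $\Reflect$ operations, since these are where the single-copy structure of the ideal object must be recovered from the symmetric-subspace bookkeeping. Here I would use the defining property that $U^{t-1\to t}$ maps $\ket{\Omega_{t-1}}_{A^{t-1}B^{t-1}}\otimes\ket{0^{2n}}_{A_tB_t}$ to $\ket{\Omega_t}_{A^tB^t}$, so that uncomputing with $U^{\dagger(t-1\to t)}$ returns the pair $A_tB_t$ to $\ket{0^{2n}}$ exactly when the submitted register is an untouched copy. The content to check is that the two-outcome measurement $\{\proj{0^{2n}}, \one - \proj{0^{2n}}\}$ conjugated by $U^{t-1\to t}$, acting on the $\ket{\Omega_t}$ invariant, implements on the adversary's submitted register exactly the projective measurement $\{\proj{\varphi},\one-\proj{\varphi}\}$ of the ideal world --- i.e.\ that the acceptance probability equals $\langle\varphi|\sigma|\varphi\rangle$ for an arbitrary, adversarially prepared submitted state $\sigma$, and that the post-measurement state is correct. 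Replacing the measurement by a $-1$ phase on $\ket{0^{2n}}$ gives $\Reflect$ as $\one - 2\proj{\varphi}$ by the identical computation. I expect this to reduce to a short calculation with $\Pi_{\Sym}$ and the defining property of $U^{t-1\to t}$, handled uniformly for all three query types by tracking the invariant.

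For the second part, I would pass from $\widehat{\effstate}$ to $\effstate(n,\epsilon)$ by replacing each exact $V^{t\to t+1}$, $U^{t\to t+1}$, and its inverse by the polynomial-time implementation of \expref{Theorem}{thm:sym-intro}, which is within negligible trace distance of the exact operation and runs in time $\mathrm{poly}(n,\log(1/\delta))$ for a target per-step error $\delta$ (the inverse operations are obtained by running the circuits in reverse). A standard hybrid argument over the at most $q$ queries, using that the adversary's operations are trace-distance non-increasing channels, bounds the total distinguishing advantage by the sum of the per-step errors. Choosing the error budget of the $i$-th query to be, say, $\epsilon\cdot 6/(\pi^2 i^2)$ makes this sum at most $\epsilon$ while keeping the cost of each query $\mathrm{poly}(n, i, \log(1/\epsilon))$, hence polynomial in $n$, the query count, and $\log(1/\epsilon)$ as claimed. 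Combining this with the exact equality from the first part yields the stated bound.
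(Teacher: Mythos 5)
Your outer structure (first prove exact equivalence for an error-free simulator, then charge each approximate isometry against a per-query error budget summing to $\epsilon$) matches the paper's, and your second part is correct: the paper does the same accounting with budget $\epsilon 2^{-i}$ in place of your $6\epsilon/(\pi^2 i^2)$. The gap is in your first part, exactly at the step you defer to ``a short calculation'': exact correctness of $\Ver$/$\CReflect$ on adversarially prepared inputs. Two things go wrong. First, in the ideal world there is no fixed $\ket{\varphi}$: the sampled state is correlated with everything the adversary holds, so the target statement ``acceptance probability equals $\bra{\varphi}\sigma\ket{\varphi}$'' is not well posed; what must be shown is an equality of Haar \emph{averages}, e.g.\ that $\Tr_{B^t}$ of the simulated post-reflection global state equals expressions of the form $\mathbb E_\varphi\bigl[\proj{\varphi}^{\otimes t}\otimes R_\varphi(\cdot)R_\varphi\bigr]$, and these are moment identities whose degree grows with the number of queries. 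Second, the tools you invoke cannot establish this: the defining property of $U^{t-1\to t}$ only constrains $V^{t-1\to t}$ on the symmetric subspace of $B^{t-1}$ (equivalently, on untouched copies), whereas the simulated reflection is $\one-2V^{t-1\to t}(V^{t-1\to t})^\dagger$, a reflection about the \emph{range} of $V^{t-1\to t}$, and its action on a submitted register that is not a proper copy is precisely what has to be related to $\mathbb E_\varphi(\one-2\proj{\varphi})$. Note also that your invariant is itself disturbed by such a query: reflecting an adversarial register $A^*$ after one $\Gen$ query leaves the global state $\Gamma\bigl(\ket{\phi^+}_{A_1B_1}\otimes\ket{\sigma}_{A^*}\bigr)$ with $\Gamma=\one-2^{1-n}\mathrm{SWAP}_{A_1A^*}$, an operator on the adversary's side that is neither the adversary's own operation nor visibly the one the ideal world induces; proving that the induced operator agrees with the ideal world's is again the moment identity, so the induction cannot close ``uniformly by tracking the invariant.''

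The paper closes this gap with an idea your sketch is missing: it compares $\effstate(0,n)$ not directly to $\idealstate(n)$ but to a family of finite simulators $\hat\effstate(0,q,n)$ built from \emph{exact} unitary $2q$-designs, which are perfectly equivalent to $\idealstate(n)$ against any adversary making at most $q$ queries by the design property --- this is what replaces all Haar-moment calculations. It then constructs an explicit partial ``switch'' isometry $V^{\mathrm{switch},t}$ acting only on the oracle side, mapping $B^t$ to the design-index register (verified via the Mirror Lemma), shows that the $\Gen$ isometries of the two simulators intertwine with it, and obtains the reflection case by multiplying the intertwining relation with its adjoint, so that the two range projectors defining the respective reflections appear sandwiched between switch isometries. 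Finally, since $\effstate(0,n)$ does not depend on $q$, the bounded-query equivalence holds for every $q$, hence against arbitrary adversaries. If you wanted to avoid designs entirely, you would instead have to prove the relevant degree-$2q$ identities (Haar integrals versus symmetric-subspace projectors) by hand; either way, that verification is the mathematical core of the theorem, not a routine consequence of the defining property of $U^{t-1\to t}$.
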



A complete description of our construction, together with the proofs of \expref{Theorem}{thm:sym-intro} and \expref{Theorem}{thm:state-sampler-summary}, are given in \expref{Section}{sec:state}. 

\subsubsection{Application: untraceable quantum money.}

To see that the efficient state sampler leads to a powerful quantum money scheme, consider building a scheme where the bank holds the ideal object $\idealstate.$ The bank can mint bills by $\idealstate.\Gen$, and verify them using $\idealstate.\Ver$. As each bill is guaranteed to be an identical and Haar-random state, it is clear that this scheme should satisfy perfect unforgeability and untraceability, under quite strong notions of security. 

By \expref{Theorem}{thm:state-sampler}, the same properties should carry over for a money scheme built on $\effstate$, provided $\epsilon$ is sufficiently small. We call the resulting scheme \emph{Haar money}. Haar money is an information-theoretically secure analogue of the scheme of~\cite{Ji2018}, which is based on pseudorandom states. We remark that our scheme requires the bank to have quantum memory and to perform quantum communication with the customers. However, given that quantum money already requires customers to have large-scale, high-fidelity quantum storage, these additional requirements seem reasonable.

The notions of correctness and unforgeability (often called completeness and soundness) for quantum money are well-known (see, e.g.,~\cite{AC12}.) Correctness asks that honestly generated money schemes should verify, i.e., $\Ver(\Mint)$ should always accept. Unforgeability states that an adversary with $k$ bills and oracle access to $\Ver$ should not be able to produce a state on which $\Ver^{\otimes k+1}$ accepts. In this work, we consider untraceable quantum money (also called ``quantum coins''~\cite{MS10}.) We give a formal security definition for untraceability, which states that an adversary $\algo A$ with oracle access to $\Ver$ and $\Mint$ cannot do better than random guessing in the following experiment: 
\begin{enumerate}
\item $\algo A$ outputs some candidate bill registers $\{M_j\}$ and a permutation $\pi$;
\item $b \from \{0,1\}$ is sampled, and if $b=1$ the registers $\{M_j\}$ are permuted by $\pi$; each candidate bill is verified and the failed ones are discarded;
\item $\algo A$ receives the rest of the bills and the entire internal state of the bank, and outputs a guess $b'$ for $b$.
\end{enumerate}
\begin{theorem}
The Haar money scheme \haarmoney, defined by setting
\begin{enumerate}
\item $\haarmoney.\Mint = \effstate(n, \negl(n)).\Gen$ 
\item $\haarmoney.\Ver = \effstate(n, \negl(n)).\Ver$
\end{enumerate}
is a correct quantum money scheme which satisfies information-theoretic unforgeability and untraceability.
\end{theorem}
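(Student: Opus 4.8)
The plan is to analyze first an \emph{ideal} money scheme $\mathfrak{IM}$ built directly on $\idealstate$, by setting $\mathfrak{IM}.\Mint = \idealstate(n).\Gen$ and $\mathfrak{IM}.\Ver = \idealstate(n).\Ver$, establish its three properties unconditionally, and then transport them to \haarmoney using the $\epsilon$-indistinguishability of $\effstate$ from $\idealstate$ guaranteed by \expref{Theorem}{thm:state-sampler-summary} with $\epsilon = \negl(n)$. The structural point is that each of correctness, unforgeability, and untraceability is an experiment carried out by an oracle algorithm $\algo A$ interacting with the bank's $\Mint$ and $\Ver$ procedures, so each can be recast as a distinguishing experiment: a non-negligible gap in any of these quantities between the \haarmoney world and the $\mathfrak{IM}$ world would yield an oracle distinguisher between $\effstate(n,\negl(n))$ and $\idealstate(n)$, contradicting the theorem. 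Thus it suffices to prove the properties for $\mathfrak{IM}$ and pay an additive $\negl(n)$ in the transfer.

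Correctness is immediate: a freshly minted bill is a copy of $\ket{\varphi}$, and $\idealstate.\Ver$ applies the measurement $\{\proj{\varphi},\one-\proj{\varphi}\}$, which accepts a fresh copy with probability $1$; hence \haarmoney verifies honest bills with probability $1-\negl(n)$. For unforgeability I would show that an adversary holding $k$ bills cannot make $\Ver^{\otimes k+1}$ accept except with negligible probability. The key identity is $\Exp_{\varphi\sim\mathrm{Haar}}\proj{\varphi}^{\otimes m} = \Pi_{\BSym_{n,m}}/\dim \BSym_{n,m}$, after which a dimension-counting (no-cloning) argument bounds the all-accept probability of $\Ver^{\otimes k+1}$ on a state produced from $k$ copies by $\dim\BSym_{n,k}/\dim\BSym_{n,k+1} = (k+1)/(2^n+k)$, which is negligible for polynomial $k$. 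The one subtlety is that during forging $\algo A$ may make $\Ver$-oracle queries; since each such query is a projection onto $\proj{\varphi}$, I would handle these by a gentle-measurement / hybrid argument showing that polynomially many queries raise the all-accept probability only negligibly.

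Untraceability is the most delicate part, and I expect it to be the main obstacle, precisely because the experiment hands $\algo A$ the bank's \emph{entire internal state} at the end---an event not covered by the black-box input/output guarantee of \expref{Theorem}{thm:state-sampler-summary}. Here I would argue directly on $\effstate$ rather than through $\mathfrak{IM}$: after minting and verification, the joint state of the surviving bill registers together with the simulator's registers is, up to negligible trace distance, the maximally entangled state supported on the symmetric subspaces $\BSym_{n,t}$, and this state is by construction invariant under permutations of the bill registers. Consequently the $b=0$ and $b=1$ branches of the experiment yield the same global state up to relabeling of identical, symmetric-subspace registers, so the advantage of any $\algo A$ in guessing $b$ is zero up to the $\negl(n)$ simulation error. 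The hard part will be making this permutation-invariance precise in the presence of the discarding step (verification failures may break the naive symmetry between the two branches) and of the bank-state handover: one must track the full purified joint state, not merely the oracle behavior, and show it is permutation-symmetric. This is exactly where the explicit symmetric-subspace structure underlying $\effstate$---as opposed to the black-box statement of \expref{Theorem}{thm:state-sampler-summary}---becomes essential.
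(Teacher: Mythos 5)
Your overall structure is the same as the paper's: correctness and unforgeability are transferred to the ideal world via \expref{Theorem}{thm:state-sampler}, while untraceability is argued directly on $\effstate$ because the bank-state handover falls outside the black-box guarantee---the paper makes exactly this split, and your identification of the handover as the crux is correct. Two points of comparison. For unforgeability, the paper simply cites the complexity-theoretic no-cloning theorem of Aaronson--Christiano~\cite{AC12}, which already accounts for $\Ver$-oracle queries. Your dimension-counting bound $\dim\BSym_{n,k}/\dim\BSym_{n,k+1}=(k+1)/(2^n+k)$ is correct for oracle-free channels (since $\proj{\varphi}^{\otimes k}\le \Pi_{\BSym_{n,k}}$, $\Lambda$ is positive and trace-preserving, and $\Exp_\varphi\proj{\varphi}^{\otimes (k+1)}=\Pi_{\BSym_{n,k+1}}/\dim\BSym_{n,k+1}$), but your plan to handle $\Ver$ queries by a ``gentle-measurement / hybrid argument'' is the weak point: the adversary's query states need not have small overlap with $\ket{\varphi}$ (it can query its own bills, which have overlap $1$), so BBBV-style hybrid arguments do not directly apply. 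Controlling what reflection queries about an unknown Haar-random state buy the adversary is precisely the nontrivial content of \cite{AC12} (their inner-product adversary method); re-deriving it is a genuine project, not a routine hybrid, and this is why the paper discharges this step by citation.

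For untraceability, the subtlety you flag---that the permutation is applied \emph{before} verification in the game, so discarding could break the symmetry between the $b=0$ and $b=1$ branches---is resolved in the paper by one observation: verification is the same channel applied to each candidate register individually, so it commutes with permuting those registers. One may therefore verify all registers first and discard the failures (both steps independent of $b$), and only then apply $\pi$ conditioned on $b$. After verification, by the structure of $\effstate$, the joint state of the bank and all verified bills is negligibly close to $\ket{\phi^+_\Sym}$, the maximally entangled state on two copies of the symmetric subspace, which is invariant under permutation of the bill registers (with the bank's share untouched); hence the rest of the experiment is a $b$-independent channel applied to a $b$-independent state, and the advantage is negligible. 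With that reordering step made explicit, and with unforgeability discharged by citing \cite{AC12} rather than re-deriving it, your proposal coincides with the paper's proof.
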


One might reasonably ask if there are even stronger definitions of security for quantum money. Given its relationship to the ideal state sampler, we believe that Haar money should satisfy almost any notion of unforgeability and untraceability, including composable notions. We also remark that, based on the structure of the state simulator, which maintains an overall pure state supported on two copies of the symmetric subspace of banknote registers, it is straightforward to see that the scheme is also secure against an ``honest but curious'' or ``specious''~\cite{SSS09,DNS10} bank. We leave the formalization of these added security guarantees to future work.

\subsubsection{Sampling Haar-random unitaries.} 

Next, we turn to the problem of simulating Haar-random unitary operators. In this case, the ideal object $\idealunitary(n)$ initially samples a description of a perfectly Haar-random $n$-qubit unitary $U$, and then responds to two types of queries: $\idealunitary.\Eval$, which applies $U$, and $\idealunitary.\Invert$, which applies $U^\dagger$. In this case, we are able to construct a stateful simulator that runs in space polynomial in $n$ and the number of queries $q$, and is \emph{exactly indistinguishable} from $\idealunitary(n)$ to arbitrary adversaries. Our result can be viewed as a polynomial-space quantum analogue of the classical technique of lazy sampling for random oracles.

Our high-level approach is as follows. For now, suppose the adversary $\algo A$ only makes parallel queries to $\Eval$. If the query count $t$ of $\algo A$ is a priori bounded, we can simply sample an element of a unitary $t$-design. We can also do this coherently: prepare a quantum register $I$ in uniform superposition over the index set of the $t$-design, and then apply the $t$-design controlled on $I$. Call this efficient simulator $\effunitary_t$. Observe that the effect of $t$ parallel queries is just the application of the \emph{$t$-twirling channel} $\mathcal T^{(t)}$ to the $t$ input registers~\cite{BHH16}, and that $\effunitary_t$ simulates $\mathcal T^{(t)}$ faithfully. What is more, it applies a \emph{Stinespring dilation}\footnote{The Stinespring dilation of a quantum channel is an isometry with the property that the quantum channel can be implemented by applying the isometry and subsequently discarding an auxiliary register.}  \cite{Stinespring1955} of $\mathcal T^{(t)}$ with dilating register $I$. 

Now suppose $\algo A$ makes an ``extra'' query, i.e., query number $t+1$. Consider an alternative Stinespring dilation of $\mathcal T^{(t)}$, namely the one implemented by $\effunitary_{t+1}$ when queried $t$ times. 
Recall that all Stinespring dilations of a quantum channel are equivalent, up to a partial isometry on the dilating register. It follows that there is a partial isometry, acting on the private space of $\effunitary_t$, that transforms the dilation of $\mathcal T^{(t)}$ implemented by $\effunitary_t$ into the dilation of $\mathcal T^{(t)}$ implemented by $\effunitary_{t+1}$. If we implement this transformation, and then respond to $\algo A$ as prescribed by $\effunitary_{t+1}$, we have achieved perfect indistinguishability against the additional query. By iterating this process, we see that the a priori bound on the number of queries is no longer needed. We let $\effunitary$ denote the resulting simulator. The complete construction is described in \expref{Construction}{con:unitary-sampler} below.

Our high-level discussion above did not take approximation into account. All currently known efficient constructions of $t$-designs are approximate. Here, we take a different approach: we will implement our construction using \emph{exact $t$-designs}. This addresses the issue of adaptive queries: if there exists an adaptive-query distinguisher with nonzero distinguishing probability, then by post-selection there also exists a parallel-query one via probabilistic teleportation. This yields that the ideal and efficient unitary samplers are perfectly indistinguishable to arbitrary adversaries.

\begin{theorem}\label{thm:unitary-sampler-summary}
For all oracle algorithms $\algo A$,
$	
\Pr\left[\algo A^{\idealunitary(n)} = 1\right]=\Pr\left[\algo A^{\effunitary(n)} = 1\right].
$
\end{theorem}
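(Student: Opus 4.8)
The plan is to reduce an arbitrary adaptive adversary, which interleaves \Eval and \Invert queries in any order, to the much simpler case of an adversary that makes all of its queries in parallel, and then to settle the parallel case exactly using exact unitary designs together with the uniqueness of Stinespring dilations. The feature that makes everything work is that we only need \emph{zero-error} indistinguishability, so any reduction that succeeds with nonzero probability---even an exponentially small one---is good enough.

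First I would treat the parallel case with an a priori query bound. The effect of parallel queries on $\idealunitary$ is exactly the corresponding twirling channel: $t$ parallel \Eval queries apply $\mathcal{T}^{(t)}(X)=\int U^{\otimes t}X\,(U^\dagger)^{\otimes t}\,dU$, and mixed \Eval/\Invert queries apply the analogous twirl over $U^{\otimes s}\otimes \bar U^{\otimes r}$, with the ideal oracle's private register (holding the Haar-random $U$) serving as a dilating register. Instantiating $\effunitary_t$ with an \emph{exact} design reproduces this channel identically---this is precisely where exactness, rather than the usual approximate designs, is essential---and by construction $\effunitary_t$ realizes a Stinespring dilation of it with index register $I$ as the dilating register. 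Since the reduced state on the adversary's query registers is obtained by tracing out the dilating register, and both dilations implement the same channel, $\effunitary_t$ and $\idealunitary$ are already perfectly indistinguishable after $t$ parallel queries. To remove the a priori bound I would chain partial isometries, as in \expref{Construction}{con:unitary-sampler}: the dilation of $\mathcal{T}^{(t)}$ implemented by $\effunitary_{t+1}$ (run on only $t$ queries) is another dilation of the same channel, so by Stinespring uniqueness there is a partial isometry $W_t$ acting only on the private register that carries the $\effunitary_t$ dilation to the $\effunitary_{t+1}$ dilation. Applying $W_t$ before answering query $t+1$ leaves the reduced state on the already-answered registers unchanged and then lets us respond as $\effunitary_{t+1}$ prescribes; iterating shows that the single stateful simulator $\effunitary$ coincides, on the first $t$ parallel queries, with $\effunitary_t$ for every $t$, and is therefore perfectly indistinguishable from $\idealunitary$ against any parallel-query adversary.

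Finally I would lift this to arbitrary adaptive adversaries by gate teleportation and post-selection. A parallel distinguisher $\advb$ first queries the oracle in parallel on halves of maximally entangled pairs, obtaining Choi states $(U\otimes\one)\ket{\Phi^+}$ from \Eval and $(U^\dagger\otimes\one)\ket{\Phi^+}$ from \Invert; it prepares enough of each type to cover the query budget of $\algo A$. It then runs $\algo A$ internally, implementing each adaptive \Eval query by teleporting the query register into a fresh $U$-Choi state and each adaptive \Invert query by teleporting into a fresh $U^\dagger$-Choi state (preparing both Choi types is what lets us realize $U^\dagger$ directly rather than the transpose that reverse teleportation would yield), and post-selecting on every Bell measurement returning the trivial, identity correction. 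Because each Choi half is maximally mixed under both $\idealunitary$ and $\effunitary$, the Bell outcomes are uniform and input-independent, so the post-selection succeeds with the same fixed nonzero probability in both cases. Letting $\advb$ output $\algo A$'s guess on success and a fixed value otherwise, the difference $\Pr[\advb^{\idealunitary}=1]-\Pr[\advb^{\effunitary}=1]$ equals the post-selection probability times $\algo A$'s advantage; hence if $\algo A$ had nonzero advantage, so would $\advb$, contradicting the parallel-query result and forcing the advantage to be exactly zero.

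I expect the main obstacle to be this adaptive-to-parallel reduction. One must verify that a single up-front batch of parallel queries supplies enough Choi states to answer every adaptive \Eval \emph{and} \Invert query, that the post-selected correction is genuinely the identity so that no $U$-conjugated Pauli survives, that the success probability is identical under the two oracles, and---most importantly---that conditioning on an exponentially unlikely event is harmless \emph{only because} the parallel case was established with zero error. Had the parallel simulator been merely approximate, conditioning on an event of probability roughly $2^{-2n q}$ could amplify the error by the reciprocal factor and destroy the bound, which is exactly why the construction must be built from exact designs.
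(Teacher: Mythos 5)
Your proposal is correct and takes essentially the same route as the paper's own proof: the parallel-query case is settled by exact designs realizing Stinespring dilations of the (mixed) twirling channels, chained together via Stinespring-uniqueness transition isometries, and the adaptive case is reduced to the parallel case by gate teleportation with post-selection on trivial Pauli corrections, which is harmless exactly because the parallel case holds with zero error. The only cosmetic difference is that the paper establishes the parallel claim by an explicit induction on the query isometries (with permutation bookkeeping for interleaved \Eval/\Invert queries), whereas you argue it at the channel/dilation level; the substance, including your closing observation that approximate designs would break the post-selection step, matches the paper's reasoning.
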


The existence of exact unitary $t$-designs for all $t$ is a fairly recent result. It follows as a special case of a result of Kane \cite{Kane2015}, who shows that designs exist for all finite-dimensional vector spaces of well-behaved functions on path-connected topological spaces. He also gives a simpler result for homogeneous spaces when the vector space of functions is invariant under the symmetry group action. Here, the number of elements of the smallest design is bounded just in terms of the dimension of the space of functions. The unitary group is an example of such a space, and the dimension of the space of homogeneous polynomials of degree $t$ in both $U$ and $U^\dagger$ can be explicitly derived, see e.g. \cite{Roy2009}. This yields the following.
\begin{corollary}
	The space complexity of $\effunitary(n)$ for $q$ queries is bounded from above by  $2q(2n+\log e)+O(\log q)$.
\end{corollary}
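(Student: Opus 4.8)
The plan is to read off the space usage directly from \expref{Construction}{con:unitary-sampler}: the only persistent workspace of $\effunitary(n)$ is the dilating register $I$, whose dimension after $q$ queries equals the size of the exact unitary $q$-design currently in use, plus $O(\log q)$ qubits of bookkeeping (in particular a counter recording how many queries have been answered so far, which is needed to select the correct partial isometry on $I$ at each step). The adversary's query registers are part of the interface and are not charged to the simulator. Hence the space complexity is $\lceil \log |K_q| \rceil + O(\log q)$, where $K_q$ is the index set of the exact $q$-design, and it suffices to bound $\log |K_q|$.

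To bound the design size I would invoke the homogeneous-space version of Kane's theorem \cite{Kane2015}: the unitary group is a homogeneous space and the space of functions reproduced by a $t$-design is invariant under the group action, so the smallest exact design has at most $\dim \mathrm{Hom}(q,q)$ elements, where $\mathrm{Hom}(q,q)$ is the space of polynomials homogeneous of degree $q$ in the entries of $U$ and degree $q$ in the entries of $U^\dagger$. Setting $N=2^n$, the explicit dimension count of \cite{Roy2009} bounds this by $\binom{N^2+q-1}{q}^2$; indeed $\binom{N^2+q-1}{q}^2$ is exactly the dimension of the space of \emph{formal} bidegree-$(q,q)$ polynomials in the $N^2$ matrix entries and their conjugates, and the unitarity relations $U^\dagger U = \one$ can only reduce this. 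Thus $|K_q| \le \binom{N^2+q-1}{q}^2$.

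It then remains to estimate the binomial coefficient. Using the standard bound $\binom{M}{q}\le (eM/q)^q$ with $M=N^2+q-1$ gives
\[
\log|K_q| \;\le\; 2\log\binom{N^2+q-1}{q} \;\le\; 2q\bigl(\log e + \log(N^2+q-1) - \log q\bigr).
\]
In the relevant regime $q\le N^2$ we have $N^2+q-1\le 2N^2$, hence $\log(N^2+q-1)\le 2n+1$, and the right-hand side is at most $2q(2n+\log e) + 2q(1-\log q)$. Since $1-\log q\le 0$ for $q\ge 2$, the correction term is nonpositive, and folding the cases $q\in\{0,1\}$, the ceiling, and the bookkeeping register into a single $O(\log q)$ slack yields the claimed bound $2q(2n+\log e)+O(\log q)$.

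The only genuinely nontrivial ingredients are the two external facts combined in the second step: Kane's guarantee that an exact $q$-design of size at most $\dim\mathrm{Hom}(q,q)$ exists at all, and the explicit evaluation of that dimension from \cite{Roy2009}. Everything else---identifying $I$ as the dominant register and estimating the binomial coefficient---is routine. The one point requiring a little care is the parameter regime: the clean bound above is tailored to $q\le N^2$ (polynomially many queries against an exponential-dimensional space), which is the case of interest. For $q\gtrsim N^2$ one checks separately that $\log\binom{N^2+q-1}{q}$ grows only like $(N^2-1)\log q$, which is comfortably dominated by the linear-in-$q$ target $q(2n+\log e)$, so the stated bound continues to hold.
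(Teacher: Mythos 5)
Your overall architecture matches the paper's proof: the space usage is the design-index (dilating) register plus $O(\log q)$ bits of counters, the index register costs the logarithm of the size of an exact unitary $q$-design, and that size is bounded by combining Kane's existence theorem with the Roy--Scott dimension count and a standard binomial estimate. The gap is in your invocation of Kane. The homogeneous-space version of Kane's theorem (Theorem 10 of \cite{Kane2015}, as quoted in the paper) guarantees a design of size at most $M(M-1)$, where $M=\dim W$ is the dimension of the invariant function space; it does \emph{not} guarantee one of size at most $M$. A design of size $\approx M$ would be essentially a tight design, and no such existence result is available here. With the bound Kane actually provides, and $M=\binom{2^{2n}+q-1}{q}^2$, your own computation gives
\[
\log|K_q|\;\le\;\log\bigl(M(M-1)\bigr)\;\le\;2\log M\;=\;4\log\binom{2^{2n}+q-1}{q}\;\le\;4q(2n+\log e)\,,
\]
i.e.\ a leading constant of $4$ rather than the claimed $2$. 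So, as written, your argument does not establish the stated bound.

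For context, the paper's proof of this corollary cites \expref{Corollary}{cor:udesigns} (design size at most $\bigl(e(2^{2n}+q-1)/q\bigr)^{2q}$) and is then arithmetically fine; but that corollary is itself obtained from the inequality $\binom{2^{2n}+t-1}{t}^2\le\bigl(e(2^{2n}+t-1)/t\bigr)^{t}$, which is false (the exponent on the right should be $2t$; already $t=1$, $n=1$ gives $16\not\le 4e$). Tracked honestly through Kane's $M(M-1)$ bound, both your route and the paper's justify only $4q(2n+\log e)+O(\log q)$; your misquote of Kane thus mirrors a compensating error in the paper rather than fixing it. Two minor remarks: your case split on $q\le 2^{2n}$ is unnecessary, since $(2^{2n}+q-1)/q\le 2^{2n}$ holds for every $q\ge 1$, which yields $\log\bigl(e(2^{2n}+q-1)/q\bigr)\le 2n+\log e$ directly; and the bookkeeping must record the numbers of $\Eval$ and $\Invert$ queries separately (the update isometry $W^{(t,t_e)}$ depends on both), though this is of course still $O(\log q)$ bits.
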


\subsubsection{An alternative approach.}

We now sketch another potential approach to lazy sampling of unitaries. Very briefly, this approach takes a representation-theoretic perspective and suggests that the Schur transform~\cite{Bacon2006} could lead to a polynomial-time algorithm for lazy sampling Haar-random unitaries. The discussion below uses tools and language from quantum information theory and the representation theory of the unitary and symmetric groups to a much larger extent than the rest of the article, and is not required for understanding our main results.

We remark that the analogous problem of lazy sampling a quantum oracle for a random classical function was recently solved by Zhandry~\cite{Zhandry2019}. One of the advantages of Zhandry's technique is that it partly recovers the ability to inspect previously made queries, an important feature of classical lazy sampling. The key insight is that the simulator can implement the Stinespring dilation of the oracle channel, and thus record the output of the \emph{complementary channel}.\footnote{The complementary channel of a quantum channel maps the input to the auxiliary output of the Stinespring dilation isometry.}  As the classical function is computed via XOR, changing to the $\Z_2^n$-Fourier basis makes the recording property explicit. It also allows for an efficient implementation.

In the case of Haar-random unitary oracles, we can make an analogous observation. Consider an algorithm that makes $t$ parallel queries to $U$. The relevant Fourier transform is now over the unitary group, and is given by the Schur transform~\cite{Bacon2006}. By Schur-Weyl duality (see e.g.~\cite{Christandl2006}), the decomposition of $\left(\mathbb C^{2^n}\right)^{\otimes t}$ into irreducible representations is given by
\begin{equation}\label{eq:SchW}
\left(\C^d\right)^{\otimes t}\cong \bigoplus_{\lambda\vdash_d t}[\lambda]\otimes V_{\lambda,d}.
\end{equation}
Here $\lambda\vdash_d t$ means $\lambda$ is any partition of $t$ into at most $d$ parts, $[\lambda]$ is the Specht module of $S_t$, and $V_{\lambda,d}$ is the Weyl module of $U(d)$, corresponding to the partition $\lambda$, respectively. By Schur's lemma, the $t$-twirling channel acts as
\begin{equation}\label{eq:twirly}
	\mathcal T^{(t)}=\bigoplus_{\lambda\vdash_d t}\mathrm{id}_{[\lambda]}\otimes \Lambda_{V_{\lambda,d}},
\end{equation}
where $\mathrm{id}$ is the identity channel, and $\Lambda=\Tr(\cdot)\tau$ with the maximally mixed state $\tau$ is the depolarizing channel. We therefore obtain a Stinespring dilation of the $t$-twirling channel as follows. Let $\tilde B, \tilde B'$ be registers with Hilbert spaces 
\begin{equation}
	\Hi_{\tilde B}=	\Hi_{\tilde B'}=\bigotimes_{\lambda\vdash_d t}V_{\lambda,d}
\end{equation} 
and denote the subregisters by $\tilde B_{\lambda}$ and $\tilde B'_\lambda$, respectively. Let further $\ket{\phi^+}_{\tilde B\tilde B'}$ be the standard maximally entangled state on these registers, and let $C$ be a register whose dimension is the number of partitions of $t$ (into at most $2^n$ parts). Define the isometry
\begin{equation}
 	\hat V_{A^t \tilde B\to  A^t \tilde B C}=\bigoplus_{\lambda\vdash_d t} F_{V_{\lambda,d} \tilde B_\lambda}\otimes \id_{[\lambda]}\otimes \ket{\lambda}_C
\end{equation}
In the above equation $V_{\lambda,d}$ and $[\lambda]$ are understood to be subspaces of $A^t$, the identity operators on $\tilde B_\mu$, $\mu\neq\lambda$ are omitted and $F$ is the swap operator. By \eqref{eq:twirly}, a Stinespring dilation of the $t$-twirling channel is then given by
\begin{equation}
	V_{A^t\to A^t\tilde B\tilde B' C}=\hat V_{A^t \tilde B\to  A^t \tilde B C}\ket{\phi^+}_{\tilde B\tilde B'}.
\end{equation}
By the equivalence of all Stinespring dilations, the exists an isometry $W_{\hat B_t\to \tilde B\tilde B' C}$ that transforms the state register of $\effunitary(n)$ after $t$ parallel queries so that the global state is the same as if the Stinespring dilation above had been applied to the $t$ input registers. But now the quantum information that was contained in the subspace $V_{\lambda, d}$ of the algorithm's query registers can be found in register $\tilde B_\lambda$.

\subsection{Organization}

The remainder of the paper is organized as follows. In \expref{Section}{sec:prelims}, we recall some basic notation and facts, and some lemmas concerning coherent preparation of certain generic families of quantum states. The proofs for these lemmas are given in \expref{Appendix}{sec:prep-lemmas}. We also describe stateful machines, which will be our model for thinking about the aforementioned ideal objects and their efficient simulators. In \expref{Section}{sec:state} we describe our efficient simulator for Haar-random states, and in \expref{Section}{sec:unitary} we describe our polynomial-space simulator for Haar-random unitaries. We end by describing the Haar money scheme and establishing its security in \expref{Section}{sec:money}.

\subsection{Acknowledgments}
%
The authors thank Yi-Kai Liu, Carl Miller, and Fang Song on helpful comments on an earlier draft. CM thanks Michael Walter for discussions about $t$-designs. CM was funded by a NWO VIDI grant (Project No. 639.022.519) and a NWO VENI grant (Project No. VI.Veni.192.159). GA acknowledges support from NSF grant CCF-1763736.

\section{Preliminaries}\label{sec:prelims}

\subsection{Some basics}

Given a fixed-size (e.g., $n$-qubit) register $A$, we will use $A_1, A_2, \dots$ to denote indexed copies of $A$. We will use $A^t$ to denote a register consisting of $t$ indexed copies of $A$, i.e., $A^t = A_1 A_2 \cdots A_t$.  Unless stated otherwise, distances of quantum states are measured in the trace distance, i.e.,
$$
d(\rho, \sigma)=\frac 1 2\| \rho-\sigma\|_1
\qquad \text{where} \qquad
\| X\|_1=\tr{\sqrt{X^\dagger X}}\,.
$$
Distances of unitary operators are measured in the operator norm.

We will frequently apply operators to some subset of a larger collection of registers. In that context, we will use register indexing to indicate which registers are being acted upon, and suppress identities to simplify notation. The register indexing will also be suppressed when it is clear from context. For example, given an operator $X_{A\to B}$ and some state $\rho$ on registers $A$ and $C$, we will write $X(\rho)$ in place of $(X \otimes \one_C) (\rho)$ to denote the state on $BC$ resulting from applying $X$ to the $A$ register of $\rho$.

We let $\ket{\phi^+}_{AA'}$ denote the maximally entangled state on registers $A$ and $A'$. For a linear operator $X$ and some basis choice, we denote its transpose by $X^T$.

\begin{lemma}[Mirror lemma; see, e.g.,~\cite{M18}]\label{lem:mirror} 
For $X_{A \rightarrow B}$ a linear operator,
$$
X_{A \rightarrow B} \ket{\phi^+}_{AA'} = \sqrt{\frac{\dim(B)}{\dim(A)}} X^T_{B' \rightarrow A'} \ket{\phi^+}_{BB'}\,.
$$
\end{lemma}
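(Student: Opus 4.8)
The plan is to prove the identity by direct expansion in a fixed pair of orthonormal bases, since the statement is fundamentally a basis-dependent bookkeeping identity: both the transpose $X^T$ and the maximally entangled states are defined relative to a common basis choice, and the content of the lemma is that these choices are compatible. First I would fix an orthonormal basis $\{\ket{i}\}$ for the space underlying $A$ and $A'$, and an orthonormal basis $\{\ket{j}\}$ for the space underlying $B$ and $B'$, using the same labels on the primed and unprimed copies. With these choices, recall the normalized maximally entangled states $\ket{\phi^+}_{AA'} = \dim(A)^{-1/2}\sum_i \ket{i}_A\ket{i}_{A'}$ and $\ket{\phi^+}_{BB'} = \dim(B)^{-1/2}\sum_j \ket{j}_B\ket{j}_{B'}$, and write $X_{A\to B} = \sum_{i,j} X_{ji}\,\ket{j}_B\bra{i}_A$ with matrix entries $X_{ji} = \bra{j}X\ket{i}$. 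By definition of the transpose relative to these same bases, $X^T_{B'\to A'} = \sum_{i,j} X_{ji}\,\ket{i}_{A'}\bra{j}_{B'}$.

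Then I would compute each side independently and compare. Applying $X$ to the $A$-register of $\ket{\phi^+}_{AA'}$ gives $\dim(A)^{-1/2}\sum_{i,j}X_{ji}\,\ket{j}_B\ket{i}_{A'}$, while applying $X^T$ to the $B'$-register of $\ket{\phi^+}_{BB'}$ gives $\dim(B)^{-1/2}\sum_{i,j}X_{ji}\,\ket{j}_B\ket{i}_{A'}$. The two expressions have identical vector parts and differ only by the scalar ratio of normalizations, namely $\sqrt{\dim(B)/\dim(A)}$, which is exactly the claimed prefactor; equating them yields the lemma.

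The computation itself is routine, so the only point requiring care — and the step I expect to be the main (minor) obstacle — is maintaining a consistent basis convention throughout: the transpose $X^T$ must be taken in the same bases used to write $\ket{\phi^+}_{AA'}$ and $\ket{\phi^+}_{BB'}$, and one must carefully track which index runs over $A/A'$ versus $B/B'$ so that the matrix entries $X_{ji}$ align across the two sides. Once these conventions are fixed the identity is immediate, and it is seen to be the rectangular-operator generalization of the standard ``ricochet''/transpose-trick property $(M\otimes\one)\ket{\phi^+} = (\one\otimes M^T)\ket{\phi^+}$ for a square operator $M$, with the dimension factor $\sqrt{\dim(B)/\dim(A)}$ accounting for the differing normalizations of the two maximally entangled states when $X$ is not dimension-preserving.
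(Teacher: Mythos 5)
Your proof is correct, and it is the standard argument: the paper itself gives no proof of this lemma, citing~\cite{M18} instead, and the direct basis expansion you carry out (with the transpose taken in the same bases defining the two maximally entangled states, so that the only discrepancy is the ratio of normalizations $\sqrt{\dim(B)/\dim(A)}$) is exactly the computation one finds in that reference. Nothing is missing.
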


\subsection{Unitary designs}

Let $\mu_n$ be the Haar measure on the unitary group $\mathrm{U}(2^n)$. We define the Haar $t$-twirling channel $\mathcal T^{(t)}_{\mathrm{Haar}}$ by
\begin{equation}\label{eq:twirl}
	\mathcal T^{(t)}_{\mathrm{Haar}}(X)=\int_{\mathrm{U}(2^n)} U^{\otimes t}X\left(U^{\otimes t}\right)^\dagger\mathrm d\mu(U).
\end{equation}
For a finite subset $D\subset \mathrm{U}(2^n)$, we define the $t$-twirling map with respect to $D$ as
\begin{equation}\label{eq:settwirl}
\mathcal T^{(t)}_{D}(X)=\frac{1}{|D|}\sum_{U\in D} U^{\otimes t}X\left(U^{\otimes t}\right)^\dagger.
\end{equation}
An $n$-qubit unitary $t$-design is a finite set $D\subset \mathrm{U}(2^n)$ such that
\begin{equation}
	\mathcal T^{(t)}_{D}=\mathcal T^{(t)}_{\mathrm{Haar}}(X)
\end{equation}
%

Another twirling channel is the mixed twirling channels with $\ell$ applications of the unitary and $t-\ell$ applications of it's inverse, 
\begin{equation}\label{eq:mixed}
\mathcal T^{(\ell, t-\ell)}_{\mathrm{Haar}}(\Gamma)=\int_{\mathrm{U}(2^n)} U^{\otimes \ell}\otimes\left(U^{\otimes(t-\ell)}\right)^\dagger \Gamma\left(U^{\otimes \ell}\right)^\dagger\otimes U^{\otimes(t-\ell)}\mathrm d\mu(U).
\end{equation}
The mixed twirling channel $\mathcal T^{(\ell, t-\ell)}_D$ for a finite set $D\subset\mathrm{U}(2^n)$ is also defined analogous to Equation \eqref{eq:settwirl}. As our definition of unitary $t$-designs is equivalent to one based on the expectation values of polynomials (see, e.g.,~\cite{L10}), we easily obtain the following.
\begin{proposition}\label{lem:mixedtwirl}
		Let $D$ be an $n$-qubit unitary $t$-design and $0\le\ell\le t$. Then
		\begin{equation}
		\mathcal T^{(\ell, t-\ell)}_{\mathrm{Haar}}=\mathcal T^{(\ell,t-\ell)}_{D}
		\end{equation}
\end{proposition}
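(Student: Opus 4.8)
The plan is to reduce the mixed twirl $\mathcal T^{(\ell,t-\ell)}$ to the \emph{balanced} moment data that a $t$-design is, by definition, required to reproduce. Write $\bar U$ for the entrywise complex conjugate of $U$, so that $U^\dagger=\bar U^{\,T}$. I would first record the reformulation of the design property alluded to just before the proposition: since the natural (vectorized) representation of the standard twirl $\mathcal T^{(t)}$ of \eqref{eq:twirl}, under $\Gamma\mapsto M\Gamma M^\dagger \leftrightarrow M\otimes\bar M$, is exactly $\mathbb E[\,U^{\otimes t}\otimes\bar U^{\otimes t}\,]$, the defining identity $\mathcal T^{(t)}_D=\mathcal T^{(t)}_{\mathrm{Haar}}$ is equivalent to
\[
\mathbb E_{U\sim D}\!\left[U^{\otimes t}\otimes\bar U^{\otimes t}\right]=\int_{\mathrm U(2^n)}U^{\otimes t}\otimes\bar U^{\otimes t}\,\mathrm d\mu(U).
\]
Equivalently, $D$ reproduces the Haar expectation of every monomial of degree $t$ in the entries of $U$ and degree $t$ in the entries of $\bar U$.

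Next I would express $\mathcal T^{(\ell,t-\ell)}$ as a superoperator and count degrees. Its action from \eqref{eq:mixed} is $\Gamma\mapsto \mathbb E\!\left[M\Gamma M^\dagger\right]$ with $M=U^{\otimes\ell}\otimes\bigl(U^{\otimes(t-\ell)}\bigr)^\dagger$, so its natural representation (for either the set average or the Haar integral) is $\mathbb E[\,M\otimes\bar M\,]$. Using $\bigl(U^{\otimes(t-\ell)}\bigr)^\dagger=(\bar U^{\,T})^{\otimes(t-\ell)}$, the factor $M$ carries exactly $\ell$ copies of $U$ and $t-\ell$ copies of $\bar U$ (the latter transposed on their legs), while $\bar M$ carries $\ell$ copies of $\bar U$ and $t-\ell$ copies of $U$. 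Hence $M\otimes\bar M$ contains precisely $t$ factors of $U$ and $t$ factors of $\bar U$ in total, and is obtained from $U^{\otimes t}\otimes\bar U^{\otimes t}$ by a fixed linear map $\Phi$ that merely permutes the tensor legs into the correct input/output and forward/adjoint slots and transposes the appropriate ones. Crucially $\Phi$ does not depend on $U$, so it commutes with the average:
\[
\mathbb E[\,M\otimes\bar M\,]=\Phi\!\left(\mathbb E\!\left[U^{\otimes t}\otimes\bar U^{\otimes t}\right]\right),
\]
for both the $D$-average and the Haar integral. By the displayed design identity the two arguments of $\Phi$ agree, so the two representing operators agree, and therefore $\mathcal T^{(\ell,t-\ell)}_{\mathrm{Haar}}=\mathcal T^{(\ell,t-\ell)}_D$, as claimed.

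The only point demanding care — and hence the main obstacle — is the tensor-leg bookkeeping defining $\Phi$: one must verify that routing the $t$ conjugate-free legs and $t$ conjugated legs of $U^{\otimes t}\otimes\bar U^{\otimes t}$ to the slots dictated by $M\otimes\bar M$, together with the partial transposes forced by $U^\dagger=\bar U^{\,T}$, yields a well-defined $U$-independent linear map. This is purely combinatorial once one fixes an ordering of the registers. A reader preferring to avoid vectorization can instead expand an arbitrary matrix entry of $\mathcal T^{(\ell,t-\ell)}(\Gamma)$ directly: each is the $U$-average of a degree-$(t,t)$ balanced monomial in $(U,\bar U)$, and the polynomial form of the design property equates these term by term; this is the same computation in coordinates.
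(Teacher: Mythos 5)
Your proof is correct and takes essentially the same route as the paper: the paper disposes of this proposition in one line by invoking the equivalence between the twirl-based definition of a $t$-design and the definition via expectations of polynomials that are homogeneous of degree $t$ in the entries of $U$ and of $\bar U$ (see \cite{L10}), observing that the mixed twirl is built from exactly such balanced degree-$(t,t)$ moments. Your argument is the same reduction, just carried out explicitly --- via vectorization, the moment operator $\mathbb{E}\left[U^{\otimes t}\otimes\bar U^{\otimes t}\right]$, and the $U$-independent leg-permuting/partial-transposing map $\Phi$ --- rather than delegated to the cited equivalence.
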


 Finite exact unitary $t$-designs exist. In particular, one can apply the following theorem to obtain an upper bound on their minimal size. Here, a design for a function space $W$ on a topological space $X$ with measure $\mu$ is a finite set $D \subset X$ such that the expectation of a function $f\in W$ is the same whether it is taken over $X$ according to $\mu$ or over the uniform distribution on $D$.
\begin{theorem}[\cite{Kane2015}, Theorem 10]
	Let $X$ be a homogeneous space, $\mu$ an invariant measure 
on $X$ and $W$ a $M$-dimensional vector subspace of the space of real functions on $X$ that is invariant under the symmetry group of $X$, where $M>1$. Then for any $N>M(M-1)$, there exists a $W$-design for $X$ of size $N$.
	Furthermore, there exists a design for $X$ of size at most $M(M-1)$.
\end{theorem}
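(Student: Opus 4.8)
The plan is to reduce the statement to a purely geometric cubature problem in $\R^M$ and then run a Seymour--Zaslavsky-style averaging argument, with the group action supplying the two ingredients it needs: path-connectedness and a centroid in general position. Fix a basis $f_1,\dots,f_M$ of $W$ and normalize $\mu$ to a probability measure. Define the continuous evaluation map $\Phi\colon X\to\R^M$ by $\Phi(x)=(f_1(x),\dots,f_M(x))$, and set $c=\int_X\Phi\,\mathrm d\mu$. A point configuration $x_1,\dots,x_N$ is a $W$-design of size $N$ \emph{exactly} when $\frac1N\sum_{i=1}^N\Phi(x_i)=c$, so it suffices to show that $c$ is an equal-weight average of $N$ points of $Y:=\Phi(X)$ for every $N>M(M-1)$, and of some $N\le M(M-1)$.

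First I would record the two structural facts. Since $X=G/H$ is homogeneous under a connected group and $\Phi$ is continuous, $Y$ is path-connected. Since $c$ is the barycenter of the pushforward $\Phi_*\mu$, which has full support on $Y$, the point $c$ lies in the relative interior of $K:=\mathrm{conv}(Y)$: any linear functional attaining its maximum over $K$ at $c$ is forced to be constant on $Y$ (because its average against $\Phi_*\mu$ equals its maximum), so $c$ cannot lie on a proper face, only on a lower-dimensional affine hull of dimension $d=\dim\mathrm{aff}(Y)\le M-1$. Invariance of $\mu$ and $W$ under $G$ gives the same conclusion more directly: $c$ is a $G$-fixed point and $K$ is $G$-invariant, so the minimal face of $K$ containing $c$ is $G$-invariant and hence all of $K$.

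Next I would analyze the set $T=\{N\ge 1 : c\in A_N(Y)\}$ of admissible sizes, where $A_N(Y)$ denotes the set of $N$-point equal-weight averages of points of $Y$. Concatenating an $N_1$-design and an $N_2$-design shows $N_1+N_2\in T$, so $T$ is an additive sub-semigroup of $\N$. Carath\'eodory's theorem applied in the relative interior writes $c$ as a strictly positive convex combination of at most $d+1$ affinely independent points of $Y$; since $c$ is interior the weights can be perturbed to rationals, and clearing denominators exhibits an explicit design, so $T\neq\emptyset$. Path-connectedness of $Y$ then lets one slide sample points along paths to manufacture designs of two coprime sizes, forcing $\gcd(T)=1$; by the Sylvester--Frobenius property of numerical semigroups, $T$ is cofinite. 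This already yields designs of all sufficiently large sizes.

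The hard part is sharpening ``sufficiently large'' to the explicit threshold $M(M-1)$, and producing the small design of size at most $M(M-1)$. Here I would induct on $M$, peeling off one basis function at a time: given a design for the first $M-1$ coordinates, path-connectedness allows one to correct the remaining coordinate of the average by splitting or merging a controlled number of sample points, while $G$-symmetrization keeps $c$ fixed. The delicate point---and the main obstacle---is the quantitative bookkeeping: bounding how many extra points each coordinate correction costs and showing that the total telescopes to exactly $M(M-1)$ rather than to a weaker polynomial bound. This quantitative accounting is the technical heart of the argument, and carrying it out carefully, rather than settling for the soft $N\ge N_0$ existence, is where the real work lies.
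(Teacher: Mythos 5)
A preliminary remark: the paper does not prove this statement at all --- it is quoted verbatim from \cite{Kane2015} (Theorem 10 there) and used as a black box to obtain the bound on exact unitary $t$-designs in Corollary~\ref{cor:udesigns} --- so your proposal has to stand on its own as a proof of Kane's theorem. Its soft half follows the right circle of ideas (the Seymour--Zaslavsky averaging-set argument that Kane builds on): the reduction of the design condition to the barycenter identity $\frac{1}{N}\sum_i \Phi(x_i)=c$, the fact that $c$ lies in the relative interior of $\mathrm{conv}(\Phi(X))$ (both your full-support and fixed-face arguments are fine), and the additive-semigroup structure of the set $T$ of admissible sizes. You are also right to use path-connectedness: it is part of Kane's hypotheses, though elided in the quoted statement, and it is genuinely necessary (for $X$ a two-point set acted on by $\Z_2$ with $W$ all functions, so $M=2$, only even sizes admit designs, and ``all $N>M(M-1)$'' fails). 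However, two steps of even this soft half are asserted rather than proved. The rational-perturbation step fails as written: if $c=\sum_i\lambda_i y_i$ with the $y_i$ affinely independent, the weights are the barycentric coordinates of $c$ and are \emph{uniquely determined}, so they cannot be ``perturbed to rationals''; and if one instead uses affinely dependent points so that the weight polytope is positive-dimensional, that polytope is cut out by equations with irrational coefficients and need not contain any rational point. Likewise, ``sliding sample points along paths'' to manufacture two coprime sizes is a one-parameter deformation being asked to satisfy a constraint of dimension $\dim\mathrm{aff}(\Phi(X))$ exactly; making this work is precisely the nontrivial content of Seymour--Zaslavsky, not a consequence of path-connectedness. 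So even $T\neq\emptyset$ and $\gcd(T)=1$ are not actually established by your argument.

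The decisive gap, though, is the one you flag yourself: everything in the proposal can at best yield ``designs exist for all sufficiently large $N$,'' with no effective bound, whereas the entire content of the theorem --- and the only thing the paper uses --- is the explicit threshold $N>M(M-1)$ together with the existence of a single design of size at most $M(M-1)$ (note that cofiniteness of $T$ says nothing whatsoever about small designs). The paper feeds this explicit value into the space bound $2q(2n+\log e)+O(\log q)$; an ineffective $N_0$ would yield no such corollary. Your plan for this part --- induct on $M$, correct one coordinate at a time by splitting and merging points, and check that the costs ``telescope to exactly $M(M-1)$'' --- is a hope, not an argument: nothing in the sketch bounds the cost of a single coordinate correction by a function of $M$ alone (naively it depends on the Carath\'eodory weights and on the geometry of $\Phi(X)$), and no mechanism is offered by which the symmetry would enforce such uniform control. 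Since Kane's bound is uniform over all path-connected homogeneous spaces and all invariant $W$, any correct proof must exploit the invariance quantitatively, which the sketch never does; as it stands, the proposal proves (modulo the gaps above) a strictly weaker, qualitative statement than the one quoted.
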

The case of unitary $t$-designs is the one where $X=\mathrm{U}(2^n)$ is acting on itself (e.g., on the left), $\mu$ is the Haar measure, and $W$ is the vector space of homogeneous polynomials of degree $t$ in both $U$ and $U^\dagger$\footnote{The output of the twirling channel \eqref{eq:twirl} is a matrix of such polynomials.}. The dimension of this space is
\begin{equation}
M_t=\left(\begin{array}{c}
2^{2n}+t-1\\t
\end{array}\right)^2\le\left( \frac{e(2^{2n}+t-1)}{t}\right)^t,
\end{equation}
see e.g. \cite{Roy2009}. We therefore get
\begin{corollary}\label{cor:udesigns}
For all $n$, there exists an exact $n$-qubit unitary $t$-design with a number of elements which is at most
$$
\left( \frac{e(2^{2n}+t-1)}{t}\right)^{2t}\,.
$$
\end{corollary}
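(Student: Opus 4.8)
The plan is to obtain the corollary by specializing Kane's existence theorem to the unitary group and inserting the explicit dimension of the relevant polynomial space. First I would recast the $t$-design condition as a design condition for a concrete function space. Regard $X=\mathrm U(2^n)$ as a homogeneous space under left multiplication, with $\mu$ the Haar measure, and let $W$ be the span of the functions $U \mapsto (U^{\otimes t} X (U^{\otimes t})^\dagger)_{I,J}$ obtained as $X$ and the index pair $(I,J)$ vary; equivalently, $W$ is the space of polynomials homogeneous of degree $t$ in the entries of $U$ and of degree $t$ in the entries of $U^\dagger$. By construction every entry of $\mathcal T^{(t)}_{\mathrm{Haar}}(X)$ from \eqref{eq:twirl} and of $\mathcal T^{(t)}_{D}(X)$ from \eqref{eq:settwirl} is respectively the Haar and the uniform-$D$ average of an element of $W$; taking $X$ to range over the matrix units shows that these entries span $W$, so $\mathcal T^{(t)}_{D}=\mathcal T^{(t)}_{\mathrm{Haar}}$ holds precisely when $D$ reproduces the Haar expectation of every function in $W$, i.e.\ precisely when $D$ is a $W$-design. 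This is the equivalence with the polynomial formulation recalled before \expref{Proposition}{lem:mixedtwirl}.

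With this identification in place, I would apply the theorem of Kane quoted above. Because $X$ is homogeneous and $W$ is invariant under the acting symmetry group, the stronger, symmetry-invariant case applies, and it guarantees a uniformly weighted design whose cardinality is controlled by $M_t=\dim W$ alone, rather than by the quadratic quantity $M(M-1)$ of the general statement. This distinction is exactly what drives the final exponent: the generic bound would contribute an extra squaring and produce $(\,\cdot\,)^{4t}$, whereas the symmetry-invariant bound keeps us at $(\,\cdot\,)^{2t}$.

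It then remains to bound $M_t$. The number of monomials of degree $t$ in the $2^{2n}$ entries of $U$ is $\binom{2^{2n}+t-1}{t}$, and the same count applies to $U^\dagger$, so $M_t \le \binom{2^{2n}+t-1}{t}^2$. Inserting the standard estimate $\binom{N}{t}\le (eN/t)^t$ with $N=2^{2n}+t-1$ into each factor gives $M_t \le \left(e(2^{2n}+t-1)/t\right)^{2t}$, which is the asserted bound on the design size.

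I expect the delicate step to be the justification of the symmetry-invariant form of Kane's theorem, rather than any of the arithmetic: one must check that $\mathrm U(2^n)$ with its Haar measure and the space $W$ genuinely meet its hypotheses, so that the design size is bounded by $\dim W$ and not by $\dim W(\dim W-1)$. A secondary point to verify is the passage from the complex space $W$ to the real function space to which the theorem literally applies; expressing functions via their real and imaginary parts at most doubles the dimension, and this factor of two is absorbed by the slack in $\binom{N}{t}\le(eN/t)^t$ (which is loose by a factor $\gtrsim \sqrt{t}$ per binomial factor), so the stated exponent is unaffected.
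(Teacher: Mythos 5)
Your overall route is the same as the paper's: identify the unitary $t$-design condition with a $W$-design condition for $X=\mathrm{U}(2^n)$ acting on itself, where $W$ is the space of polynomials homogeneous of degree $t$ in the entries of $U$ and in those of $U^\dagger$, and then feed $\dim W$ into Kane's theorem. Your reduction from the twirling-channel definition to the $W$-design definition (via matrix units) is fine, and your dimension estimate $M_t=\binom{2^{2n}+t-1}{t}^2\le\bigl(e(2^{2n}+t-1)/t\bigr)^{2t}$ is the correct one.

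The gap is precisely the step you flagged as delicate, and it does fail: there is no form of Kane's theorem on offer that bounds the design size by $M=\dim W$ alone. The statement quoted in the paper (\cite{Kane2015}, Theorem 10) \emph{is} already the symmetry-invariant, homogeneous case---its hypotheses require $X$ homogeneous and $W$ invariant under the symmetry group---and its conclusion is the existence of designs of size at most $M(M-1)$, together with designs of every size $N>M(M-1)$. The introduction's phrase ``bounded just in terms of the dimension of the space of functions'' means the bound is a function of $M$ only (in contrast to the general path-connected case); it does not mean the bound equals $M$. Consequently, combining your (correct) bound on $M_t$ with the actual conclusion of Theorem 10 yields designs of size at most $M_t(M_t-1)<\bigl(e(2^{2n}+t-1)/t\bigr)^{4t}$, i.e.\ exponent $4t$, not the claimed $2t$. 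It is worth seeing how the paper itself lands on $2t$: it asserts $M_t\le\bigl(e(2^{2n}+t-1)/t\bigr)^{t}$ (exponent $t$), which your computation shows is false as written (it already fails at $t=1$), and then implicitly uses $M(M-1)\le M^2$. So the tension you were trying to reconcile is real, but it cannot be resolved by attributing a ``size $\le\dim W$'' guarantee to the symmetry-invariant case; from Theorem 10 as quoted, the honestly derivable exponent is $4t$, which still supports the qualitative conclusions downstream but costs a factor of $2$ in the space bound for the unitary sampler.
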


\subsection{Real and ideal stateful machines}

We will frequently use stateful algorithms with multiple ``interfaces'' which allow a user to interact with the algorithm. We will refer to such objects as \emph{stateful machines}. We will use stateful machines to describe functionalities (and implementations) of collections of oracles which relate to each other in some way. For example, one oracle might output a fixed state, while another oracle reflects about that state. 
\begin{definition}[Stateful machine]
A stateful machine $\algo S$ consists of:
\begin{itemize}
\item A finite set $\Lambda$, whose elements are called interfaces. Each interface $\mathcal I \in \Lambda$ has two fixed parameters $n_\mathcal I \in \mathbb N$ (input size) and $m_\mathcal I \in \mathbb N$ (output size), and a variable $t_\algo I$ initialized to $1$ (query counter.)
\item For each interface $\mathcal I \in \Lambda$, a sequence of quantum algorithms $\{\algo S.\mathcal I_j : j = 1, 2, \dots\}$. Each $\algo S.\mathcal I_j$ has an input register of $n_\mathcal I$ qubits, an output register of $m_\mathcal I$ qubits, and is allowed to act on an additional shared work register $R$ (including the ability to add/remove qubits in $R$.) In addition, each $\algo S.\algo I_j$ increments the corresponding query counter $t_\algo I$ by one.
\end{itemize}
\end{definition}
The typical usage of a stateful machine $\algo S$ is as follows. First, the work register $R$ is initialized to be empty, i.e., no qubits. After that, whenever a user invokes an interface $\algo S.\algo I$ and supplies $n_\algo I$ qubits in an input register $M$, the algorithm $\algo S.\algo I_{t_\algo I}$ is invoked on registers $M$ and $R$. The contents of the output register are returned to the user, and the new, updated work register remains for the next invocation. We emphasize that the work register is shared between all interfaces.

We remark that we will also sometimes define \emph{ideal machines}, which behave outwardly like a stateful machine but are not constrained to apply only maps which are implementable in finite space or time. For example, an ideal machine can have an interface that implements a perfectly Haar-random unitary $U$, and another interface which implements $U^\dagger$.

\subsection{Some state preparation tools}

We now describe some algorithms for efficient coherent preparation of certain quantum state families. The proofs for the following lemmas can be found in \expref{Appendix}{sec:prep-lemmas}. We begin with state families with polynomial support.

\begin{lemma}\label{lem:poly-prepare}
Let $\ket{\varphi} = \sum_{x \in \{0,1\}^n} \varphi(x) \ket{x}$ be a family of quantum states whose amplitudes $\varphi$ have an efficient classical description $\tilde \varphi$, and such that $|\{x : \varphi(x) \neq 0\}| \leq \poly$. Then there exists a quantum algorithm $\mathcal P$ which runs in time polynomial in $n$ and $\log(1/\epsilon)$ and satisfies
$$
\|\mathcal P \ket{\tilde \varphi}\ket{0^n} - \ket{\tilde \varphi}\ket{\varphi}\|_2 \leq \epsilon\,.
$$
\end{lemma}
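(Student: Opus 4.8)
The plan is to prepare $\ket\varphi$ by a sequence of $\poly$ controlled operations, each of which introduces one more term of the superposition, using the classical description $\tilde\varphi$ as a control throughout so that the final state remains coherently entangled with $\ket{\tilde\varphi}$. Concretely, let $S = \{x : \varphi(x) \neq 0\}$, so $|S| \le \poly(n)$. The classical description $\tilde\varphi$ determines both the support set $S$ and the amplitudes $\{\varphi(x)\}_{x\in S}$, and these can be extracted reversibly by a classical (hence reversible quantum) subroutine acting on $\ket{\tilde\varphi}$ together with fresh ancillas. First I would use such a subroutine to compute, controlled on $\ket{\tilde\varphi}$, an ordered list $(x_1, \dots, x_{|S|})$ of the support elements together with a table of ``conditional rotation angles'' (see below) into an ancilla register, all while leaving $\ket{\tilde\varphi}$ intact.

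The core of the preparation is the standard ``sequential amplitude injection'' technique. Write the target as $\ket\varphi = \sum_{i=1}^{|S|} \varphi(x_i)\ket{x_i}$ and define partial tail-weights $w_k = \sum_{i \ge k} |\varphi(x_i)|^2$, so $w_1 = 1$. Then I would build $\ket\varphi$ in $|S|$ stages on a register indexed by a pointer $k$: at stage $k$, conditioned on the pointer being at position $k$, apply a single-qubit rotation by the angle $\theta_k = \arccos\!\sqrt{w_{k+1}/w_k}$ (a relative phase handles the complex part of $\varphi(x_k)$) that splits off amplitude $\varphi(x_k)$ to be ``frozen'' onto the label $\ket{x_k}$, leaving weight $w_{k+1}$ to be distributed among the remaining support. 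Each such rotation angle is computed, controlled on $\ket{\tilde\varphi}$, from the extracted amplitude table; since there are only $\poly$ support elements, there are only $\poly$ stages, each of which is a controlled single-qubit rotation acting on registers of $\poly(n)$ qubits. After all stages, I would uncompute the ancillary list and angle tables (again controlled on $\ket{\tilde\varphi}$), restoring $\ket{\tilde\varphi}$ and leaving exactly $\ket{\tilde\varphi}\ket\varphi$.

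The main obstacle, and the only place approximation enters, is that the rotation angles $\theta_k$ and the phases of $\varphi(x_k)$ are in general irrational and can only be computed and applied to finite precision. I would handle this by fixing a precision parameter $p = \poly(\log(1/\epsilon), n)$: compute each angle and phase to $p$ bits of accuracy, and realize each single-qubit rotation by an $\epsilon'$-approximate gate sequence, with $\epsilon' = \epsilon/\poly$ chosen so the accumulated error over all $\poly$ stages stays below $\epsilon$. By the triangle inequality for operator norm and the fact that composing $\poly$ operations each within $\epsilon'$ of the ideal yields total error at most $\poly\cdot\epsilon'$, the produced state is within $\epsilon$ in $\ell_2$ distance of $\ket{\tilde\varphi}\ket\varphi$. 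Since each arithmetic step (computing $w_k$, the $\arccos$, the square root, the phase) is a $\poly(n, \log(1/\epsilon))$-time reversible classical computation, and there are $\poly$ of them, the total running time is polynomial in $n$ and $\log(1/\epsilon)$, as claimed.

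The one point requiring a little care is ensuring that every intermediate ancilla (the support list, the angle table, the pointer register) is uncomputed cleanly, so that no residual entanglement with the work registers remains to spoil either the fidelity bound or the reusability of $\ket{\tilde\varphi}$; this is routine given that all intermediate data is a deterministic function of $\tilde\varphi$ and is therefore reversibly erasable.
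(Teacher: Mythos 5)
Your proposal is correct, but it takes a genuinely different route from the paper's proof. The paper reduces the task to generic unitary synthesis: from $\tilde\varphi$ it computes a circuit for a $\lceil\log t\rceil$-qubit unitary $\tilde U^{\varphi}$ (where $t=|\{x:\varphi(x)\neq 0\}|$) whose \emph{first column} is the amplitude vector, invoking the Nielsen--Chuang two-level decomposition plus Solovay--Kitaev --- polynomial cost since $4^{\lceil\log t\rceil}=\poly$ --- and runs this circuit coherently via a compiled description and a universal circuit; it then maps each index $\ket{i}$ to $\ket{x_i^{\varphi}}$ and uncomputes the index by the inverse lookup $x\mapsto i$. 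Your tail-weight rotation cascade instead synthesizes that single column by hand: it is in effect the specialization of the two-level decomposition to a unitary of which only one column matters, so you use only $t$ rotations and avoid the coherent-compilation and universal-circuit machinery. What you pay is the need to implement rotations whose angles sit in a $\tilde\varphi$-dependent ancilla (implementable by bitwise controlled fixed-angle rotations, each approximated via Solovay--Kitaev) and to specify the two-dimensional subspace on which each ``split'' acts so that already-frozen branches are untouched; your error accounting (polynomially many steps, each within $\epsilon/\poly$) matches the paper's. One point deserves a fix: your closing claim that every ancilla is a deterministic function of $\tilde\varphi$ and hence trivially erasable is not quite right for the pointer register, which at intermediate times is correlated with the branch (frozen on $\ket{x_i}$ versus unfinished); it is cleanly erased either because the final rotation deterministically returns it to a fixed state, or --- if you keep an index register recording which stage froze each branch --- by the same inverse lookup, controlled on the output register and $\tilde\varphi$, that the paper uses in its final step.
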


Given a set $S \subset \bits^n$, we let
$$
\ket{S} := \frac{1}{\sqrt{|S|}}\sum_{x \in S} \ket x
\qquad \text{and} \qquad
\ket{\bar S} := \frac{1}{\sqrt{2^n-|S|}}\sum_{x \in \{0,1\}\setminus S} \ket{x} 
$$
denote the states supported only on $S$ and its set complement $\bar S$, respectively. Provided that $S$ has polynomial size, we can perform coherent preparation of both state families efficiently: the former by \expref{Lemma}{lem:poly-prepare} and the latter via the below.

\begin{lemma}\label{lem:almost-uniform-prepare}
Let $S \subset \{0,1\}^n$ be a family of sets of size $\poly$ with efficient description $\tilde S$, and let $\epsilon > 0$. There exists a quantum algorithm $\mathcal P$ which runs in time polynomial in $n$ and $\log(1/\epsilon)$ and satisfies
$$
\left\|\mathcal P \ket{\tilde S}_A\ket{0^n}_B - \ket{\tilde S}_A\ket{\bar S}_B\right\|_2 \leq \epsilon\,.
$$
\end{lemma}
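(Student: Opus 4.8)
The plan is to start from the uniform superposition over all of $\bits^n$, flag the strings in $S$ using the classical description $\tilde S$, and then coherently rotate away the flagged component by exact amplitude amplification. Concretely, applying $H^{\otimes n}$ to register $B$ produces $\ket{u}_B = 2^{-n/2}\sum_{x}\ket{x}$, which decomposes as $\ket{u} = \sin\delta\,\ket{S} + \cos\delta\,\ket{\bar S}$ with $\sin^2\delta = |S|/2^n$. Since $|S|$ is polynomial and $\tilde S$ is an explicit list, the membership bit $[x\in S]$ is computable by a polynomial-size reversible circuit that reads $\tilde S$ from $A$ as a read-only input; writing this bit into a fresh flag qubit $F$ yields $\ket{\psi} = \cos\delta\,\ket{\bar S}_B\ket{0}_F + \sin\delta\,\ket{S}_B\ket{1}_F$. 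The target $\ket{\bar S}_B\ket{0}_F$ is exactly the projection of $\ket{\psi}$ onto the ``good'' subspace $\{F=0\}$.

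The key point is that $\ket{\psi}$ already has overlap $\cos\delta$ with the target, where $1-\cos\delta = O(|S|/2^n)$ is exponentially small in $n$. If the requested error $\epsilon$ were larger than this gap, I could simply discard $F$ and output $\ket{\psi}$; the content of the lemma is that $\epsilon$ may be arbitrarily small (the running time is only polynomial in $\log(1/\epsilon)$), so I must actually drive the flagged amplitude to zero. I would do this with a single phase-adjusted amplitude-amplification iterate. The two required reflections are both cheap: reflection through the good subspace is just a phase flip conditioned on $F$, and reflection through $\ket{\psi}$ is $U_\psi(\one - 2\proj{0^{n+1}})U_\psi^\dagger$, where $U_\psi$ is the efficient circuit above consisting of $H^{\otimes n}$ followed by the flag computation. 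Because the initial good probability $\cos^2\delta$ is known exactly from $|S|=|\tilde S|$, exact amplitude amplification applies; and since $\cos^2\delta$ is already within $O(|S|/2^n)$ of $1$, the needed rotation angle is tiny, so a single adjustable-phase iterate with no additional full iterations suffices. In infinite precision this lands exactly on $\ket{\tilde S}_A\ket{\bar S}_B\ket{0}_F$, after which $F$ is discarded.

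The main obstacle, and the only source of error, is realizing the construction to precision $\epsilon$ in time polynomial in $n$ and $\log(1/\epsilon)$. Two places need care. First, the phase used in the amplification step is an explicit function of $\sin^2\delta = |S|/2^n$, a rational number we can compute exactly; I would synthesize the corresponding single-qubit phase rotation to accuracy $\epsilon$ by standard gate synthesis, contributing only polynomially many gates in $\log(1/\epsilon)$. Second, one must verify that this single phase-adjusted iterate hits the good subspace \emph{exactly} in the ideal case, so that the $\epsilon$ in the statement comes solely from synthesis; this is precisely the guarantee of exact amplitude amplification and follows from the two-dimensional invariant-subspace analysis once the known value of $\delta$ is inserted. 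Collecting the synthesis errors and observing that $A$ is touched only as a read-only control (hence returned unchanged) gives the claimed bound $\|\mathcal P \ket{\tilde S}_A\ket{0^n}_B - \ket{\tilde S}_A\ket{\bar S}_B\|_2 \le \epsilon$. I note an alternative route that prepares $\ket{S}$ via \expref{Lemma}{lem:poly-prepare} and assembles $\ket{\bar S}\propto \ket{u}-\sin\delta\,\ket{S}$ by a two-term linear-combination-of-states gadget followed by exact amplitude amplification; this also works but is essentially equivalent and no simpler.
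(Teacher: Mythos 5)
Your proposal is correct, but it takes a genuinely different route from the paper. The paper's proof coherifies a repeat-until-success procedure: it repeatedly prepares the uniform superposition on $B$ and coherently measures membership in $S$, storing outcomes in control qubits $C_1,\dots,C_r$ with $r=\max\bigl(1,\lceil\log(3|S|/\epsilon)-n\rceil\bigr)$ so that the surviving failure branch has amplitude $O(\epsilon)$, and then uncomputes the garbage---each failed attempt (holding $\ket{S}$) and the sparse superposition left in the $C$-registers---by invoking \expref{Lemma}{lem:poly-prepare} twice; its error comes from truncating the failure branch plus unpreparation precision. You instead apply a single phase-adjusted Grover iterate (exact amplitude amplification in the sense of Brassard--H{\o}yer--Mosca--Tapp), exploiting that the good probability $a=1-|S|/2^n$ is an exactly known rational: in ideal arithmetic this lands precisely on $\ket{\bar S}_B\ket{0}_F$, and all error is pushed into Solovay--Kitaev synthesis of two phase gates. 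What each buys: your main route never uses \expref{Lemma}{lem:poly-prepare} at all, uses $O(1)$ iterations independent of $\epsilon$, and cleanly isolates the error source; the paper's route needs no exact-amplification phase computation and uses only elementary steps plus a lemma it has already proven. Two details you should make explicit to close the argument: (i) single-iterate exactness (e.g., with $\phi=\varphi$ one solves $au^2-u+1=0$ for $u=1-e^{i\phi}$, which has a valid solution on the circle $|u-1|=1$ only when $a\ge 1/4$) requires $a$ bounded away from zero, which holds for all sufficiently large $n$ since $|S|$ is polynomial---this is the same asymptotic regime the paper implicitly uses; and (ii) the two reflections must be the phase-generalized versions $S_\chi(\phi)$ and $U_\psi S_0(\varphi)U_\psi^\dagger$, whereas the operator $U_\psi(\one - 2\proj{0^{n+1}})U_\psi^\dagger$ you display mid-proof is the unadjusted $\pm 1$ reflection; your ``phase-adjusted'' wording shows you intend the former, but as written the displayed iterate cannot hit the good subspace exactly.
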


Finally, we show that if two orthogonal quantum states can be prepared, then so can an arbitrary superposition of the two.

\begin{lemma}\label{lem:super-prepare}
Let $\ket{\zeta_{0,j}}, \ket{\zeta_{1,j}}$ be two familes of $n$-qubit quantum states such that $\braket{\zeta_{0,j}}{\zeta_{1,j}}=0$ for all $j$, and such that there exists a quantum algorithm $\algo P_b$ which runs in time polynomial in $n$ and $\log(1/\epsilon)$ and satisfies 
$
\|\mathcal P_b \ket{j}\ket{0^n} - \ket{j}\ket{\zeta_{b, j}}\|_2 \leq \epsilon
$
for $b \in \{0,1\}$. 

For $z_0, z_1\in\C$ such that $|z_0|^2+|z_1|^2=1$, let $\tilde z$ denote a classical description of $(z_0,z_1)$ to precision at least $\epsilon$. There exists a quantum algorithm $\algo Q$ which runs in time polynomial in $n$ and $\log(1/\epsilon)$ and satisfies
	\begin{equation}
		\left\|\algo Q \ket{j}\ket{\tilde z}\ket{0^n} - \ket{j}\ket{\tilde z}\bigl(z_0\ket{\zeta_{0,j}} + z_1\ket{\zeta_{1,j}}\bigr)\right\|_2 \leq \epsilon\,.
	\end{equation}
\end{lemma}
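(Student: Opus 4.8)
The plan is to realize $\mathcal Q$ by a \emph{control-and-uncompute} strategy: introduce a single control qubit, use it to coherently branch between the two preparation routines $\mathcal P_0$ and $\mathcal P_1$, and then erase the control qubit by exploiting the orthogonality $\braket{\zeta_{0,j}}{\zeta_{1,j}}=0$, which guarantees that the target register already records which branch we are in.

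First I would prepare a fresh control qubit $C$ and, using the classical description $\tilde z$, rotate it from $\ket 0$ to $z_0\ket 0 + z_1\ket 1$. Since $(z_0,z_1)$ is specified to precision $\epsilon$, a single-qubit rotation with angle computed from $\tilde z$ achieves this up to error $O(\epsilon)$ in time polynomial in $n$ and $\log(1/\epsilon)$. Next I would apply the two preparation routines conditioned on $C$: controlled on $C=1$ apply $\mathcal P_1$ to the target register $B$ (initialized to $\ket{0^n}$), and controlled on $C=0$ apply $\mathcal P_0$. Controlling a polynomial-size circuit gate by gate preserves efficiency, so this step is efficient, and by the hypothesis on $\mathcal P_b$ together with the triangle inequality it produces a state $O(\epsilon)$-close to $\ket j\ket{\tilde z}\bigl(z_0\ket 0_C\ket{\zeta_{0,j}}_B + z_1\ket 1_C\ket{\zeta_{1,j}}_B\bigr)$.

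The crux is disentangling $C$ from $B$. Concretely, I would apply $\mathcal P_0^\dagger$ to $(j,B)$, which sends $\ket{\zeta_{0,j}} \mapsto \ket{0^n}$ up to $O(\epsilon)$, while sending $\ket{\zeta_{1,j}}$ to a vector that is $O(\epsilon)$-close to being orthogonal to $\ket{0^n}$; this last fact follows because $\mathcal P_0$ is $\epsilon$-close to an isometry and $\bra{0^n}\mathcal P_0^\dagger\ket{\zeta_{1,j}} \approx \braket{\zeta_{0,j}}{\zeta_{1,j}}=0$. I then flip $C$ conditioned on the $B$-register differing from $\ket{0^n}$ (a single multiply-controlled NOT, needing no extra ancilla), and reapply $\mathcal P_0$ to restore $B$. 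On the $\zeta_0$-branch, $B$ equals $\ket{0^n}$ after $\mathcal P_0^\dagger$ so $C$ is left untouched; on the $\zeta_1$-branch, $B\neq\ket{0^n}$ so $C$ is flipped from $\ket 1$ to $\ket 0$. Hence $C$ is returned to $\ket 0$ and factors out, leaving $B$ in the desired superposition $z_0\ket{\zeta_{0,j}} + z_1\ket{\zeta_{1,j}}$, after which $C$ is discarded.

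The main obstacle is the error analysis of this disentangling step rather than its logic: one must verify that the small component of $\mathcal P_0^\dagger\ket{\zeta_{1,j}}$ along $\ket{0^n}$ — on which the conditional flip acts incorrectly — together with the accumulated preparation errors, contributes only $O(\epsilon)$ to the final Euclidean distance. Since $\mathcal Q$ invokes the approximate isometries $\mathcal P_0,\mathcal P_1$ a constant number of times and all remaining operations (the control rotation aside) are exact, the errors add linearly; rescaling the per-call precision to a suitable constant fraction of the target then yields the claimed bound $\|\mathcal Q\ket j\ket{\tilde z}\ket{0^n} - \ket j\ket{\tilde z}(z_0\ket{\zeta_{0,j}}+z_1\ket{\zeta_{1,j}})\|_2 \le \epsilon$ with runtime polynomial in $n$ and $\log(1/\epsilon)$.
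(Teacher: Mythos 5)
Your proposal is correct and takes essentially the same route as the paper's proof: rotate an ancilla qubit according to $\tilde z$, apply the two preparations controlled on that qubit, and then disentangle it by uncomputing $\mathcal P_0$ without control, flipping the ancilla conditioned on the $B$ register (you condition on ``not all-zero'' where the paper conditions on ``all-zero,'' which is equivalent up to a final bit flip), and reapplying $\mathcal P_0$ — exactly the step where the paper also invokes orthogonality. Your error accounting (constantly many approximate subroutines, each run at a constant fraction of $\epsilon$) likewise mirrors the paper's $\epsilon/5$-per-step analysis.
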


\section{Simulating a Haar-random state oracle}\label{sec:state}

\subsection{The problem, and our approach}



We begin by defining the ideal object we'd like to emulate. Here we deviate slightly from the discussion above, in that we ask for the reflection oracle to also accept a (quantum) control bit.


\begin{construction}[Ideal state sampler]\label{con:state-ideal}
The ideal $n$-qubit state sampler is an ideal machine $\idealstate(n)$ with interfaces $(\Init, \Gen, \Ver, \CReflect)$, defined as follows. 
\begin{enumerate}
\item $\idealstate(n).\Init:$ takes no input; samples a description $\tilde \varphi$ of an $n$-qubit state $\ket{\varphi}$ from the Haar measure.
\item $\idealstate(n).\Gen:$ takes no input; uses $\tilde \varphi$ to prepare a copy of $\ket{\varphi}$ and outputs it.
\item $\idealstate(n).\Ver:$ receives $n$-qubit input; uses $\tilde \varphi$ to apply the measurement $\{\proj{\varphi},$ $ \one - \proj{\varphi}\}$; return the post-measurement state and output $\acc$ in the first case and $\rej$ in the second. 
\item $\idealstate(n).\CReflect:$ receives $(n+1)$-qubit input; uses $\tilde \varphi$ to implement the controlled reflection $R_\varphi := \ketbra{0}{0} \otimes \one + \ketbra{1}{1} \otimes (\one - 2\ketbra{\varphi}{\varphi})$ about $\ket{\varphi}$.
\end{enumerate}
\end{construction}

We assume that $\Init$ is called first, and only once; the remaining oracles can then be called indefinitely many times, and in any order. If this is inconvenient for some application, one can easily adjust the remaining interfaces to invoke $\Init$ if that has not been done yet. We remark that $\Ver$ can be implemented with a single query to $\CReflect$. 

\begin{lemma}\label{lem:ver-reflect}
	$\Ver$ can be simulated with one application of $\CReflect$.
\end{lemma}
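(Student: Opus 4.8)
The plan is to realize the two-outcome measurement $\{\proj\varphi, \one - \proj\varphi\}$ by a single-qubit Hadamard test around the controlled reflection $R_\varphi$, exploiting phase kickback. Concretely, to verify a state held in an input register $M$, I would first adjoin a fresh control qubit $C$ prepared in $\ket{+} = \tfrac{1}{\sqrt2}(\ket0 + \ket1)$, then invoke $\CReflect$ once with $C$ as the control and $M$ as the target, then apply a Hadamard $H$ to $C$, and finally measure $C$ in the computational basis, discarding it afterwards. The measurement outcome is reported as $\acc$ or $\rej$, and the register $M$ is returned as the post-measurement state.

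To see that this reproduces $\idealstate(n).\Ver$ exactly, I would compute the Kraus operators of the induced instrument on $M$. The reflection acts as $R_\varphi\ket{+}_C = \tfrac{1}{\sqrt2}\bigl(\ket0_C\otimes\one_M + \ket1_C\otimes(\one - 2\proj\varphi)_M\bigr)$, and applying $H$ to $C$ regroups this as
\[
H_C\, R_\varphi\ket{+}_C = \ket0_C\otimes(\one - \proj\varphi)_M + \ket1_C\otimes \proj\varphi_M .
\]
Hence projecting $C$ onto $\ket1$ yields the Kraus operator $\proj\varphi$ and projecting onto $\ket0$ yields $\one - \proj\varphi$. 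Labeling the outcome $1$ as $\acc$ and the outcome $0$ as $\rej$, the resulting instrument on $M$ has exactly the two Kraus operators $\proj\varphi$ and $\one-\proj\varphi$ of the ideal verification measurement, so both the outcome probabilities and the post-measurement states on $M$ coincide.

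There is no serious obstacle here; the argument is a standard phase-kickback computation, and the only care required is bookkeeping. In particular, since the input register $M$ may be entangled with other systems, I would phrase the equivalence at the level of quantum instruments (equality of the Kraus operators acting on $M$) rather than verifying it only on pure product inputs; equality of Kraus operators then immediately gives equality of the induced channels on any joint input by linearity. It also bears noting that the procedure uses $\CReflect$ exactly once and a single ancilla qubit that is measured and discarded, so the simulation is exact and incurs no approximation error.
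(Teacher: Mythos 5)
Your proposal is correct and is essentially identical to the paper's proof: both prepare an ancilla in $\ket{+}$, apply $\CReflect$ with the ancilla as control, apply $H$ to the ancilla, measure it, and interpret outcome $1$ as $\acc$ and $0$ as $\rej$. The only difference is that you spell out the phase-kickback computation and the resulting Kraus operators $\{\proj{\varphi}, \one - \proj{\varphi}\}$ explicitly, which the paper leaves implicit.
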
 
\begin{proof}
Prepare an ancillary qubit in the state $\ket +$ and apply the reflection on the input controlled on the ancillary qubit. Then apply $H$ to the ancilla qubit and measure it. Output all the qubits, with the ancilla interpreted as $1 = \acc$ and $0 = \rej$.
\qed \end{proof}

Our goal is to devise a stateful simulator for \expref{Construction}{con:state-ideal} which is efficient. Efficient here means that, after $t$ total queries to all interfaces (i.e., $\Init$, $\Gen$, $\Ver$, and $\CReflect$), the simulator has expended time polynomial in $n$, $t$, and $\log (1/\epsilon)$. 


As described in \expref{Section}{sec:intro-state}, our approach will be to ensure that, for every $t$, the state shared between the adversary $\algo A$ and our stateful oracle simulator $\effstate$ will be maximally entangled between two copies of the $t$-fold symmetric subspace $\BSym_{n, t}$: one held by $\algo A$, and the other by $\effstate$. The extension from the $t$-fold to the $(t+1)$-fold joint state will be performed by an isometry $V^{t \rightarrow t+1}$ which acts only on the state of $\effstate$ and two fresh $n$-qubit registers $A_{t+1}$ and $B_{t+1}$ initialized by $\effstate$. After $V$ is applied, $A_{t+1}$ will be given to $\algo A$. As we will show, $V$ can be performed efficiently using some algorithmic tools for working with symmetric subspaces, which we will develop in the next section. This will yield an efficient way of simulating $\Gen$. Simulation of $\Ver$ and $\CReflect$ will follow without much difficulty, as outlined in \expref{Section}{sec:intro-state}.

\subsection{Some tools for symmetric subspaces}

\subsubsection{A basis for the symmetric subspace.}

We recall an explicit orthonormal basis of the symmetric subspace (see, e.g., \cite{Ji2018} or \cite{Harrow13}.)  Let 
\begin{equation}
	\snt_{n,t}=\left\{\alpha\in\left(\bits^n\right)^t\Big|\alpha_1\le \alpha_2\le...\le\alpha_t\right\}
\end{equation}
be the set of lexicographically-ordered $t$-tuples of $n$ bit strings. For each $\alpha\in \snt_{n,t}$, define the unit vector
\begin{equation}
	\ket{\Sym(\alpha)}=\left(t!\prod_{x\in\bits^n}f_x(\alpha)!\right)^{-\frac 1 2}\sum_{\sigma\in S_t}\ket{\alpha_{\sigma(1)}}\ket{\alpha_{\sigma(2)}}...\ket{\alpha_{\sigma(t)}}.
\end{equation}
Here, $f_x(\alpha)$ is the number of times the string $x$ appears in the tuple $\alpha$. The set $\{\ket{\Sym(\alpha)} : \alpha \in \snt_{n,t}\}$ is an orthonormal basis for $\Sym^{t}\C^{2^n}$. We remark that the Schmidt decomposition of $\ket{\Sym(\alpha)}$ with respect to the bipartition formed by the $t$-th register vs. the rest is given by
\begin{equation}\label{eq:t-schmidt}
	\ket{\Sym(\alpha)}=\sum_{x\in\bits^n}\sqrt{\frac{f_x(\alpha)}{t}}\ket{\Sym(\alpha^{-x})}\ket x,
\end{equation}
where $\alpha^{-x}\in \snt_{n,t-1}$ is the tuple $\alpha$ with one copy of $x$ removed.

\subsubsection{Some useful algorithms.}

We now describe some algorithms for working in the above basis. Let $A$ and $B$ denote $n$-qubit registers. Recall that $A_j$ denotes indexed copies of $A$ and that $A^t$ denotes $A_1 A_2 \cdots A_t$, and likewise for $B$. In our setting, the various copies of $A$ will be prepared by the oracle simulator and then handed to the query algorithm at query time. The copies of $B$ will be prepared by, and always remain with, the oracle simulator.

\begin{proposition}
For each $n$, $t$ and $\epsilon=2^{-\mathrm{poly}(n, t)}$, there exists an efficiently implementable unitary $U^\Sym_{n, t}$ on $A^t$ such that for all $\alpha \in \snt_{n,t}$, $U^\Sym_{n, t} \ket{\alpha} = \ket{\Sym(\alpha)}$ up to trace distance $\epsilon$.
\end{proposition}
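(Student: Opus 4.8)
The plan is to build $\ket{\Sym(\alpha)}$ one tensor factor at a time, using the Schmidt decomposition \eqref{eq:t-schmidt} as the recursive engine. That identity expresses $\ket{\Sym(\alpha)}$ on $A^t$ as a superposition, over the value $x$ placed in the last register $A_t$, of $\ket{\Sym(\alpha^{-x})}\ket{x}$ with amplitude $\sqrt{f_x(\alpha)/t}$. Crucially, the branching superposition in each step ranges only over the distinct symbols occurring in $\alpha$, of which there are at most $\min(2^n,t)\le t$; so each single-register preparation has polynomial support and is covered by \expref{Lemma}{lem:poly-prepare}. The whole state, which may have exponentially many computational-basis terms, is thereby assembled in $t$ efficient steps.

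Concretely, I would first reduce to preparing $\ket{\Sym(\alpha)}$ from the \emph{type} of $\alpha$, i.e. the list of multiplicities $(f_x(\alpha))_x$, which has an efficient classical description $\tilde\alpha$ and is computable from $\alpha$ by a reversible sort-and-count. The core subroutine maintains a scratch register $D$ holding the ``remaining multiset,'' initialized to $\tilde\alpha$, and loops $k=t,t-1,\dots,1$: (i) controlled on the current contents of $D$, prepare register $A_k$ in the state $\sum_x\sqrt{f_x/k}\,\ket{x}$ via \expref{Lemma}{lem:poly-prepare} (the amplitudes are a function of $D$, and the support is at most $k$); (ii) controlled on the symbol now held in $A_k$, reversibly decrement the corresponding multiplicity in $D$. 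An induction on $k$, unfolding \eqref{eq:t-schmidt}, shows that after the loop the registers $A_1\cdots A_t$ carry exactly $\ket{\Sym(\alpha)}$ while $D$ has been reduced to the empty multiset in every branch, hence is disentangled and clean. The small worked case $t=2$, $\alpha=(0,1)$, which produces $\tfrac{1}{\sqrt2}(\ket{01}+\ket{10})$, confirms the register ordering.

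To turn this into an in-place unitary $U^\Sym_{n,t}$ on $A^t$ satisfying $U^\Sym_{n,t}\ket{\alpha}=\ket{\Sym(\alpha)}$, I would wrap the subroutine as follows: compute $\tilde\alpha$ into an ancilla $C$ from the input $\ket{\alpha}_{A^t}$; use $C$ to reversibly clear $A^t$ to $\ket{0}$; run the peeling subroutine (controlled on $C$) to write $\ket{\Sym(\alpha)}$ into $A^t$; and finally erase $C$. The erasure step is the one place that requires care, and is the main conceptual point: since every computational-basis branch of $\ket{\Sym(\alpha)}$ is a permutation of the same multiset, sorting the contents of $A^t$ coherently recovers $\alpha$ in \emph{every} branch, so XOR-ing this sorted value into $C$ returns $C$ to $\ket{0}$ without disturbing $A^t$. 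For the error analysis, each of the $t$ invocations of \expref{Lemma}{lem:poly-prepare} is run to precision $\delta=\epsilon/t$; by subadditivity of error under composition of (controlled) unitaries the total $2$-norm error is at most $\epsilon$, which dominates the trace distance. Since $\log(1/\delta)=\log(t/\epsilon)=\mathrm{poly}(n,t)$ for $\epsilon=2^{-\mathrm{poly}(n,t)}$, every step remains polynomial-time, and the overall algorithm is efficient.

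I expect the main obstacle to be the bookkeeping that guarantees the controlled preparations compose into precisely the Schmidt-decomposed state of \eqref{eq:t-schmidt}, together with the correctness of the sorting-based uncomputation of $C$; the efficiency of each individual register preparation and the error accumulation are then routine given the state-preparation lemmas.
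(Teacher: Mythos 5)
Your proposal is correct and takes essentially the same route as the paper's proof: both use the Schmidt decomposition \eqref{eq:t-schmidt} together with \expref{Lemma}{lem:poly-prepare} to assemble $\ket{\Sym(\alpha)}$ one register at a time (the paper via recursion on $\alpha^{-x}$, you via an equivalent unrolled loop that decrements a multiset register), and both erase the classical description of $\alpha$ at the end by a coherent sort, exactly as in \eqref{eq:sort}.
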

\begin{proof} 
Clearly, the operation 
\begin{equation}\label{eq:sort}
\ket{\Sym(\alpha)}\ket \beta\mapsto\ket{\Sym(\alpha)}\ket {\beta\oplus \alpha}
\end{equation}
is efficiently implementable exactly, by XORing the classical sort function of the first register into the second register. 

Let us now show that the operation $\ket{\alpha}\mapsto \ket\alpha\ket{\Sym(\alpha)}$ is also efficiently implementable (up to the desirable error) by exhibiting an explicit algorithm. We define it recursively in $t$, as follows. For $t=1$, $\Sym(x)=x$ for all $x\in\bits^n$, so this case is simply the map $\ket{x} \mapsto \ket{x}\ket{x}$. Suppose now the operation $\ket{\alpha}\mapsto\ket\alpha\ket{\Sym(\alpha)}$ can be implemented for any $\alpha\in \snt_{n,t-1}$. The $t$-th level algorithm will begin by applying
$$
\ket\alpha\mapsto \ket\alpha\sum_{x\in\bits^n}\sqrt{\frac{f_x(\alpha)}{t}}\ket x\,.
$$ 
Since $f_x(\alpha)$ is nonzero for only $t$-many $x \in \bits^n$, this can be implemented efficiently by \expref{Lemma}{lem:poly-prepare}. Next, we perform $\ket\alpha \ket x\mapsto\ket \alpha\ket x\ket{\alpha^{-x}}$. Using the algorithm for $t-1$, we then apply $\ket \alpha\ket x\ket{\alpha^{-x}}\mapsto \ket \alpha\ket x\ket{\alpha^{-x}}\ket{\Sym(\alpha^{-x})}$, and uncompute $\alpha^{-x}$. By \eqref{eq:t-schmidt}, we have in total applied $\ket{\alpha}\mapsto\ket\alpha \ket{\Sym(\alpha)}$ so far. To finish the $t$-th level algorithm for approximating $\ket{\alpha} \mapsto \ket{\Sym(\alpha)}$, we simply apply \eqref{eq:sort} to uncompute $\alpha$ from the first register.
\qed \end{proof}

\begin{theorem}[Restatement of \expref{Theorem}{thm:sym-intro}]\label{thm:sym-increment}
For each $n$, $t$ and $\epsilon=2^{-\mathrm{poly}(n, t)}$, there exists an efficiently implementable isometry $V^{t \to t+1}$ from $B^t$ to $A_{t+1}B^{t+1}$ such that, up to trace distance $\epsilon$,
$$
V : \sum_{\alpha \in {S_{n,t}^\uparrow}} \ket{\Sym(\alpha)}_{A^t}\ket{\Sym(\alpha)}_{B^t}
\longmapsto \sum_{\beta \in {S_{n,t+1}^\uparrow}} \ket{\Sym(\beta)}_{A^{t+1}}\ket{\Sym(\beta)}_{B^{t+1}}\,.
$$
\end{theorem}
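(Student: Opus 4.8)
The plan is to realize $V$ in the factored form suggested in the introduction: prepare two fresh registers $A_{t+1},B_{t+1}$ in $\ket{0^n}$ and then apply a unitary $U^{t\to t+1}$ on $A_{t+1}B^{t+1}$. Writing $\ket{\Phi_t}=\sum_{\alpha\in\snt_{n,t}}\ket{\Sym(\alpha)}_{A^t}\ket{\Sym(\alpha)}_{B^t}$ for the input, I would reduce the construction of $U^{t\to t+1}$ to a single controlled state-preparation by sandwiching it between applications of the symmetrizing unitary. First I would apply $(U^{\Sym}_{n,t})^\dagger$ to $B^t$, turning each $\ket{\Sym(\alpha)}_{B^t}$ into its sorted classical label $\ket{\alpha}_{B^t}$ and bringing the shared state to $\sum_{\alpha\in\snt_{n,t}}\ket{\Sym(\alpha)}_{A^t}\ket{\alpha}_{B^t}$; having $\alpha$ available classically on the $B$ side is what makes everything that follows computable. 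The end goal, after re-applying $U^{\Sym}_{n,t+1}$ to $B^{t+1}$ at the very end, is the intermediate state $\sum_{\beta\in\snt_{n,t+1}}\ket{\Sym(\beta)}_{A^{t+1}}\ket{\beta}_{B^{t+1}}$, so the whole problem reduces to growing the label tuple by one slot while simultaneously growing the $A$-side symmetric vector.

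The combinatorial heart is the Schmidt recursion \eqref{eq:t-schmidt}. Expanding the target $\ket{\Sym(\beta)}_{A^{t+1}}$ along its last tensor factor and reindexing by $\beta=\alpha+y$ (the tuple obtained by inserting $y\in\bits^n$ into the sorted $\alpha$, so that $f_y(\beta)=f_y(\alpha)+1$) shows it suffices to implement, controlled on the classical $\alpha$ in $B^t$, the isometry
\[
\ket{\alpha}_{B^t}\ket{0^n}_{A_{t+1}}\ket{0^n}_{B_{t+1}}\;\longmapsto\;\sum_{y\in\bits^n}\sqrt{\tfrac{f_y(\alpha)+1}{t+2^n}}\,\ket{y}_{A_{t+1}}\ket{\alpha+y}_{B^{t+1}},
\]
where the second step (writing $y$ into $B_{t+1}$ and sorting $B^{t+1}$ to hold $\alpha+y$) is reversible classical bookkeeping. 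The amplitudes form a genuine unit vector in $y$ since $\sum_y f_y(\alpha)=t$, and each per-$\alpha$ block is norm-preserving; this is exactly the consistency condition ($\mathrm{Tr}_{t+1}\Pi_{\BSym_{n,t+1}}=\tfrac{t+2^n}{t+1}\Pi_{\BSym_{n,t}}$, equivalently equality of the $A^t$-marginals of the normalized $\ket{\Phi_t}$ and $\ket{\Phi_{t+1}}$) that guarantees a $B$-side isometry exists despite the norm mismatch $\|\ket{\Phi_{t+1}}\|^2/\|\ket{\Phi_t}\|^2=(2^n+t)/(t+1)$. Collecting the terms with a fixed $\beta$ reassembles $\ket{\Sym(\beta)}_{A^{t+1}}$ via \eqref{eq:t-schmidt}, with the global factor working out precisely to the normalization $1/\sqrt{\dim\BSym_{n,t+1}}$, so the construction is exact at infinite precision.

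The hard part is performing the controlled preparation efficiently, since the superposition over $y$ ranges over all of $\bits^n$. Here I would exploit that the amplitude is non-uniform only on the at most $t$ distinct values occurring in $\alpha$: split the target into a \emph{collision} part supported on $\{y:f_y(\alpha)\ge1\}$, which has polynomial support and is prepared from the classical description of $\alpha$ by \expref{Lemma}{lem:poly-prepare}, and a \emph{fresh} part, a uniform superposition over the complement $\bits^n\setminus\{y:f_y(\alpha)\ge1\}$ prepared by \expref{Lemma}{lem:almost-uniform-prepare}. These two parts are supported on disjoint sets of $y$, hence orthogonal, so \expref{Lemma}{lem:super-prepare} combines them with the classically computable weights $z_0,z_1$ into the desired state. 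Every remaining ingredient---$(U^{\Sym}_{n,t})^\dagger$, $U^{\Sym}_{n,t+1}$, and the insert-and-sort routine---is either the efficiently implementable symmetrizing unitary or a reversible classical computation, hence efficient.

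Finally I would bound the error by the triangle inequality. The map is a composition of a constant number of stages, of which the three approximate ones (the two symmetrizing unitaries and the controlled preparation, itself a constant number of applications of the preparation lemmas) can each be run at internal precision $\epsilon/\mathrm{poly}(n,t)$; since trace distance is controlled by the Euclidean deviation of the output vectors, the total error stays at most $\epsilon=2^{-\mathrm{poly}(n,t)}$ while the running time remains polynomial in $n$, $t$, and $\log(1/\epsilon)$. I expect the efficient preparation of the split collision/fresh superposition to be the only genuinely delicate point; everything else is bookkeeping driven by \eqref{eq:t-schmidt}.
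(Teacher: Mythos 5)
Your proposal is correct and follows essentially the same route as the paper's proof: unsymmetrize $B^t$ via $(U^{\Sym}_{n,t})^\dagger$, coherently prepare the superposition $\sum_y\sqrt{(1+f_y(\alpha))/(2^n+t)}\ket{y}$ by splitting it into the polynomial-support part (\expref{Lemma}{lem:poly-prepare}) and the uniform-complement part (\expref{Lemma}{lem:almost-uniform-prepare}) combined with \expref{Lemma}{lem:super-prepare}, insert $y$ into $\alpha$ reversibly, reapply $U^{\Sym}_{n,t+1}$, and verify correctness by reindexing over $(\beta,y)$ pairs via \eqref{eq:t-schmidt}. Your explicit remark on the marginal-consistency condition (the norm ratio $(2^n+t)/(t+1)$) is a nice clarification of what the paper handles implicitly through the dimension-ratio prefactor, but the construction and error analysis are the same.
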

\begin{proof} 
We describe the algorithm assuming all steps can be implemented perfectly. It is straightforward to check that each step we use can in reality be performed to a sufficient accuracy that the accuracy of the entire algorithm is at least $\epsilon$. 

We will need a couple of simple subroutines. First, given $\alpha \in \snt_{n,t}$ and $x \in \{0,1\}^n$, we define $\alpha^{+x}$ to be the element of $\snt_{n, t+1}$ produced by inserting $x$ at the first position such that the result is still lexicographically ordered. One can perform this reversibly via $\ket{\alpha}\ket{0^n}\ket{x} \mapsto \ket{\alpha}\ket{x}\ket{x} \mapsto \ket{\alpha^{+x}}\ket{x}$.

Second, we will need to do coherent preparation of the state
\begin{equation}
\ket{\psi_\alpha}=\sum_{x\in\bits^n}\sqrt{\frac{1+f_x(\alpha)}{2^n+t}}\ket x\,.
\end{equation}
For any given $\alpha \in \snt_{n, t}$, the state $\ket{\psi_\alpha}$ can be prepared by using the preparation circuit for the two orthogonal components of the state whose supports are $\{x : f_x(\alpha) > 0\}$ and $\{x : f_x(\alpha) = 0\}$. These two components can also be prepared coherently using \expref{Lemma}{lem:poly-prepare} and \expref{Lemma}{lem:almost-uniform-prepare}, respectively. Their superposition can be prepared with \expref{Lemma}{lem:super-prepare}. Putting it all together, we get an algorithm for $\ket{\alpha}\ket{0^n} \mapsto \ket{\alpha}\ket{\psi_\alpha}$.

The complete algorithm is a composition of several efficient routines. We describe this below, explicitly calculating the result for the input states of interest. For readability, we omit overall normalization factors.
\begin{align*}
    &\sum_{\alpha} \ket{\Sym(\alpha)}_{A^t}\ket{\Sym(\alpha)}_{B^t} &&~\\
\longmapsto
&\sum_{\alpha} \ket{\Sym(\alpha)}_{A^t} \ket{0^n} \ket{\Sym(\alpha)}_{B^t} \ket{0^n}
 && \text{add working registers}\\
\longmapsto
&\sum_{\alpha} \ket{\Sym(\alpha)}_{A^t}  \ket{0^n} \ket{\alpha}_{B^t} \ket{0^n}
 && \text{apply $\bigl(U^\Sym_{n, t}\bigr)^\dagger$ to $B^t$}\\
 \longmapsto
 &\sum_{\alpha, x} \sqrt{\frac{1+f_x(\alpha)}{2^n+t}}\ket{\Sym(\alpha)}_{A^t} \ket{x}  \ket{\alpha}_{B^t} \ket{0^n}
 && \text{prepare }\ket{\psi_{\alpha}}\\
\longmapsto
&\sum_{\alpha, x} \sqrt{\frac{1+f_x(\alpha)}{2^n+t}}\ket{\Sym(\alpha)}_{A^t} \ket{x} \ket{\alpha^{+x}}_{B^{t+1}} 
 && \text{insert $x$ into $\alpha$}\\
\longmapsto
&\sum_{\alpha, x} \sqrt{\frac{1+f_x(\alpha)}{2^n+t}}\ket{\Sym(\alpha)}_{A^t}\ket{x}_{A_{t+1}}\ket{\Sym(\alpha^{+x})}_{B^{t+1}} 
 && \text{apply $U^\Sym_{n, t+1}$ to $B^{t+1}$}
\end{align*}
To see that the last line above is the desired result, we observe that we can index the sum in the last line above in a more symmetric fashion: the sum is just taken over all pairs $(\alpha, \beta)$ such that the latter can be obtained from the former by adding one entry (i.e., the string $x$). But that is the same as summing over all pairs $(\alpha, \beta)$, such that the former can be obtained from the latter by \emph{removing} one entry.
\begin{align*}
&\quad\,\sum_{\alpha, x} \sqrt{\frac{1+f_x(\alpha)}{2^n+t}}\ket{\Sym(\alpha)}_{A^t}\ket{x}_{A_{t+1}}\ket{\Sym(\alpha^{+x})}_{B^{t+1}}\\
 &=\sum_{\beta, x} \sqrt{\frac{f_x(\beta)}{2^n+t}}\ket{\Sym(\beta^{-x})}_{A^t}\ket{x}_{A_{t+1}}\ket{\Sym(\beta)}_{B^{t+1}}\\
&=\sqrt{\frac{t}{2^n+t}}\sum_{\beta} \left(\sum_x\sqrt{\frac{f_x(\beta)}{t}}\ket{\Sym(\beta^{-x})}_{A^t}\ket{x}_{A_{t+1}}\right)\ket{\Sym(\beta)}_{B^{t+1}}\\
&=\sqrt{\frac{t}{2^n+t}}\sum_{\beta} \ket{\Sym(\beta)}_{A^{t+1}}\ket{\Sym(\beta)}_{B^{t+1}}.
\end{align*}
Here, the last equality is \eqref{eq:t-schmidt}, and the prefactor is the square root of the quotient of the dimensions of the $t$- and $(t+1)$-copy symmetric subspaces, as required for a correct normalization of the final maximally entangled state.
\qed \end{proof}

\subsection{State sampler construction and proof}


\begin{construction}[Efficient state sampler]\label{con:state-sampler}
Let $n$ be a positive integer and $\epsilon$ a negligible function of $n$. The efficient $n$-qubit state sampler with precision $\epsilon$ is a stateful machine $\effstate(\epsilon,n)$ with interfaces $(\Init, \Gen, \Reflect)$, defined below. For convenience, we denote the query counters by $t = t_\Gen$ and $q = t_\Reflect$ in the following.
\begin{enumerate}
\item $\effstate(\epsilon,n).\Init:$ prepares the standard maximally entangled state $\ket{\phi^+}_{A_1B_1}$ on $n$-qubit registers $A_1$ and $B_1$, and stores both $A_1$ and $B_1$.
\item $\effstate(\epsilon,n).\Gen:$ On the first query, outputs register $A_1$. On query $t$, takes as input registers $B^{t-1}$ and produces registers $A_tB^t$ by applying the isometry $V^{t-1 \to t}$ from \expref{Theorem}{thm:sym-increment} with accuracy $\epsilon2^{-(t+2q)}$; then it outputs $A_t$ and stores $B^t$.
\item $\effstate(\epsilon,n).\CReflect:$ On query $q$ with input registers $CA^*$, do the following controlled on the qubit register $C$: apply $\left(U^{t-1 \to t}\right)^\dagger$, a unitary implementation of $V^{t-1 \to t}$, with accuracy $\epsilon2^{-(t+2(q-1))}$, in the sense that $V^{t-1 \to t}=U^{t-1 \to t}\ket{0^{2n}}_{A_tB_t}$, with $A^*$ playing the role of $A_t$. Subsequently, apply a phase $-1$ on the all-zero state of the ancilla registers $A_t$ and $B_t$, and reapply $U^{t-1 \to t}$, this time with accuracy $\epsilon2^{-(t+2(q-1)+1)}$. 
\end{enumerate}
\end{construction}

We omitted defining $\effstate.\Ver$ since it is trivial to build from $\CReflect$, as described in \expref{Lemma}{lem:ver-reflect}. By \expref{Theorem}{thm:sym-increment}, the runtime of $\effstate(\epsilon,n)$ is polynomial in $n$, $\log(1/\epsilon)$ and the total number of queries $q$ that are made to its various interfaces.

We want to show that the above sampler is indistinguishable from the ideal sampler to any oracle algorithm, in the following sense. Given a stateful machine $\algo C \in \{\idealstate(n), \effstate(n, \epsilon)\}$ and a (not necessarily efficient) oracle algorithm $\algo A$, we define the process $b \from \algo A^{\algo C}$ as follows:
\begin{enumerate}
\item $\algo C.\Init$ is called;
\item $\algo A$ receives oracle access to $\algo C.\Gen$ and $\algo C.\CReflect$;
\item $\algo A$ outputs a bit $b$\,.
\end{enumerate}
\begin{theorem}\label{thm:state-sampler}
For all oracle algorithms $\algo A$ and all $\epsilon>0$ that can depend on $n$ in an arbitrary way,
\begin{equation}\label{eq:state-thm}
\left|\Pr\left[\algo A^{\idealstate(n)} = 1\right] - \Pr\left[\algo A^{\effstate(n, \epsilon)} = 1\right]\right| \leq \epsilon\,. 
\end{equation}
\end{theorem}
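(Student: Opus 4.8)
The plan is to factor the bound through an intermediate \emph{exact} machine $\mathfrak{PS}$, identical to \expref{Construction}{con:state-sampler} except that every isometry and unitary is realized perfectly (i.e.\ $\epsilon=0$ in \expref{Theorem}{thm:sym-increment}). I would then prove two statements and combine them by the triangle inequality: (i) $\idealstate(n)$ and $\mathfrak{PS}$ are \emph{perfectly} indistinguishable to every oracle algorithm, and (ii) $\mathfrak{PS}$ and $\effstate(n,\epsilon)$ induce output distributions that differ by at most $\epsilon$.

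For (i), the central device is an inductive invariant maintained across any adaptive, arbitrarily interleaved interaction. After purifying $\idealstate$'s Haar sampling into a reference register, both machines keep a global pure state, and since the adversary's final bit depends only on its own registers it suffices to show the adversary-reduced state evolves identically. Writing $t$ for the number of $\Gen$ queries so far, the invariant is that the joint state of the handed-out registers $A^t$ and the memory $B^t$ equals
\[
\ket{\Phi_t}:=\tfrac{1}{\sqrt{\dim\BSym_{n,t}}}\textstyle\sum_{\alpha\in\snt_{n,t}}\ket{\Sym(\alpha)}_{A^t}\ket{\Sym(\alpha)}_{B^t}\,,
\]
tensored with the adversary's private workspace. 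Tracing out $B^t$ gives exactly $\tau_t\propto\Pi_{\Sym^t}$, the reduced state produced by $t$ queries to $\idealstate.\Gen$, so the identification of $\mathfrak{PS}$'s memory with the ideal reference makes the two reduced states agree. The only thing to check is that each interface \emph{preserves} the invariant; for $\Gen$ this is precisely \expref{Theorem}{thm:sym-increment}, which says $\mathfrak{PS}.\Gen$ carries $\ket{\Phi_t}$ to $\ket{\Phi_{t+1}}$, extending the purification consistently.

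The main obstacle is showing that $\CReflect$—and hence $\Ver$, via \expref{Lemma}{lem:ver-reflect}—preserves the invariant while acting on the submitted register exactly as the ideal controlled reflection $R_\varphi$. Conditioned on the control, the efficient operation applies $U^{t-1\to t}(\one-2\proj{0^{2n}}_{A_tB_t})(U^{t-1\to t})^\dagger$ to $A_tB^t$; using $V^{t-1\to t}=U^{t-1\to t}\ket{0^{2n}}_{A_tB_t}$ this equals $\one-2\,V^{t-1\to t}(V^{t-1\to t})^\dagger$, where $V^{t-1\to t}(V^{t-1\to t})^\dagger$ projects onto $\mathrm{im}(V^{t-1\to t})$. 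As a sanity check on the genuine case, $(U^{t-1\to t})^\dagger\ket{\Phi_t}=\ket{0^{2n}}_{A_tB_t}\ket{\Phi_{t-1}}$, so a faithfully returned copy verifies with probability $1$, matching the ideal. The real content is to show that, on the maintained global state $\ket{\Phi_t}$ and for an \emph{arbitrary, possibly adversarially correlated} submitted register, $\one-2\,V^{t-1\to t}(V^{t-1\to t})^\dagger$ acts on that register as the reflection $\one-2\proj{\varphi}$ about the Haar state stored in the entanglement with $B$; intuitively, $\mathrm{im}(V^{t-1\to t})$ is precisely the set of states in which the new register holds the stored Haar state. I would establish this either by expanding $V^{t-1\to t}$ on the explicit basis via the Schmidt recursion \eqref{eq:t-schmidt} and the state $\ket{\psi_\alpha}$ from the proof of \expref{Theorem}{thm:sym-increment}, or more slickly by pushing the projector $\proj{\varphi}$ across the maximal entanglement with the mirror lemma (\expref{Lemma}{lem:mirror}). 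I must also confirm that afterward the global state returns to the invariant form so that later $\Gen$ queries remain faithful; linearity in the control register then upgrades the uncontrolled identity to $R_\varphi$. This is where the specific construction, rather than a generic purification argument, does the work.

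Statement (ii) I expect to be routine. Replacing the exact isometries of $\mathfrak{PS}$ by the approximate implementations of \expref{Construction}{con:state-sampler} one query at a time perturbs the global state in trace distance by at most the realization accuracy of that query's isometry, and trace distance is contractive under the adversary's operations and accumulates additively. Because the per-query budgets $\epsilon\,2^{-(t+2q)}$ (for $\Gen$) and $\epsilon\,2^{-(t+2(q-1))},\,\epsilon\,2^{-(t+2(q-1)+1)}$ (for $\CReflect$) have exponents that strictly increase along any interaction, their total sums to a geometric series bounded by $\epsilon$. Combining (i) and (ii) with the triangle inequality yields \eqref{eq:state-thm}.
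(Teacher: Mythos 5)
Your step (ii) — peeling off the approximation error query by query and summing the geometric series — is exactly the first paragraph of the paper's proof, so that part is fine. The divergence, and the gap, is in step (i). The paper never compares $\effstate(0,n)$ to $\idealstate(n)$ directly. Instead, for each $q$ it introduces a third machine $\hat\effstate(0,q,n)$ built from an exact unitary $2q$-design (sample $U_i$; answer $\Gen$ with $U_i\ket{0}$; answer reflections with $\one-2U_i\proj{0}U_i^\dagger$), which is perfectly equivalent to $\idealstate(n)$ for all adversaries making at most $q$ queries \emph{by the design property}; it then exhibits a partial isometry $V^{\mathrm{switch},t}$, acting only on the bank's registers, that intertwines both the $\Gen$ and the reflection operations of $\effstate(0,n)$ and $\hat\effstate(0,q,n)$; finally it uses the fact that $\effstate(0,n)$ does not depend on $q$ to remove the query bound. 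Your proposal replaces all of this by an invariant-preservation argument against $\idealstate$ itself, and the step that carries the entire weight — that $\one-2V^{t-1\to t}\bigl(V^{t-1\to t}\bigr)^\dagger$, applied to an arbitrary adversarially-correlated register together with \emph{all} of the bank's registers, reproduces the Haar-averaged action of $\one-2\proj{\varphi}$ — is precisely the content of the theorem, and you assert it rather than prove it.

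Neither of your two suggested methods is likely to close this. First, the invariant you induct on is destroyed by the very queries at issue: $\CReflect$ acts jointly on the submitted register and on $B^t$, so after a single reflection of a register the adversary prepared itself, the global state is no longer of the form (adversary-local operation)$\,\otimes\,\one_{B^t}$ applied to $\ket{\Phi_t}$, and the induction has no hypothesis left to continue from. Restoring the invariant is not a side condition to ``confirm''; it is equivalent to the main claim. Second, ``the Haar state stored in the entanglement with $B$'' is not a well-defined object: in the ideal world the bank holds a \emph{classical} record of $\varphi$ which $R_\varphi$ leaves untouched and which stays classically correlated with every banknote, whereas the efficient reflection genuinely acts on, and re-entangles, the bank's quantum register. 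Relating the two is a statement about Haar moments in which both $U$ and $U^\dagger$ appear (each reflection contributes one $\ket{\varphi}$ and one $\bra{\varphi}$), and neither the mirror lemma — which only transports operators across maximal entanglement — nor the Schmidt recursion \eqref{eq:t-schmidt} — which only concerns the ``ket'' side of the symmetric subspace — supplies those moment identities. Supplying them is exactly the role of the $2q$-design machine in the paper's proof. To see that this is not a formality, run your own sanity check on a dishonest input: one $\Gen$ query, then a reflection of a fresh register prepared in $\ket{0^n}$, then $\Ver$ on the original banknote. In the ideal world the banknote still verifies with certainty and the joint state of (banknote, reflected register) is a specific mixed state determined by \emph{third} Haar moments; deriving this from the maximal-entanglement picture alone, with no moment-matching input, is where your argument has nothing to work with — and where any correct proof must do real work.
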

\begin{proof}

During the execution of $\effstate(\epsilon, n)$, the $i$-th call of $V^{t-1 \to t}$ (for any $t$) incurs a trace distance error of at most $\epsilon2^{-i}$. The trace distance between the outputs of $\algo A^\effstate(\epsilon, n)$ and $\algo A^\effstate(0, n)$ is therefore bounded by 
$\sum_{i=1}^\infty \epsilon2^{-i}=\epsilon$. It is thus sufficient to establish the theorem for $\effstate(0, n)$. 

For any fixed $q$, there exists a stateful machine $\hat\effstate(0, q, n)$ which is perfectly indistinguishable from $\idealstate(n)$ to all adversaries who make a maximum total number $q$ of queries. The \Init procedure of $\hat\effstate(0, q, n)$ samples a random element $U_i$ from an exact unitary $2q$-design $D^{2q}=\{U_i\}_{i\in I}$. Queries to \Gen are answered with a copy of $U_i\ket 0$, and \Reflect is implemented by applying $\mathds 1-2U_i\proj 0U_i^\dagger$. It will be helpful to express $\hat\effstate(0, q, n)$ in an equivalent isometric form. In this form, the initial oracle state is
	\begin{equation}
	\ket\eta=|I|^{-1/2}\sum_{i\in I}\ket i_{\hat B}\,.
	\end{equation}
\Gen queries are answered using the $\hat B$-controlled isometry 
	\begin{equation}
	\hat V^{t\to t+1}_{\hat B\to \hat B A_{t+1}}=\sum_{i\in I}\proj i_{\hat B}\otimes U_i\ket 0_{A_{t+1}}\,.
	\end{equation}
\Reflect queries are answered by
	\begin{align}
	\hat V^{\Reflect}_{\hat BA^*\to \hat B A^*}=&\mathds 1-2\sum_{i\in I}\proj i_{\hat B}\otimes U_i\proj 0_{A^*} U_i^\dagger\\
	=&\mathds 1-2  \hat V^{t\to t+1}_{\hat B\to \hat B A^*}\left(\hat V^{t\to t+1}\right)^\dagger_{\hat BA^*\to \hat B }\,.
	\end{align}

Now suppose $\algo A$ is an arbitrary (i.e., not bounded-query) algorithm making only $\Gen$ queries. We will show that after $q$ queries, the oracles $\effstate(0,n)$ and $\hat\effstate(0, q, n)$ are equivalent, and that this holds for all $q$. We emphasize that $\effstate(0,n)$ does not depend on $q$; as a result, we can apply the equivalence for the appropriate total query count $q_\mathsf{total}$ after $\algo A$ has produced its final state, even if $q_\mathsf{total}$ is determined only at runtime. It will thus follow that $\effstate(0,n)$ is equivalent to $\idealstate(n)$.

To show the equivalence betwen $\effstate(0,n)$ and $\hat\effstate(0,q,n)$, we will demonstrate a partial isometry $V^{\mathrm{switch},t}$ that transforms registers $B^t$ of ${\effstate(0, n)}$ (after $t$ \Gen queries and no \Reflect queries) into the register $\hat B$ of $\hat\effstate(0,q, n)$, in such a way that the corresponding global states on $A^tB^t$ and $A^t \hat B$ are mapped to each other. The isometry is partial because its domain is the symmetric subspace of ${\C^{2^n}}^{\otimes t}$. It is defined as follows:
	\begin{equation}
		V^{\mathrm{switch},t}_{B^t\to\hat B}=\sqrt{\frac{d_{\mathrm{Sym}^t\C^d{2^n}}}{|I|}}\sum_{i\in I} \left(\bra 0 U^T_i\right)^{\otimes t}_{B^t}\otimes\ket i_{\hat B}\,.
	\end{equation}

To verify that this is indeed the desired isometry, we calculate:
	\begin{align}
		\left(\bra 0 U^T_i\right)^{\otimes t}_{B^t}\ket{\phi^+_{\mathrm{Sym}}}_{A^tB^t}=&\sqrt{\frac{2^{nt}}{d_{\mathrm{Sym}^t\C^{2^n}}}}\left(\bra 0 U^T_i\right)^{\otimes t}_{B^t}\Pi^{\mathrm{Sym}}_{B^t}\ket{\phi^+}_{A^tB^t}\\
		=&\sqrt{\frac{2^{nt}}{d_{\mathrm{Sym}^t\C^{2^n}}}}\left(\bra 0 U^T_i\right)^{\otimes t}_{B^t}\ket{\phi^+}_{A^tB^t}\\
		=&\sqrt{\frac{2^{nt}}{d_{\mathrm{Sym}^t\C^{2^n}}}}\left(\bra 0 \right)^{\otimes t}_{B^t}\otimes \left(U_i\right)^{\otimes t}_{A^t}\ket{\phi^+}_{A^tB^t}\\
		=&\sqrt{\frac{1}{d_{\mathrm{Sym}^t\C^{2^n}}}} \left(U_i\ket 0\right)^{\otimes t}_{A^t}.
	\end{align} 
	Here we have used the fact that $\left(\bra 0 U^T_i\right)^{\otimes t}$ is in the symmetric subspace in the second equality, and the third and forth equality are applications of the Mirror Lemma (\expref{Lemma}{lem:mirror}) with $d=d'=2^{nt}$, and $d=1,\ d'=2^{nt}$, respectively. 

	We have hence proven the exact correctness of $\effstate(0, n)$ without the \Reflect interface. Note that the global state after $t$ queries to $\effstate(0, n).\Gen$ is  the maximally entangled state of two copies of the $t$-fold symmetric subspace; of course, this is only true up to actions performed by the adversary, but those trivially commute with maps applied only to the oracle registers. As the global state is in the domain of $V^{\mathrm{switch},t}_{B^t\to\hat B}$, we obtain the equation
	\begin{equation}\label{eq:intertwine}
			\hat V^{t\to t+1}_{\hat B\to \hat B A_{t+1}}V^{\mathrm{switch},t}_{B^t\to\hat B}=V^{\mathrm{switch},t+1}_{B^{t+1}\to\hat B}V^{t\to t+1}_{B^t\to B^{t+1} A_{t+1}}\,.
	\end{equation}
More precisely, we observe that the two sides of the above have the same effect on the global state, and then conclude that they must be the same operator by the Choi-Jamoi\l kowski isomorphism. 

Recalling that $V^{\mathrm{switch},t}$ is partial with the symmetric subspace as its domain, we see that Equation \eqref{eq:intertwine} is equivalent to
	 \begin{align}
	 \left(V^{\mathrm{switch},t+1}_{B^{t+1}\to\hat B}\right)^\dagger\hat V^{t\to t+1}_{\hat B\to \hat B A_{t+1}}V^{\mathrm{switch},t}_{B^t\to\hat B}=&\Pi^{\Sym^{t+1}\C^{2^n}}_{B^{t+1}}V^{t\to t+1}_{ B^t\to B^{t+1} A_{t+1}}\\
	 =&V^{t\to t+1}_{ B^t\to B^{t+1} A_{t+1}}\Pi^{\Sym^{t}\C^{2^n}}_{B^{t}}\,.
	 \end{align}

By taking the above equality times its adjoint, we arrive at
	 \begin{align}
&	\left(V^{\mathrm{switch},t}_{B^t\to\hat B}\right)^\dagger\left(\hat V^{t\to t+1}_{\hat B\to \hat B A_{t+1}}\right)^\dagger V^{\mathrm{switch},t+1}_{B^{t+1}\to\hat B} \left(V^{\mathrm{switch},t+1}_{B^{t+1}\to\hat B}\right)^\dagger\hat V^{t\to t+1}_{\hat B\to \hat B A_{t+1}}V^{\mathrm{switch},t}_{B^t\to\hat B}\nonumber\\
=&\Pi^{\Sym^{t}\C^{2^n}}_{B^{t}}\left(V^{t\to t+1}_{ B^t\to B^{t+1} A_{t+1}}\right)^\dagger V^{t\to t+1}_{ B^t\to B^{t+1} A_{t+1}}\Pi^{\Sym^{t}\C^{2^n}}_{B^{t}}.
	 \end{align}
By Equation \eqref{eq:intertwine}, the range of $\hat V^{t\to t+1}_{\hat B\to \hat B A_{t+1}}V^{\mathrm{switch},t}_{B^t\to\hat B}$ is contained in the range of $V^{\mathrm{switch},t+1}_{B^{t+1}\to\hat B}\otimes \mathds 1_{A_{t+1}}$. We can thus simplify as follows:
	 \begin{align}\label{eq:switch}
	 \left(V^{\mathrm{switch},t}_{B^t\to\hat B}\right)^\dagger&\left(\hat V^{t\to t+1}_{\hat B\to \hat B A_{t+1}}\right)^\dagger \hat V^{t\to t+1}_{\hat B\to \hat B A_{t+1}}V^{\mathrm{switch},t}_{B^t\to\hat B}\nonumber\\
	 =&\Pi^{\Sym^{t}\C^{2^n}}_{B^{t}}\left(V^{t\to t+1}_{ B^t\to B^{t+1} A_{t+1}}\right)^\dagger V^{t\to t+1}_{ B^t\to B^{t+1} A_{t+1}}\Pi^{\Sym^{t}\C^{2^n}}_{B^{t}}.
	 \end{align}
Now observe that both sides of the above consist of a projection operator ``sandwiched'' by some operation. These two projection operators are precisely the projectors which define the reflection operators of $\hat\effstate(0,q,n)$ (on the left-hand side) and $\effstate(0,n)$ (on the right-hand side.) We thus see that Equation \eqref{eq:switch} shows that applying $\effstate(0,n).\Reflect$ is the same as switching to $\hat\effstate(0,q,n)$, applying $\hat\effstate(0,q,n).\Reflect$, and then switching back to $\effstate(0,n)$. The same holds for the controlled versions $\effstate(0,n).\CReflect$ and $\hat\effstate(0,n).\CReflect$.

This completes the proof of the exact equality between the stateful machines $\idealstate(n)$ and $\effstate(0,n)$. As argued at the start of the proof, the approximation case follows.
\qed \end{proof}

\section{Simulating a Haar-random unitary oracle}\label{sec:unitary}

\subsection{The problem, and our approach}

We begin by defining the ideal object we'd like to emulate. This ideal object samples a Haar-random unitary $U$, and then answers two types of queries: queries to $U$, and queries to its inverse $U^\dagger$.

\begin{construction}[Ideal unitary sampler]\label{con:unitary-ideal}
Let $n$ be a positive integer. The ideal unitary sampler is an ideal machine $\idealunitary(n)$ with interfaces $(\Init, \Eval, \Invert)$, defined as follows.
\begin{enumerate}
\item $\idealunitary(n).\Init:$ takes no input; samples a description $\tilde U$ of a Haar-random $n$-qubit unitary operator $U$.
\item $\idealunitary(n).\Eval:$ takes $n$-qubit register as input, applies $U$ and responds with the output;
\item $\idealunitary(n).\Invert:$ takes $n$-qubit register as input, applies $U^{-1}$ and responds with the output.
\end{enumerate}
\end{construction}

Below, we construct a stateful machine that runs in polynomial \emph{space} (and the runtime of which we don't characterize), and that is indistinguishable from $\idealunitary(n)$ for arbitrary query algorithms.


%

\subsubsection{Our approach.}  It turns out that the solution of a much easier task comes to our help, namely simulating a Haar random unitary for an algorithm that makes an \emph{a priori} polynomially bounded number  $t$ of queries. In this case we can just pick a unitary $t$-design, sample an element from it and answer the up to $t$ queries using this element. As in the proof of Theorem \ref{thm:state-sampler}, we can also construct an isometric stateful machine version of this strategy: Instead of sampling a random element from the $t$-design, we can prepare a quantum register in a superposition, e.g. over the index set of the $t$-design (\Init), and then apply the $t$-design element (\Eval) or its inverse (\Invert) controlled on that register.

Now consider an algorithm that makes $t$ parallel queries to a Haar random unitary (for ease of exposition let us assume here that the algorithm makes no inverse queries). The effect of these $t$ parallel queries is just the application of the $t$-twirling channel (or the mixed twirling channel defined in Equation \eqref{eq:mixed}) to the $t$ input registers. The $t$-design-based isometric stateful machine simulates this $t$-twirling channel faithfully. What is more, it applies a Stinespring dilation of the $t$-twirling channel, the dilating register being the one created by initialization.

Now suppose we have answered $t$ queries using the $t$-design-based machine, and are now asked to answer another, still parallel, query. Of course we cannot, in general, just answer it using the $t$-design, as its guarantees only hold for $t$ applications of the unitary. But all Stinespring dilations of a quantum channel are equivalent in the sense that there exists a (possibly partial) isometry acting on the dilating register of one given dilation, that transforms it into another given dilation. So we can just apply an isometry that transforms our $t$-design based Stinespring dilation into a $t+1$-design based one, and subsequently answer the $t+1$st query using a controlled unitary.

\subsection{Construction and proof}




We continue to describe a stateful machine that simulates $\idealunitary(n)$ exactly
and has a state register of size polynomial in $n$ and the total number of queries $q$ that an algorithm makes to its \Eval and \Invert interfaces. The existence of the required unitary $t$-designs is due to \expref{Corollary}{cor:udesigns}.

We recall our conventions for dealing with many copies of fixed-sized registers. We let $A$ denote an $n$-qubit register, we let $A_j$ denote indexed copies of $A$, and we let $A^t$ denote $A_1 A_2 \cdots A_t$. In this case, the various copies of $A$ will be the input registers of the adversary, on which the simulator will act. The oracle will now hold a single register $\hat B_t$ whose size will grow with the number of queries $t$. This register holds an index of an element in a $t$-design.

For the construction below, we need the following quantum states and operators. For a positive integer $n$, choose a family of $n$-qubit  unitary designs $\{D_{t}\}_{t\in\N}$, where $D_{t}=\{U_{t,i}\}_{i\in I_{t}}$ is a unitary $t$-design. Let $\hat B_t$ be a register of dimension $|I_{t}|$ and define the uniform superposition over indices
\begin{equation}
\ket{\eta_t}_{\hat B_t}=\frac{1}{\sqrt{\left|I_{t}\right|}}\sum_{i\in I_{t}}\ket i_{\hat B_t}.
\end{equation}
For nonnegative integers $t, t', \ell$, define the unitaries 
\begin{align}
V^{(t,t',\ell)}_{A^{t'}\hat B_t}&=\sum_{i\in I_{t}}\left(U_{t,i}\right)^{\otimes \ell}_{A_1A_2...A_\ell}\otimes \left(U_{t,i}^\dagger\right)^{\otimes t'-\ell}_{A_{\ell+1}A_{\ell+2}...A_{t'}}\otimes \proj i_{\hat B_t}\,.
\end{align} 
These isometries perform the following: controlled on an index $i$ of a $t$-design $U_{t,i}$, apply $U_{t, i}$ to $\ell$ registers and $U_{t,i}^\dagger$ to $t' - \ell$ registers. For us it will always be the case that $t' \leq t$, since otherwise the $t$-design property no longer makes the desired guarantees on the map $V$.

We also let $W^{(t,\ell)}_{\hat B_{t}\to \hat B_{t+1}}$ be an isometry such that
\begin{equation}
V^{(t+1,t,\ell)}_{A^t\hat B_{t+1}}\ket{\eta_{t+1}}_{\hat B_{t+1}}=W_{\hat B_{t}\to \hat B_{t+1}}V^{(t,t,\ell)}_{A^t\hat B_{t}}\ket{\eta_{t}}_{\hat B_{t}}
\end{equation}
for $\ell=0,...,t$. The isometry $W$ always exists, as all Stinespring dilations are isometrically equivalent, and both $V^{(t,t,\ell)}_{A^t\hat B_{t}}\ket{\eta_{t}}_{\hat B_{t}}$ and $V^{(t+1,t,\ell)}_{A^t\hat B_{t+1}}\ket{\eta_{t+1}}_{\hat B_{t+1}}$ are Stinespring dilations of the mixed twirling channel $\mathcal T^{(t, \ell)}$ by the $t$-design property. 

We are now ready to define the space-efficient unitary sampler.
\begin{construction}[Space-efficient unitary sampler]\label{con:unitary-sampler}
Let $n$ be a positive integer and $\{D_{t}\}_{t\in\N}$ a family of $n$-qubit unitary $t$-designs $D_{t}=\{U_{t,i}\}_{i\in I_{t}}$, with $|I_{t}|=2^{\mathrm{poly}(n,t)}$. Define a stateful machine $\effunitary(n, \epsilon)$ with interfaces $(\Init, \Eval,$ $\Invert)$ as follows. The machine will maintain counters $t_e$ (the number of $\Eval$ queries), $t_i$ (the number of $\Invert$ queries), and $t := t_e + t_i$.
\begin{enumerate}
\item $\effunitary(n).\Init:$  Prepares the state $\ket{\eta_1}_{\hat B_1}$ and stores it.
\item $\effunitary(n).\Eval:$ 
\begin{itemize}
\item If $t=0$, apply $V^{(1,1,1)}_{A_1\hat B_1}$, where $A_1$ is the input register. 
\item If $t>0$, apply $W^{(t,t_e)}_{\hat B_{t}\to \hat B_{t+1}}$ to the state register and subsequently apply $V^{t+1,1,1}_{A_{t+1}\hat B_{t+1}}$, where $A_{t+1}$ is the input register. 
\end{itemize}
\item $\idealunitary(n).\Invert:$
\begin{itemize}
\item If $t=0$, apply $V^{(1,1,0)}_{A_1\hat B_1}$, where $A_1$ is the input register. 
\item If $t>0$, apply $W^{(t,t_e)}_{\hat B_{t}\to \hat B_{t+1}}$ to the state register and subsequently apply $V^{t+1,1,0}_{A_{t+1}\hat B_{t+1}}$, where $A_{t+1}$ is the input register. 
\end{itemize}
\end{enumerate}
\end{construction}

We want to show that the above sampler is indistinguishable from the ideal sampler to any oracle algorithm, in the following sense. Given a stateful machine $\algo C \in \{\idealunitary(n), \effunitary(n, \epsilon)\}$ and a (not necessarily efficient) oracle algorithm $\algo A$, we define the process $b \from \algo A^{\algo C}$ as follows:
\begin{enumerate}
	\item $\algo C.\Init$ is called;
	\item $\algo A$ receives oracle access to $\algo C.\Eval$ and $\algo C.\Invert$;
	\item $\algo A$ outputs a bit $b$\,.
\end{enumerate}

\begin{theorem}\label{thm:unitary-sampler}
	For all oracle algorithms $\algo A$
	\begin{equation}\label{eq:unitary-thm}
	\Pr\left[\algo A^{\idealunitary(n)} = 1\right]=\Pr\left[\algo A^{\effunitary(n, \epsilon)} = 1\right].
	\end{equation}
\end{theorem}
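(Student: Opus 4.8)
The plan is to follow the same high-level template as the proof of \expref{Theorem}{thm:state-sampler}: fix an arbitrary total query bound $q$, prove \emph{exact} equivalence of $\idealunitary(n)$ and $\effunitary(n,\epsilon)$ against every adversary making at most $q$ queries, and then---since \expref{Construction}{con:unitary-sampler} is defined with no reference to a query bound---instantiate the argument with $q=q_{\mathsf{total}}$ equal to the number of queries $\algo A$ actually makes, even if this is determined only at runtime. Because the construction uses \emph{exact} designs (\expref{Corollary}{cor:udesigns}), there is no approximation error to track and it suffices to prove \eqref{eq:unitary-thm} on the nose.

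First I would dispatch the case of purely \emph{parallel} queries: an adversary that hands over $q$ registers $A_1,\dots,A_q$ under a fixed pattern $\vec\epsilon\in\{\Eval,\Invert\}^q$ (with, say, $\ell$ copies of $\Eval$) and performs no operations on them in between. Under $\idealunitary(n)$ these queries apply the Haar mixed-twirling channel $\mathcal T^{(\ell,q-\ell)}_{\mathrm{Haar}}$ to $A^q$. Under $\effunitary(n)$, answering the queries one at a time---each time growing $\hat B_t\to\hat B_{t+1}$ via $W^{(t,t_e)}$ and applying a single-register $V$---telescopes, by the defining relation of $W$ and up to a fixed relabeling of registers, to the global state $V^{(q,q,\ell)}_{A^q\hat B_q}\ket{\eta_q}_{\hat B_q}$, which is a Stinespring dilation of $\mathcal T^{(\ell,q-\ell)}_{D_q}$. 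By \expref{Proposition}{lem:mixedtwirl} these two channels coincide, and since the adversary never touches the dilating register $\hat B_q$, its reduced state, and hence its output distribution, is identical in both experiments.

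The main work, and the main obstacle, is upgrading this parallel equivalence to \emph{adaptive} equivalence. Here I would use a gate-teleportation reduction. Given an adaptive $q$-query $\algo A$, build a parallel-query $\algo A'$ that prepares $q$ maximally entangled pairs $\ket{\phi^+}_{X_jZ_j}$, guesses a pattern $\vec\epsilon$, queries the registers $Z_1,\dots,Z_q$ in a single parallel batch according to $\vec\epsilon$, and then simulates $\algo A$ using \emph{no further oracle calls}: whenever $\algo A$ would query its $j$-th input on a register $M$, $\algo A'$ Bell-measures $M$ against $X_j$ and post-selects on the trivial outcome, which teleports $\algo A$'s input through the already-applied $U^{(\epsilon_j)}$ onto $Z_j$, producing the correct response. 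Let $E$ be the event that the guessed pattern agrees with the interfaces $\algo A$ actually invokes and that all $q$ Bell measurements return the trivial outcome. Conditioned on $E$, $\algo A'$ reproduces $\algo A$'s interaction with the same underlying oracle $\calO$ exactly, so $\Pr[\algo A'=1\wedge E\mid\calO]=\Pr[E\mid\calO]\cdot\Pr[\algo A^{\calO}=1]$.

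The key point is that $\algo A'$ is a genuine parallel-query algorithm---all of its oracle use is the single batch of $q$ queries, and the event $E$ together with its output bit are determined entirely by measurements performed \emph{afterwards}. Hence, by the parallel base case, the post-batch state of $\algo A'$ is oracle-independent, and therefore both $\Pr[\algo A'=1\wedge E\mid\calO]$ and $\Pr[E\mid\calO]$ take the same value whether $\calO=\idealunitary(n)$ or $\calO=\effunitary(n,\epsilon)$. Adaptivity in the \emph{choice} of interface is absorbed automatically, since $\algo A'$ commits to the whole pattern $\vec\epsilon$ up front and $E$ merely asks that the guess be correct, which still happens with positive probability; combined with the positive probability of $q$ trivial Bell outcomes, this makes $\Pr[E]$ a strictly positive constant. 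Dividing then gives $\Pr[\algo A^{\idealunitary(n)}=1]=\Pr[\algo A^{\effunitary(n,\epsilon)}=1]$, and instantiating $q=q_{\mathsf{total}}$ yields \eqref{eq:unitary-thm}.
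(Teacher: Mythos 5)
Your proposal is correct and follows essentially the same route as the paper's proof: the parallel-query case is handled by the exact design property (\expref{Proposition}{lem:mixedtwirl}) together with the defining relation of the transition isometries $W^{(t,\ell)}$ (which the paper carries out as an explicit induction with permutation bookkeeping), and adaptivity is removed by teleportation with post-selection on trivial Bell outcomes, exactly as in the paper. The only cosmetic difference is that you guess the $\Eval$/$\Invert$ pattern up front and include the correctness of that guess in the post-selected event $E$, whereas the paper's non-adaptive algorithm simply pre-queries a sufficient number of entangled halves on \emph{both} interfaces; both devices yield the same oracle-independent, strictly positive conditioning probability.
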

\begin{proof}
	We begin by proving the following claim by induction. The claim states that the theorem holds for adversaries who only make parallel queries.
	\begin{claim}
		For all $x\in\{0,1\}^t$, let $V^{(x)}_{A^t\to A^t\hat B_t}$ be the isometry that is implemented by making $t$ parallel queries to $\effunitary(n,\epsilon)$, where the $i$-th query is made to the \Eval interface if $x_i=1$ and to the $\Invert$ interface if $x_i=0$. Let further $\sigma\in S_t$ be a permutation such that $\sigma.x=11...100...0$, where the lower dot denotes the natural action of $S_t$ on strings of length $t$. Then
			\begin{equation}\label{eq:par}
		V^{(x)}_{A^t\to A^t\hat B_t}=\sigma^{-1}_{A^t}V^{(t,t,\ell)}_{A^{t}\hat B_t}\ket{\eta_t}_{\hat B_t},
		\end{equation}
		where $\sigma$ acts by permuting the $t$ registers.
	\end{claim}
\begin{proof}
	For $t=1$, the claim trivially holds. Now suppose the claim holds for $t-1$. By definition of the \Eval and \Invert interfaces, 
\begin{equation}\label{eq:induc0}
	V^{(x)}_{A^t\to A^t\hat B_t}=V^{t,1,x_t}_{A_{t}\hat B_{t}}W^{(t,\ell)}_{\hat B_{t-1}\to \hat B_{t}}V^{(x_{[1;t-1]})}_{A^{t-1}\to A^{t-1}\hat B_{t-1}},
\end{equation}
where $x_{[a,b]}=x_ax_{a+1}...x_b$. 
By the induction hypothesis, we have
\begin{equation}\label{eq:induc1}
		V^{(x_{[1;t-1]})}_{A^{t-1}\to A^{t-1}\hat B_{t-1}}=\hat\sigma^{-1}_{A^{t-1}}V^{(t-1,t-1,\ell-x_t)}_{A^{t-1}\hat B_{t-1}}\ket{\eta_{t-1}}_{\hat B_{t-1}}
\end{equation}
for an appropriate permutation $\hat\sigma\in S_{t-1}$.
By the design property of $D_j$ for $j=t, t-1$ and the definition of $W^{(t,\ell)}$ we obtain
\begin{align}
&\mathcal T^{(t-1,\ell-x_t)}_{D_{t-1}}=\mathcal T^{(t-1,\ell-x_t)}_{D_{t}}\nonumber\\
\Leftrightarrow\quad &W^{(t-1,\ell)}_{\hat B_{t-1}\to \hat B_{t}}V^{(t-1,t-1,\ell-x_t)}_{A^{t-1}\hat B_{t-1}}\ket{\eta_{t-1}}_{\hat B_{t-1}}=V^{(t,t-1,\ell-x_t)}_{A^{t-1}\hat B_{t}}\ket{\eta_{t-1}}_{\hat B_{t}}\nonumber\\
\Leftrightarrow\quad &W^{(t,\ell)}_{\hat B_{t-1}\to \hat B_{t}}\hat\sigma^{-1}_{A^{t-1}}V^{(t-1,t-1,\ell-x_t)}_{A^{t-1}\hat B_{t-1}}\ket{\eta_{t-1}}_{\hat B_{t-1}}=\hat\sigma^{-1}_{A^{t-1}}V^{(t,t-1,\ell-x_t)}_{A^{t-1}\hat B_{t}}\ket{\eta_{t-1}}_{\hat B_{t}}.\label{eq:induc2}
\end{align}
Here we have used the fact that the permutation and $W^{(t-1,\ell)}$ commute because they act on disjoint sets of registers. 
Putting Equations \eqref{eq:induc0}, \eqref{eq:induc1} and \eqref{eq:induc2} together, it follows that 
\begin{align}
V^{(x)}_{A^t\to A^t\hat B_t}=V^{t,1,x_t}_{A_{t}\hat B_{t}}\hat\sigma^{-1}_{A^{t-1}}V^{(t,t-1,\ell-x_t)}_{A^{t-1}\hat B_{t}}\ket{\eta_{t}}_{\hat B_{t}}.
\end{align}
But clearly
\begin{equation}
	V^{t,1,x_t}_{A_{t}\hat B_{t}}\hat\sigma^{-1}_{A^{t-1}}V^{(t,t-1,\ell-x_t)}_{A^{t-1}\hat B_{t}}=\sigma^{-1}_{A^{t}}V^{(t,t,\ell)}_{A^{t}\hat B_{t}}
\end{equation}
For an appropriate permutation $\sigma$ that consists of applying $\hat\sigma$ and then sorting in $x_t$ correctly.
\end{proof}
The generalization to adaptive algorithms is done via \emph{post-selection}: Given an algorithm $\algo A$, consider non-adaptive algorithm $\tilde{\algo A}$ that first queries the \Eval and \Invert interfaces of the stateful machine it is interacting with on the first halves of a sufficient number of maximally entangled states. Subsequently the adaptive adversary is run, answering the queries by running quantum teleportation on the inputs together with the remaining halves of the maximally entangled states. This way, the query registers of the adaptive  queries are teleported into the previously made non-adaptive queries, but of course they incur a random Pauli error on the way, that cannot be corrected.

As the output of $\tilde{\algo A}$ is, however, \emph{exactly} the same whether it interacts with $\idealunitary(n)$ or with $\effunitary(n,0)$, the same holds for the version of $\tilde{\algo A}$ where we post-select, or condition, on the outcome that all the Pauli corrections in all the teleportation protocols are the identity. But this post-selected algorithm has the same output as $\algo A$ no matter what oracles it is given.
\qed \end{proof}


 Using \expref{Corollary}{cor:udesigns} and the above, we get the following upper bound on the space complexity of lazy sampling Haar random unitaries.
\begin{corollary}
	The space complexity $S$ of simulating $\idealunitary(n)$ as a function of $n$ and the number of queries $q$  is bounded from above by the logarithm of number of elements in any family of exact $n$-qubit unitary $q$-designs, and hence
	\begin{equation}
		S(n,q)
		\le 2q(2n+\log e)+O(\log q)\,.
	\end{equation}
\end{corollary}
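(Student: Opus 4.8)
The plan is to read off the space usage directly from \expref{Construction}{con:unitary-sampler} and then substitute the design-size bound of \expref{Corollary}{cor:udesigns}. First I would observe that the only persistent quantum memory maintained by $\effunitary(n)$ is the single register $\hat B_t$, which after $q$ total queries has grown to $\hat B_q$ and has dimension exactly $|I_q|$, the number of elements of the $q$-design $D_q$. That a $q$-design suffices after $q$ queries, rather than some larger design, is exactly the content of \expref{Theorem}{thm:unitary-sampler} together with its parallel-query claim; note that the isometries $W^{(t,\ell)}$ only ever grow the index register from $\hat B_t$ to $\hat B_{t+1}$, so the register size is monotone in the query count. Storing an index into a set of size $|I_q|$ requires $\lceil \log_2 |I_q|\rceil$ qubits, and the machine additionally keeps the classical counters $t_e$, $t_i$, and $t$, contributing a further $O(\log q)$ bits. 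Hence $S(n,q) \le \log_2 |I_q| + O(\log q)$, which already establishes the first assertion that the space is bounded by the logarithm of the design size.

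For the explicit numerical bound, I would invoke \expref{Corollary}{cor:udesigns} to choose $D_q$ with
\[
|I_q| \le \left(\frac{e(2^{2n}+q-1)}{q}\right)^{2q},
\]
so that
\[
\log_2|I_q| \le 2q\left(\log_2 e + \log_2\frac{2^{2n}+q-1}{q}\right).
\]
The one calculation worth doing carefully is the elementary inequality $\tfrac{2^{2n}+q-1}{q}\le 2^{2n}$, valid for every $q\ge 1$: it rearranges to $(q-1)\le 2^{2n}(q-1)$, which is immediate. This yields $\log_2\tfrac{2^{2n}+q-1}{q}\le 2n$, and therefore $\log_2|I_q|\le 2q(2n+\log e)$. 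Folding in the counter overhead (and the ceiling on the qubit count) then gives $S(n,q)\le 2q(2n+\log e)+O(\log q)$, as claimed.

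I do not expect a genuine obstacle here: the statement is essentially a bookkeeping corollary of the construction and \expref{Corollary}{cor:udesigns}. The only points requiring care are (i) justifying that no design larger than a $q$-design is needed after $q$ queries, which is guaranteed by the correctness proof of \expref{Theorem}{thm:unitary-sampler}, and (ii) the clean algebraic simplification $\tfrac{2^{2n}+q-1}{q}\le 2^{2n}$, which lets the $-2q\log_2 q$ term that would otherwise appear be discarded harmlessly (it is negative), keeping the final bound as simple as $2q(2n+\log e)+O(\log q)$.
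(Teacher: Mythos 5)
Your proposal is correct and follows essentially the same route as the paper: bound the machine's memory by the index register of an exact $q$-design via \expref{Corollary}{cor:udesigns}, add $O(\log q)$ bits for the query counters, and simplify the logarithm of the design size to $2q(2n+\log e)$. The only difference is cosmetic: you spell out the elementary inequality $\frac{2^{2n}+q-1}{q}\le 2^{2n}$ that the paper asserts implicitly, which is a welcome bit of added rigor.
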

\begin{proof}
	According to \expref{Corollary}{cor:udesigns}, There exists an exact unitary $q$-design such that $2q\log\left( \frac{e(2^{2n}+q-1)}{q}\right)\le 2q(2n+\log e)$ qubits suffice to coherently store the index of an element from it. The only additional information that $\effunitary(n)$ needs to store is how many direct and inverse queries have been answered, which can be done using $\log q$ bits. \flushright \qed
	
\end{proof}

Our results suggest two possible approaches to devise a time-efficient lazy sampler for Haar random unitaries. The most promising one is to use the same approach as for the state sampler and explicitly constructing the update isometry, possibly using explicit bases for the irreducible representations of $U(2^n)$, or using the Schur transform \cite{Bacon2006}. The other one would be to use the $t$-design update method described above, but using efficient approximate $t$-designs, e.g. the ones constructed in \cite{BHH16}. This would, however, likely require a generalization of the Stinespring dilation continuity result from \cite{Kretschmann2008} to so-called quantum combs \cite{Chiribella2008}. In addition, we would need to show that the transition isometries, i.e. the approximate analogue of the isometries $W^{(t,\ell)}$ from \expref{Construction}{con:unitary-sampler}, are efficiently implementable. We leave the exploration of these approaches for future work.

\section{Application: untraceable quantum money}\label{sec:money}

\subsection{Untraceable quantum money}

Our definition of quantum money deviates somewhat from others in the literature~\cite{AC12,Ji2018}. We allow the bank to maintain an internal quantum register, we do not require that the money states are pure, and we allow adversaries to apply arbitrary (i.e., not necessarily efficiently implementable) channels.
\begin{definition}[Quantum money]
A quantum money scheme is a family of stateful machines $\mathfrak M$ indexed by a security parameter $\lambda$, and having two interfaces:
\begin{enumerate}
\item $\Mint$: receives no input, outputs an $n$-qubit register;
\item $\Ver$: receives an $n$-qubit register as input, outputs an $n$-qubit register together with a flag $\{\acc, \rej\}$,
\end{enumerate}
satisfying the following two properties:
\begin{itemize}
\item correctness: $\|\Ver \circ \Mint - \one \otimes \proj\acc\| \leq \negl(\lambda)$;\footnote{Note that it is understood that this inequality should hold no matter which interfaces have been called in between the relevant \Mint and \Ver calls}
\item unforgeability: for all channels $\Lambda$ with oracle, and all $k \geq 0$, 
$$
\Pr\left[\acc^{k+1} \leftarrow {}_\mathrm{flag}|\Ver^{\otimes k+1} \circ \Lambda^\Ver \circ \Mint^{\otimes k}\right] \leq \negl(\lambda)\,,
$$
where ${}_\mathrm{flag}|$ denotes discarding all registers except $\Ver$ flags.
\end{itemize}
\end{definition}

It is implicit in the definition that $n$ is a fixed polynomial function of $\lambda$, and that all relevant algorithms are uniform in $\lambda$.

Next, we define untraceability for quantum money schemes.

\begin{definition}[Untraceability game]
The untraceability game $\Untrace_\lambda[\mathfrak M, \algo A]$ between an adversary $\algo A$ and a quantum money scheme $\mathfrak M$ at security parameter $\lambda$ proceeds as follows:
\begin{enumerate}
\item \textsf{set up the trace}: $\algo A(1^\lambda)$ receives oracle access to $\Ver$ and $\Mint$, and outputs registers $M_1$, $M_2$, \dots, $M_k$ and a permutation $\pi \in S_k$;
\item \textsf{permute and verify bills}: $b \from \{0,1\}$ is sampled, and if $b=1$ the registers $M_1 \cdots M_k$ are permuted by $\pi$. $\Ver$ is invoked on each $M_j$; the accepted registers are placed in a set $\mathcal M$ while the rest are discarded;
\item \textsf{complete the trace}: $\algo A$ receives $\mathcal M$ and the entire internal state of $\mathfrak M$, and outputs a guess $b' \in \{0,1\}$.
\end{enumerate}
The output of $\Untrace_\lambda[\mathfrak M, \algo A]$ is $\delta_{bb'}$; in the case $b=b'$, we say that $\algo A$ wins.
\end{definition}

\begin{definition}[Untraceable quantum money]
A quantum money scheme $\mathfrak M$ is untraceable if, for every algorithm $\algo A$,
$$
\Pr\left[1 \from \Untrace_\lambda[\mathfrak M, \algo A] \right] \leq \frac{1}{2} + \negl(\lambda)\,.
$$
\end{definition}

The intuition behind the definition is as follows. In general, one might consider a complicated scenario involving many honest players and many adversaries, where the goal of the adversaries is to trace the movement of at least one bill in transactions involving at least one honest player. Tracing in transactions involving only adversaries is of course trivial. The first natural simplification is to view all the adversaries as a single adversarial party; if that party cannot trace, then neither can any individual adversary. Next, we assume that honest players will verify any bills they receive immediately; obviously, if they do not do this, and then participate in transactions with the adversary, then tracing is again trivial. We thus arrive at the situation described in the game: the adversary is first allowed to create candidate bills arbitrarily, including storing information about them and entangling them with additional registers, before handing them to honest players who may or may not perform some transactions; the goal of the adversary is to decide which is the case, with the help of the bank. Note that one round of this experiment is sufficient in the security game, as an adversary can always use the $\Ver$ and $\Mint$ oracles to simulate additional rounds.

One might reasonably ask if there are even stronger definitions of untraceability than the above. Given its relationship to the ideal state sampler, we believe that Haar money, defined below, should satisfy almost any notion of untraceability, including composable notions. We also remark that, based on the structure of the state simulator, which maintains an overall pure state supported on two copies of the symmetric subspace of banknote registers, it is straightforward to see that the scheme is also secure against an ``honest but curious'' or ``specious''~\cite{SSS09,DNS10} bank. We leave the formalization of these added security guarantees to future work.

\subsection{Haar money}

Next, we show how the lazy state sampler (\expref{Construction}{con:state-sampler}) yields untraceable quantum money. The construction follows the idea of \cite{Ji2018} sample a single (pseudo)random quantum state and hand out copies of it as banknotes.

\begin{construction}[Haar money]\label{con:haar-money}
Let $n$ be a positive integer and $\epsilon > 0$. The Haar scheme $\haarmoney(n, \epsilon)$ is defined as follows:
\begin{itemize}
\item $\Mint$: on first invocation, instantiate $\effstate := \effstate(n, \epsilon)$ by running $\effstate.\Init$. On all invocations, output result of $\effstate.\Gen$;
\item $\Ver$: apply $\effstate.\Ver$; in the $\acc$ case, call $\Mint$ and output the result; in the $\rej$ case, output $0^n$.
\end{itemize}
\end{construction}

We remark that, while \expref{Construction}{con:state-sampler} does not explicitly include a $\Ver$ interface, one can easily be added by \expref{Lemma}{lem:ver-reflect}.

\begin{proposition}
Haar money is an untraceable quantum money scheme.
\end{proposition}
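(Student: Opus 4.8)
The plan is to reduce untraceability of Haar money (\expref{Construction}{con:haar-money}) to a clean combinatorial statement about a single \emph{fixed} state, by exploiting the exact equivalence between $\effstate(0,n)$ and the design-based purified machine $\hat\effstate(0,q,n)$ from the proof of \expref{Theorem}{thm:state-sampler}. First I would remove the approximation error. Since every call to the increment isometry $V^{t-1\to t}$ is implemented to trace distance $\epsilon2^{-i}$, the \emph{entire} global state of the experiment (the adversary's registers, the returned bills $\mathcal M$, \emph{and} the bank's internal register) under $\haarmoney(n,\epsilon)$ is within total trace distance $\sum_i\epsilon2^{-i}=\epsilon=\negl(\lambda)$ of the one under the exact scheme $\haarmoney(n,0)$. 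As the guess $b'$ in the untraceability game is a measurement of that global state, it suffices to prove untraceability for the exact scheme, up to an additive $\negl$.

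Second, I would pass from $\effstate(0,n)$ to $\hat\effstate(0,q,n)$, where $q$ is the total number of queries made during the experiment. The switch isometry $V^{\mathrm{switch},t}_{B^t\to\hat B}$ constructed in the proof of \expref{Theorem}{thm:state-sampler} is a bijective isometry on the symmetric subspace that intertwines all \Gen and \CReflect operations. Because the game hands the bank's register to the adversary only at the very end, and the two internal registers $B^t$ and $\hat B$ are related by this fixed isometry, the adversary's final measurement statistics—hence its winning probability—agree for $\effstate(0,n)$ and $\hat\effstate(0,q,n)$. This step is what lets me handle the handover of the bank's quantum memory, which is \emph{not} covered by the plain oracle-indistinguishability of \expref{Theorem}{thm:state-sampler}.

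Third, I would prove perfect untraceability directly for $\hat\effstate(0,q,n)$. The key structural observation is that the bank register $\hat B$ is only ever acted upon by $\hat B$-controlled maps (controlled preparation of $U_i\ket 0$ for \Gen, and the controlled reflection about $U_i\ket 0$ for \Ver via \expref{Lemma}{lem:ver-reflect}), while the adversary's maps act only on its own registers and the bill registers, independently of $\hat B$. Hence, conditioned on all setup-phase outcomes, the post-setup global pure state is block diagonal in the index basis,
\[
\ket{\Psi}=\sum_{i\in I}c_i\,\ket{\chi_i}_{EM_1\cdots M_k}\otimes\ket{i}_{\hat B},
\]
where $\ket{\chi_i}$ is exactly the state the adversary would hold against a fixed bill $\ket{\varphi_i}=U_i\ket 0$. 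Writing $\Pi^{(i)}_S=\bigotimes_{j\in S}\proj{\varphi_i}_{M_j}\otimes\bigotimes_{j\notin S}(\one-\proj{\varphi_i})_{M_j}$ for the controlled verification with accept-set $S\subseteq[k]$, the relation $\Pi^{(i)}_S\,\pi=\pi\,\Pi^{(i)}_{\pi^{-1}(S)}$ holds because every single-bill projector equals the \emph{same} $\proj{\varphi_i}$. Therefore the $b=1$ branch with accept-set $S$ coincides with the $b=0$ branch with accept-set $\pi^{-1}(S)$: the residual on $E$ agrees, the surviving bill count is $|S|=|\pi^{-1}(S)|$, and each surviving bill is the identical state $\ket{\varphi_i}$ (collapsed onto, then re-minted as, $\ket{\varphi_i}$ controlled on $\ket i_{\hat B}$, coherences in $\hat B$ included). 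Since $\mathcal M$ is returned as an unordered multiset and the accept-set label is not revealed, reindexing $S\mapsto\pi^{-1}(S)$ in the sum over the hidden outcome shows the adversary's final state on $(E,\mathcal M,\hat B)$ is exactly independent of $b$, giving $\Pr[1\from\Untrace_\lambda]=\tfrac12$. Correctness and information-theoretic unforgeability follow more directly: in the ideal world $\Ver\circ\Mint$ accepts with certainty, and no channel maps $k$ copies of a Haar-random $\ket\varphi$ to $k+1$ bills all passing $\Ver$ (by the symmetric-subspace/no-cloning bound), and \expref{Theorem}{thm:state-sampler} transfers both to $\effstate(n,\negl)$ up to $\negl$.

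I expect the main obstacle to be the second stage—the handover of the bank's internal memory. Plain indistinguishability constrains only the oracle interface, so the argument must genuinely exploit that the simulator maintains a global \emph{pure} state whose private register is related to the clean index register $\hat B$ by a fixed partial isometry; only then is conditioning on the index $i$, and the ensuing reduction to the fixed-state permutation argument, legitimate.
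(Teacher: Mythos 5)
Your proof is correct, but the core of your untraceability argument takes a genuinely different route from the paper's. The paper works directly with $\effstate$: it first commutes the challenge permutation past verification (the same covariance you use, $\Pi_S^{(i)}\pi = \pi\,\Pi^{(i)}_{\pi^{-1}(S)}$, together with the fact that $\mathcal M$ is returned unordered), and then observes that after verification the joint state of the bank \emph{and} all verified bills is negligibly close to $\ket{\phi^+_\Sym}$, the maximally entangled state on two copies of the symmetric subspace --- a state that is manifestly invariant under permuting the money registers, bank included. That single structural fact kills the dependence on $b$ in two lines. You instead pass to the design-based purified machine $\hat\effstate(0,q,n)$ via the switch isometry from the proof of \expref{Theorem}{thm:state-sampler}, condition on the design index $i$ (using that the bank register is only ever acted on by index-controlled maps, so the global state stays block diagonal), and run an elementary fixed-state permutation/reindexing argument inside each block. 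Both proofs must confront the same subtlety --- plain oracle indistinguishability does not cover handing the bank's memory to the adversary --- and you handle it correctly: since the two banks' registers are related by a fixed partial isometry whose domain contains the support of the actual state, equality of the two challenge states for $\hat\effstate$ pulls back to equality for $\effstate(0,n)$ (your phrase ``measurement statistics agree'' is slightly loose; what is true is that winning probabilities are preserved because the isometry can be composed into the adversary's final measurement). What each approach buys: the paper's is shorter and isolates the one invariance that matters; yours is more laborious but more elementary per block, makes the bank-handover step fully explicit, and also records the finer fact (which the paper only gets implicitly) that re-minted bills and residuals match outcome-by-outcome under the bijection $S\mapsto\pi^{-1}(S)$, yielding winning probability exactly $\tfrac12$ for the exact scheme. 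Your treatment of correctness and unforgeability matches the paper's, with the minor difference that you invoke a symmetric-subspace no-cloning bound where the paper cites the complexity-theoretic no-cloning theorem of \cite{AC12}.
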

\begin{proof}
We need to show three properties: completeness, unforgeability, and untraceability. For the completeness and unforgeability properties, observe that \expref{Theorem}{thm:state-sampler} implies that the adversary's view is indistinguishable (up to negligible terms) if we replace the efficient state sampler $\effstate$ with the ideal $\idealstate$. Once we've made that replacement, completeness follows from the definition of $\idealstate.\Gen$ and $\idealstate.\Ver$, and unforgeability follows from the complexity-theoretic no-cloning theorem~\cite{AC12}.

For untraceability, it is of course true that $\idealstate$ is obviously untraceable. However, we cannot simply invoke \expref{Theorem}{thm:state-sampler} to conclude the same about $\effstate$, since the adversary will receive the state of the bank at the end of the game. Instead, we argue as follows. Consider step $2$ (permute and verify bills) in the untraceability game $\Untrace_\lambda[\haarmoney, \algo A]$. An equivalent way to perform this step is to (i.) verify all the registers first, (ii.) discard the ones that fail verification, and then (iii.) apply the permutation, conditioned on the challenge bit $b$. Steps (i.) and (ii.) are applied always and in particular do not depend on $b$. However, after (i.) and (ii.) have been applied, by the definition of $\effstate$ the joint state of the bank and all the $M_j \in \mathcal M$ (and indeed all verified bills in existence) is negligibly far from the state $\ket{\phi^+_\Sym}$, i.e., the maximally entangled state on the symmetric subspace. This state is clearly invariant under permutation of the money registers, and in particular under the permutation of the registers in $\mathcal M$ selected by the adversary. We emphasize that this invariance holds for the entire state (including the bank.) As the remainder of the game experiment is simply some channel applied to that state, and this channel does not depend on $b$, the result follows.
\qed \end{proof}

While Haar money is an information-theoretically unforgeable and untraceable quantum money scheme, it is easy to see that the quantum money scheme devised in \cite{Ji2018} is \emph{computationally} unforgeable and untraceable.







\appendix

\section{State preparation lemma proofs}\label{sec:prep-lemmas}

We now prove the state preparation lemmas from the preliminaries.

\begin{lemma}[Restatement of \expref{Lemma}{lem:poly-prepare}]
Let $\ket{\varphi} = \sum_{x \in \{0,1\}^n} \varphi(x) \ket{x}$ be a family of quantum states whose amplitudes $\varphi$ have an efficient classical description $\tilde \varphi$, and such that $|\{x : \varphi(x) \neq 0\}| \leq \poly$. Then there exists a quantum algorithm $\mathcal P$ which runs in time polynomial in $n$ and $\log(1/\epsilon)$ and satisfies
$$
\|\mathcal P \ket{\tilde \varphi}\ket{0^n} - \ket{\tilde \varphi}\ket{\varphi}\|_2 \leq \epsilon\,.
$$
\end{lemma}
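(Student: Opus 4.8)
The plan is to exploit the polynomial bound on the support in the most direct possible way: collapse the problem to state preparation on a \emph{logarithmically small} register, where even a generic preparation routine is cheap. Write $S = \{x : \varphi(x) \neq 0\}$ and $p = |S| \le \poly$, and let $m = \lceil \log_2 p\rceil = O(\log n)$. From the efficient description $\tilde\varphi$ one can, by assumption, enumerate $S = \{x_1,\dots,x_p\}$ in some canonical order and compute each amplitude $\varphi(x_j)$ to any desired precision by a classical, hence reversible, polynomial-time circuit. The key observation is that preparing $\ket{\varphi}$ is equivalent, after a relabeling, to preparing the state $\sum_{j=1}^{p} \varphi(x_j)\ket{j}$ on an $m$-qubit \emph{label} register, together with the reversible map $\ket{j} \mapsto \ket{x_j}$ sending labels to support strings.

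First I would load the amplitudes onto the label register. Since $m = O(\log n)$, a generic state-preparation routine built from a binary tree of controlled single-qubit rotations (see, e.g., \cite{NC00}) suffices: it uses $O(2^m) = O(p)$ rotations, which is polynomial. Concretely, writing $\varphi(x_j) = r_j e^{i\theta_j}$, I would first build $\sum_j r_j \ket{j}$ by a sequence of $m$ layers of $Y$-rotations, where the rotation applied to the $k$-th qubit conditioned on a prefix $j_1\cdots j_{k-1}$ is determined by the conditional partial sum $\sum_{j'} r_{j'}^2$ over labels $j'$ consistent with that prefix. Each such angle is computed into an ancilla by a reversible arithmetic circuit that reads $\tilde\varphi$ and the already-set prefix bits, used to drive the rotation, and then uncomputed; this keeps the whole operation coherent and controlled on $\ket{\tilde\varphi}$, leaving the description register untouched. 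The complex phases $\theta_j$ are then installed by a single diagonal layer $\ket{j}\mapsto e^{i\theta_j}\ket{j}$, again by computing $\theta_j$ from $\tilde\varphi$ and applying a controlled phase to the required precision before uncomputing.

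The relabeling step applies $\ket{j}\ket{0^n} \mapsto \ket{j}\ket{x_j}$ using the arithmetic circuit for $j \mapsto x_j$, and then uncomputes the label register by recomputing $j$ from $x_j$ (possible since $j \mapsto x_j$ is injective and the list has polynomial size, so $j$ can be recovered by a reversible table lookup). This leaves $\sum_j \varphi(x_j)\ket{x_j} = \ket{\varphi}$ in the $n$-qubit output register and restores all ancillas to $\ket{0}$. For the error bound, each of the $O(p)$ rotations and phases is realized only to some precision $\delta$, for instance by approximating arbitrary single-qubit rotations in a fixed universal gate set via Solovay--Kitaev \cite{Dawson2005}, at a cost of $\mathrm{polylog}(1/\delta)$ gates each, and by truncating each computed angle to $O(\log(1/\delta))$ bits. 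The resulting perturbations add up in $\ell_2$ norm to at most $O(p)\cdot\delta$, so taking $\delta = \epsilon/O(p)$, i.e.\ $O(\log(p/\epsilon))$ bits of precision, yields total error at most $\epsilon$ while keeping the gate count polynomial in $n$ and $\log(1/\epsilon)$.

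I expect the main obstacle to be bookkeeping rather than conceptual: making every angle- and phase-computation reversible so that all ancillas, including those holding intermediate partial sums derived from $\tilde\varphi$, are cleanly uncomputed, and then tracking how the per-step precision $\delta$ propagates to a global $\ell_2$ error bound. The triangle-inequality argument over the $O(p)$ approximate gates is routine once the order of operations is fixed, but stating it carefully, and verifying that the arithmetic on $\tilde\varphi$ is genuinely polynomial-time and thus admits a polynomial-size reversible implementation, is where the real work lies.
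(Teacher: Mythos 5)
Your proposal is correct, and it shares the paper's overall skeleton: collapse the problem to an $m=O(\log n)$-qubit label register, prepare $\sum_j \varphi(x_j)\ket{j}$ there, map $\ket{j}\mapsto\ket{x_j}$, and erase the label register by the inverse table lookup over the polynomially-sized support (this last step is exactly the paper's final move as well). Where you genuinely differ is in how the small-register state is prepared coherently as a function of $\tilde\varphi$. The paper treats the amplitude vector as the first column of a generic $\lceil\log t\rceil$-qubit unitary, has a reversible ``meta-circuit'' compute a classical gate-by-gate description of that unitary (via the Nielsen--Chuang decomposition into CNOT and single-qubit gates plus Solovay--Kitaev), executes that description with a universal quantum circuit, and then uncomputes the description. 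You instead build the state directly by a Grover--Rudolph-style binary tree of $Y$-rotations whose angles (conditional probabilities derived from $\tilde\varphi$ and the prefix bits) are computed into ancillas, used to drive register-controlled rotations, and uncomputed, followed by a diagonal layer for the complex phases. Both routes handle the essential difficulty --- that the preparation circuit depends on data sitting in a quantum register --- and both cost $\poly(n,\log(1/\epsilon))$ gates. Your version is more explicit and self-contained, and it gives a cleaner error budget ($O(p)$ elementary steps, each to precision $\epsilon/O(p)$); the paper's version is a blacker-box appeal that sidesteps construction-specific bookkeeping your route requires, such as the convention for rotation angles at nodes whose conditional partial sums vanish, and the separate treatment of moduli and phases. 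These are implementation details rather than gaps, so your argument stands as a valid alternative proof.
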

\begin{proof}
Let $\supp(\varphi) := \{x : \varphi(x) \neq 0\}$, let $t = |\supp(\varphi)|$ and let $x_1^\varphi, x_2^\varphi, \dots, x_t^\varphi$ be an indexing of $\supp(\varphi)$, e.g., by lexicographic order. We first observe that, from the classical description $\tilde \varphi$, we can efficiently compute a circuit for a $\lceil\log t\rceil$-qubit  unitary $\tilde U^{\varphi}$, such that $\tilde U^{\varphi}_{i,1}= \varphi(x_i^\varphi)$. In \cite{NC00}, Chapter 4, it is described how any $r$-qubit unitary can be implemented up to precision $\delta$ using a quantum circuit of length $O(r^2 4^r\log^c\left(r^2 4^r/\delta\right))$ for some universal constant $c$, and how to compute such a circuit efficiently. The recipe consists of a decomposition into a circuit of CNOT and arbitrary single qubit gates, and an application of the Solovay-Kitaev theorem to implement the single-qubit gates using, say, the Clifford+T gate set. The former is easily verified to be efficiently computable, and an algorithmic version of the latter can be found in, e.g., \cite{Dawson2005}. We apply this algorithm to compute a circuit for $\tilde U^\varphi$ from $\tilde \varphi$; let $m$ denote its maximum length.

The total algorithm that, on input $\tilde \varphi$, produces the circuit $C_\varphi$ for $\tilde U^{(\varphi)}$, can be written as a reversible circuit and implemented as a quantum circuit $C_{\mathrm{meta}}$ such that $U_{C_{\mathrm{meta}}}\ket{\tilde \varphi} \ket{0^m}=\ket{\tilde \varphi} \ket{C_\varphi}$. Here, $U_C$ is the unitary implemented by a quantum circuit $C$. We can then apply a universal quantum circuit to apply $C_\varphi$ to a fresh ancilla register initialized in the state $0^{\lceil\log t\rceil}$, and then again apply $U_{C_{\mathrm{meta}}}$ to uncompute the circuit description.

We are now ready to define the algorithm $\algo P$ as follows. On input $\ket{\tilde \varphi}\ket{0^n}$, we attach ancillas in the $0$ state and use the above to apply 
$$
\ket{\tilde\varphi}\ket{0^{\lceil\log t\rceil}}\ket{0^n}
\longmapsto \ket{\tilde\varphi}\tilde U\ket{0^{\lceil\log t\rceil}}\ket{0^n}
= \ket{\tilde\varphi} \sum_{i=1}^t \varphi(x_i^\varphi) \ket i \ket{0^n}\,.
$$
Next, controlled on the first two registers being in state $\ket{\tilde\varphi}\ket i$ we apply $X^{x_i^\varphi}$ to the last register. Finally, controlled on the first and last registers being in state $\ket{\tilde\varphi}$ and $\ket x$, respectively, we apply $X^i$ to the middle register if $x=x_i^\varphi$ for some $i$ (and the identity otherwise), and discard the middle register.
\end{proof}

\begin{lemma}[Restatement of \expref{Lemma}{lem:almost-uniform-prepare}]
Let $S \subset \{0,1\}^n$ be a family of sets of size $\poly$ with efficient description $\tilde S$, and let $\epsilon > 0$. There exists a quantum algorithm $\mathcal P$ which runs in time polynomial in $n$ and $\log(1/\epsilon)$ and satisfies
$$
\left\|\mathcal P \ket{\tilde S}_A\ket{0^n}_B - \ket{\tilde S}_A\ket{\bar S}_B\right\|_2 \leq \epsilon\,.
$$
\end{lemma}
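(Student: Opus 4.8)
The plan is to prepare $\ket{\bar S}$ from the uniform superposition over all of $\bits^n$ by actively deleting the (tiny) component supported on $S$, using a single, exactly tuned amplitude-amplification step. First I would note that, since $S$ has polynomial size and $\tilde S$ is an efficient description, the membership predicate $x\mapsto[x\in S]$ is efficiently and reversibly computable from $\tilde S$. Consequently both of the following are efficiently implementable, controlled on the register $A$ holding $\tilde S$: the phase operation $S_S(\phi):=\one-(1-e^{i\phi})\Pi_S$, where $\Pi_S:=\sum_{x\in S}\ketbra{x}{x}$ (implemented by computing the predicate into an ancilla, applying a controlled phase $e^{i\phi}$, and uncomputing); and $S_u(\psi):=\one-(1-e^{i\psi})\ketbra{u}{u}$, where $\ket u:=2^{-n/2}\sum_x\ket x$ (implemented by $H^{\otimes n}$, a phase on $\ket{0^n}$, and $H^{\otimes n}$). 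Because both act only through the classical-basis contents of $A$, the register $A$ is returned unchanged, as required.

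Next I would expose the two-dimensional structure. Applying $H^{\otimes n}$ to $\ket{0^n}_B$ prepares $\ket u=\sin\theta\,\ket S+\cos\theta\,\ket{\bar S}$, where $\sin\theta=\sqrt{|S|/2^n}$ is a quantity we can compute exactly from $\tilde S$. The orthogonal states $\ket S,\ket{\bar S}$ span a plane that is invariant under both $S_S(\phi)$ and $S_u(\psi)$ (since $\ket u$ lies in it and $\Pi_S$ restricts to $\ketbra{S}{S}$ there), and the target $\ket{\bar S}$ is obtained from $\ket u$ by a rotation through the \emph{known} angle $\theta$ inside this plane. The core step is to realize that rotation exactly with a single generalized Grover iterate $G(\phi,\psi):=S_u(\psi)\,S_S(\phi)$. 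I would compute the $\ket S$-component of $G(\phi,\psi)\ket u$ and set it to zero; this yields one complex (equivalently, two real) equation in the two real phases $\phi,\psi$, which is solvable with a closed-form solution depending only on $\theta$. This is exactly the exact-amplitude-amplification phenomenon: the ``good'' amplitude $\cos\theta$ already lies within a single Grover step of $1$, so the required net rotation is the small angle $\theta$ — a fractional step — and one phase-tuned iterate suffices. The algorithm $\mathcal P$ is therefore: append $\ket{0^n}_B$, apply $H^{\otimes n}$ to $B$, and apply $G(\phi,\psi)$ with the computed phases, mapping $\ket{\tilde S}_A\ket{0^n}_B\mapsto\ket{\tilde S}_A\ket{\bar S}_B$.

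Finally I would handle finite precision exactly as in the proof of \expref{Lemma}{lem:poly-prepare}: the membership subcircuit and the two controlled-phase gates are compiled into a fixed universal gate set via Solovay--Kitaev, and $\phi,\psi$ (hence $\theta$) are represented to enough bits, so that the accumulated error is at most $\epsilon$ at cost polynomial in $n$ and $\log(1/\epsilon)$. The step I expect to be the main obstacle is precisely this exact tuning: one must verify that a \emph{single} generalized reflection effects the rotation by the exponentially small angle $\theta$ onto $\ket{\bar S}$ exactly, rather than either running $\Theta(1/\theta)$ ordinary Grover iterations (which would be exponentially expensive, since $\theta\approx\sqrt{|S|/2^n}$) or settling for the trivial estimate $\|\ket u-\ket{\bar S}\|_2\le\sqrt{2|S|/2^n}$, which only covers values of $\epsilon$ above that exponentially small threshold.
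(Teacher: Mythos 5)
Your route is genuinely different from the paper's: the paper performs a coherent version of rejection sampling (repeatedly prepare the uniform superposition, coherently measure membership in $S$, swap out a successful attempt, and uncompute all garbage via \expref{Lemma}{lem:poly-prepare}), whereas you propose a single phase-tuned generalized Grover iterate. The geometric core of your idea is sound: with $\ket u=\sin\theta\ket S+\cos\theta\ket{\bar S}$, the zeroing condition $\bra{S}G(\phi,\psi)\ket u=0$ reduces to $e^{i\psi}\bigl(\sin^2\theta\, e^{i\phi}+\cos^2\theta\bigr)=\cos^2\theta\bigl(1-e^{i\phi}\bigr)$, which has the closed-form solution $\cos\phi=1-\tfrac{1}{2\cos^2\theta}$ (solvable since $\cos^2\theta\ge 1/2$ here), with $\psi$ then fixed by phase matching; so one fractional-angle iterate does suffice despite $\theta$ being exponentially small.

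However, there is a genuine gap: the iterate maps $\ket u$ onto $\ket{\bar S}$ only \emph{up to a phase that is far from} $1$. Substituting the zeroing condition into $\bra{\bar S}G(\phi,\psi)\ket u$ gives $G(\phi,\psi)\ket u=\cos\theta\,\bigl(1-e^{i\phi}\bigr)\ket{\bar S}$, and in the regime of interest ($\theta$ exponentially small, so $\phi\to\pm\pi/3$) this phase factor tends to $e^{\mp i\pi/3}$. Hence $\bigl\|G(\phi,\psi)\ket u-\ket{\bar S}\bigr\|_2\to\bigl|1-e^{\mp i\pi/3}\bigr|=1$, a constant, so the algorithm as written does not satisfy the lemma's bound for small $\epsilon$. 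This phase cannot be waved away as global: the lemma's conclusion compares the output against $\ket{\bar S}$ itself, and in the paper's application (preparing $\ket{\psi_\alpha}$ inside a superposition over $\alpha$, hence over different sets $S$) the phase depends on $\tilde S$ and becomes an observable relative phase. The gap is fixable, because $\theta$, $\phi$, $\psi$, and therefore the offending factor $\cos\theta\,(1-e^{i\phi})$ are all efficiently computable from $\tilde S$: append an $\tilde S$-controlled phase correction $e^{-i\arg(\cos\theta\,(1-e^{i\phi}))}$, implemented coherently (compute the phase into an ancilla, kick it back, uncompute) just as the circuit descriptions are handled in the proof of \expref{Lemma}{lem:poly-prepare}. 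With that correction, and your Solovay--Kitaev precision argument, your single-iterate construction becomes a valid and arguably more elegant alternative to the paper's $r$-fold coherent rejection sampling; without it, the construction fails by a constant.
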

\begin{proof}
We first observe that there is a quantum algorithm of size polynomial in $n$ and $\log(1/\epsilon)$ for the task of, given a classical description $\tilde S$ of a $\poly$-size set $S$, preparing the state $\ket{\bar S}$ with precision $\epsilon$. This algorithm proceeds by repeatedly preparing the uniform superposition and then applying the two-outcome measurement defined by the projector $\sum_{x \in S} \proj{x}$. After 
$$
r=\max\left(1, \left\lceil\log\left(\frac{3|S|}{\epsilon}\right)-n\right\rceil\right)
$$
 repetitions, one of the attempts will succeed with probability at least $1-\epsilon/3$. Finally, the algorithm swaps the successful register into a fixed output register (or outputs some fixed state if all attempts failed.) Let $C_\epsilon$ be the quantum circuit for executing this entire algorithm, including measurements and conditional operations.

The algorithm $\mathcal P$ will perform $C_\epsilon$ in a coherent (i.e., measurement-free) way, while uncomputing some garbage on the fly. In the end, the remaining garbage is uncomputed. The $i$-th coherent iteration step is done as follows. Initialize a qubit $C_i$ and an n-qubit register $D_i$, both in the all-zero state. Now, controlled on $C_{i-1}$ and using the convention that $C_0=1$, apply $H^n$ to $B$ and coherently measure whether $B\in S$, storing the outcome in $C_i$. Now, controlled on $C_i$, swap $B$ and $D$, and unprepare $\ket S$ in register $D_i$ with precision $\frac{\epsilon}{3r}$ using the algorithm form Lemma \ref{lem:poly-prepare}. After this procedure, $D_i$ is in the zero state and can be safely discarded. After $r$ iterations, the state is $2\epsilon/3$-close to 
\begin{equation}
	\ket{\tilde S}_A\ket{\bar S}_B\otimes \left[\sum_{\ell=0}^r\left(\frac{s}{2^n}\right)^{\ell/2}\left(\frac{2^n-s}{2^n}\right)^{1/2}\ket 1^{\otimes \ell}\otimes \ket 0^{\otimes r-\ell}\right]_{C_1C_2...C_r}.
\end{equation}
But the state the $C$-registers is in is a superposition of $r$ many computational basis states, so we can unprepare it using the algorithm from Lemma \ref{lem:poly-prepare} with precision $\epsilon/3$, so we have prepared $\ket{\tilde S}_A\ket{\bar S}_B$ up to error $\epsilon$.
\end{proof}


\begin{lemma}[Restatement of \expref{Lemma}{lem:super-prepare}]
Let $\ket{\zeta_{0,j}}, \ket{\zeta_{1,j}}$ be two familes of $n$-qubit quantum states such that $\braket{\zeta_{0,j}}{\zeta_{1,j}}=0$ for all $j$, and such that there exists a quantum algorithm $\algo P_b$ which runs in time polynomial in $n$ and $\log(1/\epsilon)$ and satisfies 
$
\|\mathcal P_b \ket{j}\ket{0^n} - \ket{j}\ket{\zeta_{b, j}}\|_2 \leq \epsilon
$
for $b \in \{0,1\}$. 

For $z_0, z_1\in\C$ such that $|z_0|^2+|z_1|^2=1$, let $\tilde z$ denote a classical description of $(z_0,z_1)$ to precision at least $\epsilon$. There exists a quantum algorithm $\algo Q$ which runs in time polynomial in $n$ and $\log(1/\epsilon)$ and satisfies
	\begin{equation}
		\left\|\algo Q \ket{j}\ket{\tilde z}\ket{0^n} - \ket{j}\ket{\tilde z}\bigl(z_0\ket{\zeta_{0,j}} + z_1\ket{\zeta_{1,j}}\bigr)\right\|_2 \leq \epsilon\,.
	\end{equation}
\end{lemma}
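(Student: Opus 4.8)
The plan is to introduce a single ancillary control qubit $C$ that carries the coefficients $z_0, z_1$, use controlled versions of $\mathcal P_0$ and $\mathcal P_1$ to produce an entangled state of $C$ and the output register, and then exploit the orthogonality $\braket{\zeta_{0,j}}{\zeta_{1,j}}=0$ to coherently disentangle $C$, leaving the desired superposition alone in the output register.

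Concretely, first, from the classical description $\tilde z$ I would prepare the single-qubit state $z_0\ket 0 + z_1 \ket 1$ on $C$ to precision $\epsilon$; this is a one-qubit rotation whose angle is computed from $\tilde z$ and synthesized with Solovay--Kitaev, exactly as in the proof of \expref{Lemma}{lem:poly-prepare}. Next, using the controlled versions of the given circuits (adding a control to each gate increases the size only by a constant factor), I apply $\mathcal P_0$ to the output register $B$ controlled on $C=0$ and $\mathcal P_1$ controlled on $C=1$, starting from $\ket j \ket{0^n}_B$. In the ideal case this yields $z_0 \ket 0_C \ket{\zeta_{0,j}}_B + z_1 \ket 1_C \ket{\zeta_{1,j}}_B$ (suppressing $\ket j$). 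Now comes the disentangling step: apply $\mathcal P_1^\dagger$ to $B$ unconditionally. Because $\mathcal P_1^\dagger \ket{\zeta_{1,j}} = \ket{0^n}$ while $\braket{0^n}{\mathcal P_1^\dagger \zeta_{0,j}} = \braket{\zeta_{1,j}}{\zeta_{0,j}} = 0$, the state becomes $z_0 \ket 0_C (\mathcal P_1^\dagger\ket{\zeta_{0,j}})_B + z_1 \ket 1_C \ket{0^n}_B$, and the only component with $B=\ket{0^n}$ sits in the $C=1$ branch. Hence applying $X$ to $C$ controlled on $B=\ket{0^n}$ (a generalized Toffoli with negated controls, of size $O(n)$) sends $C\to\ket 0$ in both branches without touching the $C=0$ branch, producing $\ket 0_C (z_0\, \mathcal P_1^\dagger\ket{\zeta_{0,j}} + z_1 \ket{0^n})_B$. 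Finally reapplying $\mathcal P_1$ to $B$ restores $z_0\ket{\zeta_{0,j}} + z_1\ket{\zeta_{1,j}}$ on $B$ with $C$ disentangled in $\ket 0$, which can be discarded. This is the claimed output.

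For the error analysis I would first verify that the above is exact when all subroutines are exact, then account for imperfection. The construction uses a constant number of calls (two controlled preparations, one $\mathcal P_1^\dagger$, one $\mathcal P_1$, plus the single-qubit load), and crucially each subroutine is only ever invoked on an input of its promised form --- $\ket{0^n}$ for the forward preparations, and states equal to (or negligibly far from) $\ket{\zeta_{1,j}}$ or $\ket{0^n}$ for the disentangling pair --- so the pointwise $2$-norm guarantees propagate by the triangle inequality to a total error that is a constant multiple of the per-call precision. Running every subroutine at precision $\epsilon/c$ for a suitable constant $c$ then gives overall error $\le \epsilon$, while changing the running time only by an additive $O(\log c)=O(1)$ term in $\log(1/\epsilon)$, keeping it polynomial in $n$ and $\log(1/\epsilon)$. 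The main subtlety, the one point that genuinely requires care rather than bookkeeping, is the disentangling step: it is precisely the orthogonality hypothesis that guarantees the $\ket{0^n}$-marker produced by $\mathcal P_1^\dagger$ is perfectly correlated with the $C=1$ branch, which is what makes the uncomputation of the control qubit clean (and, in the $\epsilon=0$ case, exactly error-free).
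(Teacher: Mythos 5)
Your proposal is correct and follows essentially the same route as the paper's proof: prepare $z_0\ket 0+z_1\ket 1$ on an ancilla from $\tilde z$, apply the two preparations controlled on the ancilla, then uncompute one branch with an inverse preparation, flip the ancilla conditioned on the register being $\ket{0^n}$ (which is exactly where the orthogonality hypothesis is used), and re-prepare --- the only cosmetic difference being that you uncompute via $\mathcal P_1$ where the paper uses $\mathcal P_0$, so your ancilla ends in $\ket 0$ rather than $\ket 1$. Your error accounting (constant number of subroutine calls, each run at precision $\epsilon/c$) matches the paper's $\epsilon/5$ bookkeeping.
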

\begin{proof}
We first use \expref{Lemma}{lem:poly-prepare} to implement a unitary $U$ such that $U\ket{\tilde z}\ket 0=\ket{\tilde z}(z_0\ket 0+z_1\ket 1)$ up to error $\epsilon/5$. After attaching an ancillary qubit and applying this circuit, our total state is $\ket{j}\ket{\tilde z}\ket{0^n}_O (z_0\ket 0+z_1\ket 1)_Q$, where we have named some of the registers for easier reference. Note that any efficient quantum circuit has an efficient controlled version. We can thus next prepare (in register $O$) the state $\ket{\zeta_{0,j}}$ controlled on $Q$ being in state $\ket 0$, and subsequently prepare (also in $O$) the state $\ket{\zeta_{1,j}}$ controlled on $Q$ being in state $\ket 1$, both with accuracy $\epsilon/5$. Now we apply the inverse of the circuit for the preparation of $\ket{\zeta_{0,j}}$ to register $O$, without control. Controlled on $O$ being in state $\ket 0^n$, we then apply $X$ to $Q$, after which the preparation circuit for $\ket{\zeta_{0,j}}$ is applied to $O$ again (note that orthogonality of the two state families is crucial for this step.) The register $Q$ is now in the state $\ket 1$ and can be safely discarded.
\end{proof}

\end{document}
